\documentclass[acmsmall,screen,nonacm,dvipsnames]{acmart}

\input{preamble}

\setcopyright{acmlicensed}
\copyrightyear{2027}
\acmYear{2027}
\acmDOI{XXXXXXX.XXXXXXX}

\acmJournal{PACMPL}
\acmVolume{11}     %
\acmNumber{POPL}
\acmArticle{1}
\acmMonth{1}

\begin{document}

\title{\textsc{Rzk}: a Proof Assistant for Synthetic $\infty$-Categories}

\author{Nikolai {Kudasov}}
\email{n.kudasov@innopolis.ru}
\orcid{0000-0001-6572-7292}
\affiliation{%
  \institution{Innopolis University}
  \streetaddress{Universitetskaya, 1}
  \city{Innopolis}
  \state{Tatarstan Republic}
  \country{Russia}
  \postcode{420500}
}

\author{Violetta Sim}
\email{v.sim@innopolis.university}
\orcid{0009-0003-3261-9473}
\affiliation{%
  \institution{Innopolis University}
  \streetaddress{Universitetskaya, 1}
  \city{Innopolis}
  \state{Tatarstan Republic}
  \country{Russia}
  \postcode{420500}
}

\author{Benedikt {Ahrens}}
\email{B.P.Ahrens@tudelft.nl}
\orcid{0000-0002-6786-4538}
\affiliation{%
  \institution{Delft University of Technology}
  \city{Delft}
  \country{The Netherlands}}

\renewcommand{\shortauthors}{Kudasov, Sim, and Ahrens}

\begin{abstract}
  Homotopy type theory (HoTT) is a type theory that allows for synthetic reasoning about $\infty$-groupoids.
Several proof assistants (such as Rocq and Agda) implement variants of HoTT.

\emph{Directed} type theory is a type theory for synthetic reasoning about $\infty$-\emph{categories}, where morphisms (or paths) of dimension 1 are not necessarily invertible.
Among the proposals for directed type theory, the most developed is Riehl and Shulman's \emph{simplicial type theory} (\RSTT), based on simplicial shapes such as directed intervals and triangles.

We present \Rzk{}, a proof assistant implementing (a refinement of) \RSTT for synthetic reasoning about $\infty$-categories.
Specifically, the type theory implemented by \Rzk{} is a computational variant of \RSTT adjusted to make type checking practical.

We define a translation from \RSTT to \Rzk{} and prove that it is sensible: every \RSTT proof translates to an \Rzk{} proof (\emph{faithfulness}), and \Rzk{} proves nothing new about \RSTT types (\emph{conservativity}).
We also give a tutorial introduction to proving in \Rzk{}, and describe its implementation, including the type-checking algorithm and the automated prover for the logic of shapes.

\end{abstract}

\begin{CCSXML}
<ccs2012>
    <concept>
        <concept_id>10003752.10003790.10003794</concept_id>
        <concept_desc>Theory of computation~Automated reasoning</concept_desc>
        <concept_significance>500</concept_significance>
        </concept>
    <concept>
        <concept_id>10003752.10003790.10011740</concept_id>
        <concept_desc>Theory of computation~Type theory</concept_desc>
        <concept_significance>500</concept_significance>
        </concept>
  </ccs2012>
\end{CCSXML}

\ccsdesc[500]{Theory of computation~Automated reasoning}
\ccsdesc[500]{Theory of computation~Type theory}

\keywords{simplicial type theory, proof assistant, type checker}

\maketitle

\section{Introduction}
\label{sec:introduction}

In this paper, we describe the proof assistant \Rzk{}\footnote{Pronounced either letter-by-letter (R-Z-K) or as ``Rezk''. The name comes from the \emph{Rezk type} in \RSTT{} (the synthetic counterpart of an $(\infty,1)$-category), which is in turn named after the Rezk completion, due to Charles Rezk.}.
\Rzk{} implements a simplicial type theory similar to the one developed by \citet{RiehlShulman2017} (\RSTT).
That type theory supports synthetic reasoning about $(\infty,1)$-categories, much as homotopy type theory (HoTT) supports synthetic reasoning about $(\infty,0)$-categories.
We give a precise account of the type theory that \Rzk{} implements, and relate it to \RSTT via a translation that we prove \emph{faithful} and, on a natural fragment of derivations, \emph{conservative}.
At its core, \Rzk{} is roughly \RSTT{} extended by a few declarative refinements, together with the algorithmisation that makes type checking effective:
\[
\begin{array}{@{}r@{\;\;}l@{\qquad}l@{\;\;}l@{\;}l@{\;\;}l@{\;}l@{}}
    & \text{\RSTT{}} & & \rdelim\}{3}{2pt} & \multirow{3}{*}{~\DeclRzk{}} & \rdelim\}{6}{2pt} & \multirow{6}{*}{~\AlgRzk{}}\\[2pt]
  + & \text{split extension types} & \text{(\cref{sec:split-ext})}\\
  + & \text{coercion-free subtyping} & \text{(\cref{sec:subtyping,sec:checking-subtyping})}\\[4pt]
  + & \text{first-class parameters} & \text{(\cref{sec:tope-params})}\\
  + & \text{bidirectional typing} & \text{(\cref{sec:bidirectional,sec:def-eq-topes})}\\
  + & \text{an automated tope solver} & \text{(\cref{sec:automated-tope-logic})}
\end{array}
\]
The first two refinements are declarative and are the subject of \cref{sec:declarative-rzk}; the last three concern the implementation, described in \cref{sec:type-checking}.
We also give a tutorial on proving in \Rzk{}, in \cref{sec:proving}.

\subsection{Background on Type Theories and Their Interpretations}

In this section, we put simplicial type theory in the context of HoTT, to motivate both the type theory itself and a proof assistant for it.

\emph{Dependent type theory} (or ``type theory'' for short), developed by \citet{MLTT_1986} and \citet{CoquandHuet1986}, provides the logical basis for many proof assistants.
A special feature of type theory is the inductively specified \emph{identity type} proposed by Martin-Löf as a suitable notion of equality:
given $a, a' : A$, the type $\Id[A]{a}{a'}$ collects witnesses of ``sameness'' of $a$ and $a'$, and, being itself a type, it can be iterated.
\citet{DBLP:conf/lics/HofmannS94} interpret types as groupoids and thereby show that the principle of ``Uniqueness of Identity Proofs'' is not provable: two ``parallel'' identities $p, q : \Id[A]{a}{a'}$ need not be identical.
HoTT takes this insight further, interpreting types as $\infty$-groupoids: objects, invertible morphisms, morphisms between morphisms, and so on in every dimension.
Voevodsky's \emph{univalence axiom} completes this interpretation: with it, type theory has types of every homotopy level~\cite{DBLP:journals/corr/KrausS13}, and proof assistants can reason synthetically about ``spaces'', that is, $\infty$-groupoids.
Cubical type theories \cite{CCHM2016,VezzosiMortbergAbel2019,DBLP:journals/corr/abs-1807-01869} regain the canonicity property lost by adding univalence as an axiom.

An $\infty$-\emph{category} has \emph{morphisms} of every dimension, of which the morphisms of dimension 1 need not be invertible, while the ``higher'' morphisms, such as morphisms between morphisms, are.
Explicitly, $\infty$-categories are also called $(\infty,1)$-categories, and $\infty$-groupoids are the $(\infty,0)$-categories.
Such $\infty$-categories arise in the study of homotopy theory, but also in the syntax of type theory:
a syntactic universe should carry the structure of an $\infty$-category whose objects are types, whose 1-morphisms are functions, and whose higher morphisms are terms of identity type.
For these reasons, soon after the development of HoTT, a quest began to reason synthetically not just about $\infty$-groupoids, but also about $\infty$-categories.
In such a type theory, a type $A$ representing an $\infty$-category comes, in particular, with a type $\hom_A(x,y)$ of morphisms from $x$ to $y$ for any $x,y:A$.

\citet{RiehlShulman2017} develop such a type theory: a \emph{type theory with shapes} whose special case, \emph{simplicial type theory} (\RSTT), supports synthetic reasoning about $(\infty,1)$-categories.
In \RSTT, the types that can be interpreted as $(\infty,1)$-categories are the \emph{Rezk types}: types whose morphisms compose (\emph{Segal} types) and whose isomorphisms coincide with identities.
A distinguishing feature of \RSTT is the \emph{extension type}, a type of functions that behave in a specified way on \emph{part} of the domain, up to definitional equality.
In particular, extension types provide the hom-types: in \RSTT, $\hom_A(x,y)$ is the type of functions from the directed interval $\mathbbm{2}$ that return $x$ at $0$ and $y$ at $1$, definitionally.
Because such boundary conditions hold definitionally, they carry no propositional bookkeeping in proofs.
We review \RSTT in \cref{sec:review-rstt}, and the tutorial of \cref{sec:proving} builds towards Rezk types.

\subsection{This Paper: Theory and Practice of \Rzk{}}

In this paper, we present a proof assistant called \Rzk{}, which implements a variant of \RSTT.

\Rzk{} has seen real use. Its ``standard library'', \sHoTT{}~\cite{shott}
(\cref{sec:validation-shott}), comprises over 25\,000 lines and nearly 1\,500 top-level
declarations of synthetic $(\infty,1)$-category theory, developing Segal and Rezk types,
covariant families, adjunctions, cocartesian fibrations, and (co)limits. \Rzk{} enabled the first formalisation
of the $\infty$-categorical Yoneda lemma, by Kudasov, Riehl, and
Weinberger~\cite{KudasovRiehlWeinberger2024}. The master's thesis of
Lossin~\cite{Lossin2025} develops cocartesian fibrations in synthetic
$(\infty,1)$-category theory, formalising its results in \Rzk{} and contributing
parts of them to \sHoTT{}. \Rzk{} and \sHoTT{} were taught at a summer school
in Regensburg in 2023,\anonurlnote{https://itp-school-2023.github.io} and the upcoming ICERM workshop
``Teaching Higher Category Theory with Computers'' (August 2026)\anonurlnote{https://icerm.brown.edu/program/topical_workshop/tw-26-thc}
builds its practical sessions on \Rzk{}. Two ongoing developments build on \Rzk{}: a proof-game engine
in the style of the \Lean{}~4 games~\cite{lean4game}, whose first game is the work-in-progress
\emph{Yoneda game},\anonurlnote{https://rzk-lang.github.io/yoneda-game/} and modal extensions
of \Rzk{} towards directed univalence~\cite{TalipovKudasov2026}.

We present \Rzk{} from two perspectives.
From the \emph{practical} perspective, we give a short tutorial on using \Rzk{} in \cref{sec:proving}, and report on larger-scale type-checking performance of \Rzk{} in \cref{sec:validation}.
From the \emph{theoretical} perspective, we present \emph{\declRzk} (\dRzk for short) in \cref{sec:declarative-rzk}, and compare it to \RSTT in \cref{sec:translation-conservativity} by defining a translation from \RSTT to \dRzk and proving it faithful and, on a natural fragment of derivations, conservative.
In \cref{sec:type-checking} we present \emph{\algRzk} (\aRzk for short), the ``computational'' variant of \dRzk{}, featuring a bidirectional type-checking algorithm and an automated tope solver. It is the basis of the implementation of \Rzk.

One technical obstacle to a precise comparison is that Riehl and Shulman state their theorems schematically, over an abstract context of cubes, topes, and tope inclusions. \Rzk{} internalises these data as first-class parameters of a definition (\cref{sec:tope-params}).
We therefore make the meta-level assumptions formal as a \emph{meta-theoretic parameter layer} (\MTPL) surrounding \RSTT{} (\cref{sec:mpl}). Statements of \dRzk{} are schematic over the same layer, and our translation keeps it in place.

Stating the algorithm precisely paid off in practice. Auditing the implementation against the rules of \cref{app:rules} uncovered two soundness bugs in released versions of \Rzk{}, through v0.7.7: the checker equated identity types without comparing their underlying types, and the asymmetric subtype checks ran in a fixed direction, hence backwards in negative positions, such as the domain of a $\Pi$-type. Both bugs are fixed in v0.7.8, the version reported in this paper; see \cref{rem:unify-id-bug} for details, including the migration the first fix required in the \sHoTT{} library.\anonfootnote{For the development version of \Rzk{}, the fixes are pull requests \url{https://github.com/rzk-lang/rzk/pull/269} and \url{https://github.com/rzk-lang/rzk/pull/270}.}

\section{Proving in \Rzk{}}
\label{sec:proving}

In this section, we walk through a sequence of \Rzk{} definitions\footnote{Throughout the paper,
typewriter font on a light background (e.g.\
\code{hom A x y}) marks literal \Rzk{} source, accepted by the proof assistant verbatim.
Math-mode notation is reserved for the formal treatment of \RSTT{} and \Rzk{} in
\crefrange{sec:review-rstt-mtpl}{sec:translation-conservativity}.}, building up to a Rezk type,
the synthetic $(\infty,1)$-category (\cref{ex:rezk-types}).
The full code of every example is in \suppmat{}.
Most examples have counterparts in \sHoTT{}~\cite{shott} (a community library of synthetic
$(\infty,1)$-category theory written in \Rzk{}, evaluated in \cref{sec:validation-shott}),
which we point to as we go. After each example, we note which of \Rzk{}'s
simplicial features (if any) are invoked behind the scenes. Further examples are collected in
\cref{app:tutorial-more}.\ifsplitappendix\footnote{All appendices of this paper are provided as
supplementary material.}\fi

\subsection{The Martin-L\"of Fragment}
\label{ex:identity-swap}

One layer of \Rzk{} is Martin-L\"of type theory, in standard notation:
\code{(x : A) → B} is the type of dependent functions from \code{A} to \code{B}, with
\code{A → B} for the non-dependent case and \code{U} for the universe\footnote{The universe
``types'' are a notational convenience both in \Rzk{} and \RSTT{}~\cite{RiehlShulman2017} and
should not be considered proper types.};
\code{Σ (x : A) , B} is the type of dependent pairs \code{(a, b)}; and
\code{a = b} is the identity type, inhabited by \code{refl} whenever the two sides are
definitionally equal, and eliminated by path induction (\code{idJ}).
\label{ex:dependent-types}%
\label{ex:identity-types}%
Warm-up examples in this fragment are collected in \cref{app:tutorial-more}: the polymorphic
\code{identity} and \code{swap} functions, the projections out of a $\Sigma$-type, and a proof by
\code{refl} that \code{swap} is its own inverse, which typechecks because \Rzk{} $\eta$-expands
on demand (\cref{sec:bidirectional}). None of these examples involves a simplicial feature of
\Rzk{}.

\subsection{Cubes, Topes, and Extension Types}
\label{ex:cubes-topes}

The simplicial layer of \Rzk{} adds two ingredients below the types (\cref{sec:review-rstt}).
A \emph{cube} is an abstract space of points. The basic cube \code{2} is the \emph{directed
interval}, carrying a total order \code{≤} with distinct least and greatest points \code{0₂} and
\code{1₂}. Cubes can be multiplied: \code{I × J} is the cube of pairs of points, so
\code{2 × 2} is the directed square. A \emph{tope} is a logical formula constraining the points of a cube, built from
equations \code{s ≡ t}, inequalities \code{s ≤ t}, the connectives \code{∧} and \code{∨}, and
the constants \code{TOP} (true) and \code{BOT} (false). A cube together with a tope carves out a
\emph{shape}. \Cref{fig:shapes} shows the four shapes that recur throughout the paper and their
\Rzk{} definitions: the 1-simplex \code{Δ¹} is the whole interval (the always-true tope
\code{TOP}); the 2-simplex \code{Δ²} is the triangle cut out of the square \code{2 × 2} by
\code{s ≤ t}; the inner horn \code{Λ²₁} keeps only the two edges of \code{Δ²} meeting at
vertex~$1$; and the 3-simplex \code{Δ³} is the tetrahedron cut out of the 3-cube by
\code{(r ≤ s) ∧ (s ≤ t)}.

\begin{figure}
\centering
\begin{subfigure}[t]{0.24\textwidth}
\centering
\begin{minipage}[c][2.1cm][c]{\linewidth}\centering
\begin{tikzpicture}[scale=1.0,>={Stealth[length=4pt]}]
  \coordinate (A) at (0,0); \coordinate (B) at (1.6,0);
  \draw[->,thick,shorten <=3pt,shorten >=3pt] (A) -- (B);
  \fill (A) circle (1.6pt); \node[below=2pt] at (A) {\small $0$};
  \fill (B) circle (1.6pt); \node[below=2pt] at (B) {\small $1$};
\end{tikzpicture}
\end{minipage}
\par\vspace{8pt}
\begin{minipage}[t][1.5cm][t]{\linewidth}
\rzkinput[fontsize=\scriptsize]{src/shape-d1.rzk}
\end{minipage}
\caption{$\Delta^1$}
\label{fig:shapes-d1}
\end{subfigure}\hfill
\begin{subfigure}[t]{0.24\textwidth}
\centering
\begin{minipage}[c][2.1cm][c]{\linewidth}\centering
\begin{tikzpicture}[scale=1.0,>={Stealth[length=4pt]}]
  \coordinate (A) at (0,0); \coordinate (B) at (1.6,0); \coordinate (C) at (1.6,1.6);
  \filldraw[fill=gray!18,draw=none] (A) -- (B) -- (C) -- cycle;
  \draw[->,thick,shorten <=3pt,shorten >=3pt] (A) -- (B);
  \draw[->,thick,shorten <=3pt,shorten >=3pt] (B) -- (C);
  \draw[->,thick,shorten <=3pt,shorten >=3pt] (A) -- (C);
  \fill (A) circle (1.6pt);
  \fill (B) circle (1.6pt);
  \fill (C) circle (1.6pt);
\end{tikzpicture}
\end{minipage}
\par\vspace{8pt}
\begin{minipage}[t][1.5cm][t]{\linewidth}
\rzkinput[fontsize=\scriptsize]{src/shape-d2.rzk}
\end{minipage}
\caption{$\Delta^2$}
\label{fig:shapes-d2}
\end{subfigure}\hfill
\begin{subfigure}[t]{0.24\textwidth}
\centering
\begin{minipage}[c][2.1cm][c]{\linewidth}\centering
\begin{tikzpicture}[scale=1.0,>={Stealth[length=4pt]}]
  \coordinate (A) at (0,0); \coordinate (B) at (1.6,0); \coordinate (C) at (1.6,1.6);
  \draw[->,thick,shorten <=3pt,shorten >=3pt] (A) -- (B);
  \draw[->,thick,shorten <=3pt,shorten >=3pt] (B) -- (C);
  \fill (A) circle (1.6pt);
  \fill (B) circle (1.6pt);
  \fill (C) circle (1.6pt);
\end{tikzpicture}
\end{minipage}
\par\vspace{8pt}
\begin{minipage}[t][1.5cm][t]{\linewidth}
\rzkinput[fontsize=\scriptsize]{src/shape-l21.rzk}
\end{minipage}
\caption{$\Lambda^2_1$}
\label{fig:shapes-l21}
\end{subfigure}\hfill
\begin{subfigure}[t]{0.24\textwidth}
\centering
\begin{minipage}[c][2.1cm][c]{\linewidth}\centering
\begin{tikzpicture}[scale=0.8,>={Stealth[length=4pt]}]
  \coordinate (V0) at (0,0);       %
  \coordinate (V1) at (1.6,0);     %
  \coordinate (V2) at (2.4,0.8);   %
  \coordinate (V3) at (2.4,2.4);   %
  \filldraw[fill=gray!12,draw=none,opacity=0.7] (V0) -- (V1) -- (V3) -- cycle;
  \filldraw[fill=gray!12,draw=none,opacity=0.7] (V1) -- (V2) -- (V3) -- cycle;
  \filldraw[fill=gray!20,draw=none,opacity=0.7] (V0) -- (V2) -- (V3) -- cycle;
  \draw[->,thick,dashed,shorten <=3pt,shorten >=3pt] (V0) -- (V2);
  \draw[->,thick,shorten <=3pt,shorten >=3pt] (V0) -- (V1);
  \draw[->,thick,shorten <=3pt,shorten >=3pt] (V1) -- (V2);
  \draw[->,thick,shorten <=3pt,shorten >=3pt] (V0) -- (V3);
  \draw[->,thick,shorten <=3pt,shorten >=3pt] (V1) -- (V3);
  \draw[->,thick,shorten <=3pt,shorten >=3pt] (V2) -- (V3);
  \fill (V0) circle (1.6pt);
  \fill (V1) circle (1.6pt);
  \fill (V2) circle (1.6pt);
  \fill (V3) circle (1.6pt);
\end{tikzpicture}
\end{minipage}
\par\vspace{8pt}
\begin{minipage}[t][1.5cm][t]{\linewidth}
\rzkinput[fontsize=\scriptsize]{src/shape-d3.rzk}
\end{minipage}
\caption{$\Delta^3$}
\label{fig:shapes-d3}
\end{subfigure}
\caption{Four basic shapes of synthetic $(\infty,1)$-category theory and their \Rzk{}
definitions.}
\label{fig:shapes}
\end{figure}

Functions may take a shape as their domain, and the \emph{extension type} prescribes such a
function on part of its domain. Given a shape \code{ψ}, a
subshape \code{φ} (a tope entailing \code{ψ}), and a section \code{a} defined over \code{φ}, the
extension type, written \code{(t : ψ) → A [φ ↦ a]} in \Rzk{}, is the type of functions \code{f}
from \code{ψ} to \code{A} that agree with \code{a} on \code{φ} \emph{definitionally}: \code{f t}
computes to \code{a t} whenever \code{φ t} holds, with no proof obligation and no explicit
coercion. The rest of this section shows extension types at work.

\subsection{Hom-Types and Directed Paths}
\label{ex:hom-types}

Mapping shapes into a type lets us view some types as synthetic $(\infty,1)$-categories.
A \code{hom}-type is the type of morphisms between two elements \code{x} and \code{y} of a type
\code{A}: a mapping from the 1-simplex to \code{A} with definitionally fixed endpoints
(\code{0₂} maps to \code{x}, \code{1₂} maps to \code{y}). The same pattern, one dimension up,
gives a \code{hom2}-type: a mapping from the 2-simplex to \code{A} whose three boundary edges
are pinned to three given \code{hom}s \code{f}, \code{g}, \code{h}. An inhabitant is thus a
filled triangle witnessing that the composite of \code{f} and \code{g} agrees with \code{h}.
\Cref{fig:hom} shows both, together with their \Rzk{} definitions.

\begin{figure}
\centering
\begin{subfigure}[t]{0.48\textwidth}
\centering
\begin{minipage}[c][2.1cm][c]{\linewidth}\centering
\begin{tikzpicture}[scale=1.0,>={Stealth[length=4pt]}]
  \coordinate (A) at (0,0); \coordinate (B) at (1.6,0);
  \draw[->,thick,shorten <=3pt,shorten >=3pt] (A) -- (B);
  \fill (A) circle (1.6pt); \node[below=2pt] at (A) {\small $x$};
  \fill (B) circle (1.6pt); \node[below=2pt] at (B) {\small $y$};
\end{tikzpicture}
\end{minipage}
\par\vspace{8pt}
\begin{minipage}[t][3.8cm][t]{\linewidth}
\rzkinput[fontsize=\footnotesize]{src/hom.rzk}
\end{minipage}
\caption{\code{hom A x y} type contains morphisms from \code{x} to \code{y}.}
\label{fig:hom-1}
\end{subfigure}\hfill
\begin{subfigure}[t]{0.48\textwidth}
\centering
\begin{minipage}[c][2.1cm][c]{\linewidth}\centering
\begin{tikzpicture}[scale=1.0,>={Stealth[length=4pt]}]
  \coordinate (A) at (0,0); \coordinate (B) at (1.6,0); \coordinate (C) at (1.6,1.6);
  \filldraw[fill=gray!18,draw=none] (A) -- (B) -- (C) -- cycle;
  \draw[->,thick,shorten <=3pt,shorten >=3pt] (A) -- (B);
  \draw[->,thick,shorten <=3pt,shorten >=3pt] (B) -- (C);
  \draw[->,thick,shorten <=3pt,shorten >=3pt] (A) -- (C);
  \fill (A) circle (1.6pt); \node[below left=-1pt] at (A) {\small $x$};
  \fill (B) circle (1.6pt); \node[below right=-1pt] at (B) {\small $y$};
  \fill (C) circle (1.6pt); \node[above right=-1pt] at (C) {\small $z$};
  \node[below=2pt] at (0.8,0) {\scriptsize $f$};
  \node[right=2pt] at (1.6,0.8) {\scriptsize $g$};
  \node[above left=-2pt] at (0.8,0.8) {\scriptsize $h$};
\end{tikzpicture}
\end{minipage}
\par\vspace{8pt}
\begin{minipage}[t][3.8cm][t]{\linewidth}
\rzkinput[fontsize=\footnotesize]{src/hom2.rzk}
\end{minipage}
\caption{\code{hom2 A x y z f g h} type is a 2-cell filling the triangle \code{f}-\code{g}-\code{h}.
This is essentially a type of witnesses that $h$ is a composition of $f$ and $g$.}
\label{fig:hom-2}
\end{subfigure}
\caption{The \code{hom} and \code{hom2} types in \Rzk{}: morphisms and 2-cells in a type \code{A},
defined as functions from \code{Δ¹} and \code{Δ²} into \code{A} with their boundaries
definitionally pinned to given endpoints and edges respectively.}
\label{fig:hom}
\end{figure}

\noindent Here the simplicial layer of \Rzk{} is used for the first time. The bracketed restriction
\code{[t ≡ 0₂ ↦ x, t ≡ 1₂ ↦ y]} on \code{hom} is the type restriction of \cref{sec:split-ext}.
When a term of type \code{hom A x y} is applied at the endpoint \code{t ≡ 0₂}, \Rzk{} rewrites
the application to \code{x}, and at \code{t ≡ 1₂} to \code{y} (\cref{sec:bidirectional}). The
implied tope side conditions (e.g.\ that the boundary \code{t ≡ 0₂ ∨ t ≡ 1₂} is included in
the shape \code{Δ¹}) are checked automatically by the tope solver of
\cref{sec:automated-tope-logic}. The same two mechanisms make the three boundaries of
\code{hom2} compute: applying a 2-cell on the edge \code{s ≡ 0₂} yields \code{f t}, on the edge
\code{t ≡ 1₂} yields \code{g s}, and on the diagonal \code{s ≡ t} yields \code{h t}. Where two
edges meet, the branches must agree; e.g., at the corner \code{s ≡ 0₂ ∧ t ≡ 1₂} both
\code{f t} and \code{g s} compute to \code{y}.

\subsection{Cofibration Composition}
\label{ex:cofibration-composition}

\begin{figure}
\rzkinput[linenos,firstnumber=1]{src/cofibration-composition.rzk}
\caption{Cofibration composition in \Rzk{}.}
\label{fig:cofibration-composition}
\end{figure}

The examples so far used shapes to carve simplicial structure out of a type \code{A}. \Rzk{} can
also prove theorems about the shape machinery itself, schematic in a cube, topes over it, and a
type family, with no simplicial reading attached. For example, a theorem of
\citet[Thm.~4.4]{RiehlShulman2017}, the
\emph{composition law for cofibrations}, states that an extension type from a tope \code{ϕ} to a
larger \code{χ} factors through any intermediate \code{ψ} with
$\phi \vdash \psi \vdash \chi$\footnote{Formally, $\phi \vdash \psi \vdash \chi$ stands for the two tope entailments
$\pt{t} : \cube{I} \mid \tp{\phi}(\pt{t}) \vdash \tp{\psi}(\pt{t})$ and
$\pt{t} : \cube{I} \mid \tp{\psi}(\pt{t}) \vdash \tp{\chi}(\pt{t})$ in the sense of
\cref{sec:review-rstt}; the entailments $\phi(t) \vdash \psi(t)$ and $\psi(t) \vdash \chi(t)$ below are read
the same way.}: extending from \code{ϕ} to \code{χ} is
equivalent to extending from \code{ϕ} to \code{ψ} and then from \code{ψ} to \code{χ}. The \Rzk{}
statement and proof are in \cref{fig:cofibration-composition}. Both maps re-package the same
underlying section, and both round-trips are witnessed by \code{refl}.

The forward map sends a section \code{h} to the pair \code{(\ t → h t , \ t → h t)}: the same
\code{h}, used once as an extension over the smaller shape \code{ψ} and once as a further
extension from \code{ψ} to \code{χ}. The witnesses are written $\eta$-expanded, as Riehl and
Shulman write them in their proof: the second component uses \code{h} at a type whose boundary
\code{[ψ t ↦ f t]} pins the values of \code{h} itself, and this use is correct for \code{h} but
not for an arbitrary section of its type, so no subtyping between the two types can allow it
(\cref{rem:eta-contraction}).

The inverse map is the plain projection \code{\ (_ , g) → g}. Here \Rzk{}'s coercion-free
\emph{subtyping} (\cref{sec:subtyping}) accepts \code{g}, a section with boundary
\code{[ψ t ↦ f t]}, where one with boundary \code{[ϕ t ↦ a t]} is expected, with no coercion
inserted. Unlike in the forward direction, this passage is correct for every section of
\code{g}'s type: the expected boundary sits on the smaller tope (\code{ϕ} entails \code{ψ}), and
wherever \code{ϕ} holds the two boundaries agree definitionally, since there \code{g t} computes
to \code{f t} and \code{f t} to \code{a t}.

\noindent Here, the round-trip witnesses \code{refl} typecheck because each composite reduces to
the identity: applying a section pinned on a tope at a point where that tope holds computes to
the pin (the restriction computation of \cref{sec:bidirectional}), and the tope solver proves
the entailments $\phi(t) \vdash \psi(t)$ and $\psi(t) \vdash \chi(t)$ from the parameter
structure (\cref{sec:automated-tope-logic}). Neither the subtyping step nor these computations
appear in the proof term.

\subsection{$\infty$-Categories}
\label{ex:rezk-types}

A Segal type is a synthetic \emph{pre-$\infty$-category}: every composable pair
of its morphisms has a unique composite. Promoting it to a synthetic $(\infty,1)$-category, that
is, a \emph{Rezk type}, requires the additional \emph{completeness} condition: every
type-theoretic identity \code{x = y} arises from a unique isomorphism in the Segal structure on
\code{A}. Segal-ness matters here, since only it makes composition, and hence invertibility of
a morphism, well-defined. In \sHoTT{}, module \texttt{simplicial-hott/10-rezk-types}, the
definition reads:

\rzkinput{src/is-rezk.rzk}

The $\Sigma$ unpacks as follows: \code{A} is Segal, and for every pair $(x, y)$ of points the canonical
map from \code{x = y} to \code{Iso A is-segal-A x y} (which sends \code{refl} to the identity
isomorphism by path induction) is an equivalence. Up to equivalence, identities in \code{A} are exactly the
isomorphisms.

Both Segal-ness and Rezk-ness lift along the function-type
formers~\cite[Cor.~5.6, Prop.~10.9]{RiehlShulman2017}; the corresponding \sHoTT{} closure lemmas
are shown in \cref{ex:segal-function-types}.

\noindent Here, the entire Rezk predicate is built from extension types and $\Sigma$ over the
type and tope layers. The type-directed restriction computation of \cref{sec:bidirectional} applies
whenever an \code{is-rezk}-witness is unpacked at an endpoint or used to interpret an
isomorphism, and the tope solver checks the simplex- and horn-inclusion side conditions
arising in the underlying Segal component (\cref{sec:automated-tope-logic}).
Thus, \Rzk{} can express, and mechanically check, the synthetic $(\infty,1)$-categorical condition itself.
The \sHoTT{} library~\cite{shott} contains a growing body of synthetic $\infty$-category
theory built on this foundation, including the first formalisation of the $\infty$-categorical
Yoneda lemma~\cite{KudasovRiehlWeinberger2024}.

\section{\RSTT Review and the Meta-Theoretic Parameter Layer}
\label{sec:review-rstt-mtpl}

In this section, we review \RSTT (\cref{sec:review-rstt}) and then formalise, as a meta-theoretic parameter
layer (\MTPL, \cref{sec:mpl}), the meta-level assumptions under which theorems are stated by \citet{RiehlShulman2017}.
We denote by \RSTTMTPL the formal system consisting of \MTPL and \RSTT.
In contrast to the literal \Rzk{} code of \cref{sec:proving},
\crefrange{sec:review-rstt-mtpl}{sec:translation-conservativity} use math-mode
notation, with the \cube{cube}, \tp{tope}, and \ty{type} layers distinguished by colour.

\subsection{Review of \RSTT{}}
\label{sec:review-rstt}

\RSTT{}~\cite{RiehlShulman2017} is a three-layered
type theory: the first two layers are a \emph{coherent} logic of cubes and topes (formulas built
from atoms by $\tp{\wedge}$, $\tp{\vee}$, $\tp{\top}$, and $\tp{\bot}$, with no implication or
negation), and the third layer is
Martin-L\"of type theory \emph{over} the first two~\cite[\S2]{RiehlShulman2017}. Accordingly a
judgement carries up to three telescopic contexts,
\[
  \Xi \mid \Phi \mid \Gamma \vdash J,
\]
a \emph{cube context} $\Xi$, a \emph{tope context} $\Phi$, and an ordinary \emph{type context}
$\Gamma$. For example, the judgement
\[
  \pt{t} : \cube{\mathbbm{2}}
  \;\mid\; \pt{t} \tp{\equiv} \pt{0}
  \;\mid\; x : \ty{A},\ y : \ty{A},\ f : \hom_{\ty{A}}(x, y)
  \;\vdash\; f(\pt{t}) \equiv x : \ty{A}
\]
binds a point of the directed interval in the cube context, assumes the tope
$\pt{t} \tp{\equiv} \pt{0}$ in the tope context, and binds three typed variables in the type
context; its
conclusion is the definitional equality that \cref{ex:hom-types} computed at an endpoint, with
the hom-type defined in \cref{ex:rstt-hom} below. Judgements of the lower layers use a prefix of
the three contexts: a tope entailment, for instance, has the form
$\cube{\Xi} \mid \tp{\Phi} \vdash \tp{\psi}$. We recall the layers and their judgement forms; for the inference rules we refer
to the original paper~\cite{RiehlShulman2017}.

\paragraph{Cubes.}
The cube layer is a simply-typed theory whose types, the \emph{cubes}, comprise a terminal
cube $\cube{\mathbbm{1}}$ (with unique point $\pt{\star}$), the \emph{directed interval}
$\cube{\mathbbm{2}}$ (with two distinguished points $\pt{0}, \pt{1} : \cube{\mathbbm{2}}$), and
binary products $\cube{I \times J}$. A cube context $\cube{\Xi}$ binds
\emph{points} $\pt{t} : \cube{I}$, and the judgement $\cube{\Xi} \vdash \pt{t} : \cube{I}$ asserts
that $\pt{t}$ is a point of the cube $\cube{I}$ in context $\cube{\Xi}$. Note that in \RSTT{} a context cannot
posit an abstract cube: abstraction over cubes (and over topes) happens only at the meta level.

\paragraph{Topes.}
Over the cube layer sits an intuitionistic propositional logic of \emph{topes}. A \emph{tope} is a
formula built from
$\tp{\top}$, $\tp{\bot}$, conjunction $\tp{\wedge}$, and disjunction $\tp{\vee}$, together with two
kinds of atomic formulas: \emph{equalities} $\pt{s} \tp{\equiv} \pt{t}$ of points
$\pt{s}, \pt{t} : \cube{I}$ of a cube, and \emph{inequalities} $\pt{s} \tp{\leq} \pt{t}$ of points
$\pt{s}, \pt{t} : \cube{\mathbbm{2}}$. A tope context $\tp{\Phi}$ in cube context $\cube{\Xi}$ is a
finite list of topes built from the points bound in $\cube{\Xi}$. The layer has two judgement
forms: $\cube{\Xi} \vdash \tp{\phi} \;\tp{\mathsf{tope}}$ asserts that $\tp{\phi}$ is a
well-formed tope over $\cube{\Xi}$, and the entailment
$\cube{\Xi} \mid \tp{\Phi} \vdash \tp{\psi}$ asserts that $\tp{\psi}$ is entailed by $\tp{\Phi}$.
The interval $\cube{\mathbbm{2}}$ is \emph{totally ordered}: any two points are comparable
($\vdash \pt{s} \tp{\leq} \pt{t} \tp{\vee} \pt{t} \tp{\leq} \pt{s}$), $\pt{0}$ is least and
$\pt{1}$ greatest ($\pt{0} \tp{\leq} \pt{t}$ and $\pt{t} \tp{\leq} \pt{1}$ for every
$\pt{t} : \cube{\mathbbm{2}}$), and the endpoints are distinct
($\pt{0} \tp{\equiv} \pt{1} \vdash \tp{\bot}$)~\cite[\S3.1]{RiehlShulman2017}.
Tope entailments are treated as side conditions whose derivations are left implicit.

\paragraph{Shapes and types.}
A \emph{shape} is a pair $\{\pt{t} : \cube{I} \mid \tp{\phi}\}$ of a cube $\cube{I}$ and a tope
$\tp{\phi}$ in context $\pt{t} : \cube{I}$, denoting the points of $\cube{I}$ satisfying
$\tp{\phi}$. The type layer is Martin-L\"of type theory ($\ty{\Pi}$ over types, $\ty{\Sigma}$,
identity types, the unit type, $\ldots$) over $\cube{\Xi} \mid \tp{\Phi}$, with the usual judgements
$\cube{\Xi} \mid \tp{\Phi} \mid \ty{\Gamma} \vdash \ty{A} \;\ty{\mathsf{type}}$,
$\cube{\Xi} \mid \tp{\Phi} \mid \ty{\Gamma} \vdash a : \ty{A}$, and definitional equality
$\cube{\Xi} \mid \tp{\Phi} \mid \ty{\Gamma} \vdash a \equiv b : \ty{A}$.
Note that a shape is \emph{not} itself a type of this layer: functions \emph{out of} a shape are
provided by the extension types below, and there is no type \emph{of} shapes.

\paragraph{Extension types.}
Functions whose domain is a shape are provided by the \emph{extension type}, the distinguishing
feature of \RSTT{}. Suppose we are given a shape $\{\pt{t} : \cube{I} \mid \tp{\psi}\}$, a subshape
$\{\pt{t} : \cube{I} \mid \tp{\phi}\}$ (so that $\pt{t} : \cube{I} \mid \tp{\phi} \vdash \tp{\psi}$),
a family $\ty{A}$ over $\{\pt{t} : \cube{I} \mid \tp{\psi}\}$, and a term $a : \ty{A}$ defined
over the subshape, that is, $\pt{t} : \cube{I} \mid \tp{\phi} \mid \ty{\Gamma} \vdash a : \ty{A}$.
The extension type
\[
  \exttype{\pt{t} : \cube{I} \mid \tp{\psi}}{\ty{A}}{\tp{\phi}}{a}
\]
is the type of functions $f$ defined over the whole shape that agree
with $a$ on the subshape \emph{definitionally}: $f(\pt{t}) \equiv a(\pt{t})$ whenever
$\tp{\phi}(\pt{t})$ holds\footnote{That is, $f(\pt{s}) \equiv a(\pt{s})$ in any context with
$\cube{\Xi} \vdash \pt{s} : \cube{I}$ and $\cube{\Xi} \mid \tp{\Phi} \vdash \tp{\phi}(\pt{s})$.}.
Imposing boundaries by
definitional equality in this way greatly reduces bookkeeping in proofs.
\Cref{fig:rstt-ext-types} (\cref{app:admissibility}) collects the typing rules for
extension types (formation, introduction, application, $\beta$, boundary computation, and
$\eta$).

\begin{example}[Hom-types]
\label{ex:rstt-hom}
The prototypical extension type is the type of morphisms of $\ty{A}$ from $x$ to $y$ (cf.\ the
\Rzk{} definition of \code{hom} in \cref{ex:hom-types}),
\[
  \hom_{\ty{A}}(x, y) \;:=\; \exttype{\pt{t} : \cube{\mathbbm{2}}}{\ty{A}}{\pt{t} \tp{\equiv} \pt{0} \tp{\vee} \pt{t} \tp{\equiv} \pt{1}}{\mathsf{rec}_{\tp{\vee}}(\pt{t} \tp{\equiv} \pt{0} \mapsto x, \pt{t} \tp{\equiv} \pt{1} \mapsto y)},
\]
where the boundary section, assembled by tope disjunction elimination $\mathsf{rec}_{\tp{\vee}}$, takes
the value $x$ at $\pt{t} \tp{\equiv} \pt{0}$ and $y$ at $\pt{t} \tp{\equiv} \pt{1}$.
\end{example}

Hom-types are the main ingredient in the definition of \emph{Rezk types}, the types that
represent synthetic $(\infty,1)$-categories: one reasons about a type $\ty{A}$ through its
hom-types, whose terms are the morphisms between its elements. \citet{RiehlShulman2017} isolate
the Rezk types in two stages: the \emph{Segal types}, in which composable morphisms have unique
composites, and the Rezk types themselves, Segal types whose isomorphisms moreover coincide with
their identities~\citep{AhrensKapulkinShulman2015}. The tutorial defines them in
\cref{ex:rezk-types}.

\begin{remark}[Notational conventions for extension types]
\label{rem:ext-conventions}
Riehl and Shulman state two notational conventions for extension types
\cite[\S2.2]{RiehlShulman2017}. First, an extension type whose boundary tope is $\tp{\bot}$
``behaves just like an ordinary (possibly dependent) function type'', so the angle brackets are
omitted: one writes $\ty{\prod}_{\pt{t} : \cube{I} \mid \tp{\psi}} \ty{A}$ for
$\exttype{\pt{t} : \cube{I} \mid \tp{\psi}}{\ty{A}}{\tp{\bot}}{\mathsf{rec}_{\tp{\bot}}}$.
Second, officially the codomain $\ty{A}$ and the boundary $a$ are a type and a term in the
extended context $\pt{t} : \cube{I} \mid \tp{\psi} \mid \ty{\Gamma}$; the convention writes them
as if they were functions applied to the shape variable, $\ty{A}(\pt{t})$ and $a(\pt{t})$, as in
$\exttype{\pt{t} : \cube{I} \mid \tp{\psi}}{\ty{A}(\pt{t})}{\tp{\phi}}{a(\pt{t})}$.
Both conventions return in \cref{sec:declarative-rzk}: the first becomes literal in the split
extension types of \cref{sec:split-ext}, and the second in the tope families and first-class
parameters of \cref{sec:tope-params}, which turn the informal application notation into actual
functions.
\end{remark}

\subsection{Meta-Theoretic Parameter Layer for \RSTT{}}
\label{sec:mpl}

Statements in \RSTT may rest on assumptions that live in the meta-theory rather than inside \RSTT.
This concerns not just assumptions such as ``let $\ty{A}$ be a type'', but also entailments between topes, such as a shape inclusion $\pt{t} : \cube{I} \mid \tp{\phi} \vdash \tp{\psi}$ attached to a theorem as a side condition.
As a concrete example, the composition law for cofibrations \cite[Thm.~4.4]{RiehlShulman2017} opens with ``suppose
$\pt{t} : \cube{I} \mid \tp{\phi} \vdash \tp{\psi}$ and
$\pt{t} : \cube{I} \mid \tp{\psi} \vdash \tp{\chi}$, and that
$\ty{X} : \{\pt{t} : \cube{I} \mid \tp{\chi}\} \to \ty{\mathcal{U}}$'', and then asserts,
for any $a : \ty{\prod}_{\pt{t} : \cube{I} \mid \tp{\phi}} \ty{X}(\pt{t})$, an equivalence
\[
  \exttype{\pt{t} : \cube{I} \mid \tp{\chi}}{\ty{X}(\pt{t})}{\tp{\phi}}{a}
  \;\simeq\;
  \ty{\sum}_{f : \exttype{\pt{t} : \cube{I} \mid \tp{\psi}}{\ty{X}(\pt{t})}{\tp{\phi}}{a}}
  \exttype{\pt{t} : \cube{I} \mid \tp{\chi}}{\ty{X}(\pt{t})}{\tp{\psi}}{f}.
\]
The section $a$ and the equivalence are an ordinary \RSTT{} hypothesis and conclusion,
respectively, but the cube $\cube{I}$, the three topes, their two entailments, and the family
$\ty{X}$ do not live in an \RSTT{} context.

To define a formal translation from \RSTT to \dRzk, we need to formalise these meta-theoretic assumptions.
We do this by defining the \emph{meta-theoretic parameter layer} (\MTPL).
An element of \MTPL{} is a \emph{meta-theoretic parameter context}: it collects the schematic parameters and the side
conditions that a statement is abstracted over.

Specifically, a meta-theoretic parameter context has five kinds of entries:

\begin{definition}[Meta-theoretic parameter context]
\label{def:mpl}
  A \emph{meta-theoretic parameter context} is a dependent list of entries of the following kinds,
  each well-formed relative to the entries before it; an entry may depend on any earlier one,
  regardless of kind, so the kinds need not be grouped.
  \begin{itemize}
    \item \emph{Cube parameters} $\cube{I_1}, \ldots, \cube{I_m}$, each positing an abstract cube
      (``let $\cube{I}$ be a cube''). From these one forms \emph{cube contexts} $\cube{\Xi}$, the
      finite lists of points of the declared cubes.
    \item \emph{Tope parameters} $\tp{\phi_1}, \ldots, \tp{\phi_n}$, each positing an abstract tope
      over such a cube context, $\cube{\Xi} \vdash \tp{\phi_j} \;\tp{\mathsf{tope}}$.
    \item \emph{Tope inclusion assumptions}, a finite list of entailments
      $\cube{\Xi} \mid \tp{\phi} \vdash \tp{\psi}$ between declared topes sharing a cube context
      $\cube{\Xi}$.
    \item \emph{Term parameters} $v_1 : \ty{T_1}, \ldots, v_k : \ty{T_k}$, each positing a variable of
      a previously formed type, $\cube{\Xi} \mid \tp{\Phi} \mid \ty{\Gamma} \vdash \ty{T} \;\ty{\mathsf{type}}$
      (``let $a : \ty{A}$''). A type-family parameter cannot live in an \RSTT{} context, and neither
      can the terms mentioned in its type. Term parameters lift such terms to the meta level, so that
      a later type-family parameter may depend on them.
    \item \emph{Type-family parameters} $B_1, \ldots, B_l$, each positing a type over a cube context,
      $\cube{\Xi} \mid \tp{\Phi} \mid \ty{\Gamma} \vdash B \;\ty{\mathsf{type}}$, for some tope context
      $\tp{\Phi}$ over $\cube{\Xi}$ and some type context $\ty{\Gamma}$ well-formed under
      $\cube{\Xi} \mid \tp{\Phi}$ and the earlier entries.
  \end{itemize}
  An \RSTT{} statement is \emph{schematic over} a meta-theoretic parameter context when it is an \RSTT{} judgement in which the
  declared cubes, topes, and type families may occur, under the inclusion assumptions. An
  \emph{instance} replaces each parameter by concrete \RSTT{} data: a cube for each cube parameter, a tope
  for each tope parameter, a term for each term parameter, and a type family for each type-family parameter. To be an instance, the
  data must make every inclusion assumption a derivable \RSTT{} entailment.
\end{definition}

Note that a type-family parameter is, by the convention above, usually written as a function into
the universe, with the cube context packed into a single shape over the product cube as the domain:
$B : \{(\pt{t}, \pt{s}) : \cube{I \times J} \mid \pt{t} \tp{\leq} \pt{s}\} \to (\ty{x} : \ty{A}) \to \ty{\mathcal{U}}$
stands for the meta-judgement
$\pt{t} : \cube{I}, \pt{s} : \cube{J} \mid \pt{t} \tp{\leq} \pt{s} \mid \ty{x} : \ty{A} \vdash B \;\ty{\mathsf{type}}$.
This is a notational convenience, not an \RSTT{} function type.

\begin{example}[The composition law for cofibrations]
\label{ex:mpl-cofibration}
  Consider again the composition law for cofibrations, proved in \Rzk{} in
  \cref{ex:cofibration-composition}. Its
  hypotheses form a meta-theoretic parameter context: one cube $\cube{I}$; three topes
  $\tp{\chi}, \tp{\psi}, \tp{\phi}$ over $\cube{I}$ with the inclusion assumptions
  $\pt{t} : \cube{I} \mid \tp{\phi} \vdash \tp{\psi}$ and
  $\pt{t} : \cube{I} \mid \tp{\psi} \vdash \tp{\chi}$; and one type-family parameter
  $\ty{X} : \{\pt{t} : \cube{I} \mid \tp{\chi}\} \to \ty{\mathcal{U}}$.
  The remaining hypothesis and the conclusion are then ordinary \RSTT{} judgements schematic over
  this context. An instance is a concrete cube with concrete topes that satisfy the two
  inclusions, recovering one instance of the Riehl--Shulman statement.
\end{example}

\begin{example}[Dependent Yoneda]
\label{ex:mpl-yoneda}
  The \emph{dependent Yoneda lemma} \cite[Thm.~9.5]{RiehlShulman2017} is a directed analogue of
  path induction: for a Segal type $\ty{A}$, an element $a : \ty{A}$, and a covariant family
  $\ty{C} : (x : \ty{A}) \to \hom_{\ty{A}}(a, x) \to \ty{\mathcal{U}}$, a dependent function in
  $\ty{\prod}_{x : \ty{A}} \ty{\prod}_{f : \hom_{\ty{A}}(a, x)} \ty{C}(x, f)$ is determined, up to
  equivalence, by its value at the identity morphism of $a$. For our purposes, the lemma illustrates
  a term parameter and a type family depending on it: its parameter context consists of a
  type-family parameter $\ty{A} : \ty{\mathcal{U}}$, term parameters
  $\sigma : \mathsf{isSegal}(\ty{A})$ and $a : \ty{A}$, and the type-family parameter $\ty{C}$
  (whose covariance is a further term parameter).
  Because $\ty{C}$ is a meta-theoretic parameter, it cannot mention a variable bound in an \RSTT{}
  judgement. Thus $a$ must itself be a term parameter declared before $\ty{C}$, rather than a
  hypothesis of the \RSTT{} statement.
\end{example}

\section{Declarative \Rzk{}}
\label{sec:declarative-rzk}

This section describes \emph{\declRzk{}} (\dRzk{}): \RSTT{} extended by two refinements, as shown in the display of \cref{sec:introduction}. First, \dRzk{} splits the extension type
of \RSTT{} into two independent constructs whose combination recovers it: \emph{split} extension
types (\cref{sec:split-ext}). This makes the first notational convention of
\cref{rem:ext-conventions} literal. Second, paper proofs in \RSTT{} silently identify terms of
different types, for instance
reading an extension type as the plain function over its shape by forgetting its boundary. \dRzk{}
formalises these passages by an explicit \emph{coercion-free subtyping} relation
(\cref{sec:subtyping}).\footnote{Riehl and Shulman make these passages explicitly, but treat them as trivial: each
is an $\eta$-expansion, definitionally the identity on the term; \dRzk{} makes the same passages
by coercion-free subtyping, equally without changing the term.}
\dRzk{} shares its judgement forms and contexts with \RSTT{}, so a \dRzk{} statement can be schematic
over a meta-theoretic parameter context (\cref{def:mpl}) in exactly the same sense as an \RSTT{} one.
The implementation additionally \emph{internalises} that layer, binding cubes, tope families, and
type families as first-class parameters of a definition (\cref{sec:tope-params}).
\Cref{tab:rzk-schema} (\cref{app:glossary})
summarises the correspondence, pairing each informal practice of \RSTT{} with the \Rzk{} device that
formalises it.

\subsection{Judgements and Contexts in \Rzk{}}
\label{sec:rzk-judgements}

Before turning to \Rzk{}'s refinements, we fix the judgement forms. \DeclRzk{} keeps
\RSTT{}'s three context layers. A judgement has the form
$\cube{\Xi} \mid \tp{\Phi} \mid \ty{\Gamma} \vdash \mathcal{C}$, with a cube context $\cube{\Xi}$, a
tope context $\tp{\Phi}$ of assumed topes, and a type context $\ty{\Gamma}$; the conclusion
$\mathcal{C}$ can be
\begin{enumerate*}
 \item a type formation,
 \item a typing $M : \ty{T}$,
 \item a definitional equality $t_1 \equiv t_2$, or
 \item a subtyping $\ty{A} \subtype \ty{B}$,
\end{enumerate*}
where the first three are like in \RSTT (\cref{sec:review-rstt}).

\subsection{Adding Split Extension Types}
\label{sec:split-ext}

In place of \RSTT{}'s extension-type former, which fuses the function type with its boundary, \Rzk{}
provides two independent constructs whose combination recovers extension types.

The first is the shape-indexed function type
$\ty{\prod}_{\pt{t} : \{\cube{I} \mid \tp{\psi}\}} \ty{B}$, the type of dependent functions over
a shape $\{\cube{I} \mid \tp{\psi}\}$, which recovers \RSTT{}'s empty-boundary extension type (rule
\textsc{$\Pi$-form-shape} in \cref{fig:rzk-split-ext-types}).

The second is a \emph{type restriction} $\ty{B}\,[\, \tp{\phi} \mapsto a \,]$, which attaches a
definitional boundary to an arbitrary type $\ty{B}$. It is well-formed when the boundary
section $a$ is well-typed under $\tp{\phi}$ (rule \textsc{Restr-form} in
\cref{fig:rzk-split-ext-types}).

An element of this restriction is an element $x : \ty{B}$ that equals $a$ wherever $\tp{\phi}$
holds (rule \textsc{Restr-intro} in \cref{fig:rzk-split-ext-types}). Whenever the tope context
$\tp{\Phi}$ entails $\tp{\phi}$, such an $x$ computes to $a$ (rule \textsc{Restr-comp} in
\cref{fig:rzk-split-ext-types}); this type-directed computation makes boundaries definitional.

The reason to split the extension-type former is that type checking produces intermediate types
that have one construct without the other. Applying a section of an extension type at a point
consumes the function part of the type, but must not forget its boundary: the natural type of
the result is the codomain \emph{with a restriction on it}, a type that \RSTT{} cannot express and
\dRzk{} can. \Cref{ex:hom-split} shows such an intermediate type arising from the hom-type.

\begin{example}[The hom-type, split]
\label{ex:hom-split}
The hom-type of \cref{ex:rstt-hom} reads in \dRzk{} as a shape-$\ty{\Pi}$ with a restricted
codomain (the translation of \cref{sec:translation} makes this reading official),
\[
  \hom_{\ty{A}}(x, y) \;=\;
  \ty{\prod}_{\pt{t} : \{\cube{\mathbbm{2}} \mid \tp{\top}\}}
  \big( \ty{A}\,[\, \pt{t} \tp{\equiv} \pt{0} \tp{\vee} \pt{t} \tp{\equiv} \pt{1} \mapsto
        \mathsf{rec}_{\tp{\vee}}(\pt{t} \tp{\equiv} \pt{0} \mapsto x, \pt{t} \tp{\equiv} \pt{1} \mapsto y) \,]\big).
\]
Applying a morphism $f$ at a point $\pt{s}$ then yields the \emph{restricted} type
$\ty{A}\,[\, \pt{s} \tp{\equiv} \pt{0} \tp{\vee} \pt{s} \tp{\equiv} \pt{1} \mapsto \ldots \,]$,
the intermediate type just described: the boundary stays attached to the type of $f(\pt{s})$, ready to compute when $\pt{s}$ is an
endpoint, rather than living in a side condition next to the derivation.
\end{example}

\begin{figure*}
  \begin{prooftree}
    \AxiomC{$\cube{\Xi} \vdash \cube{I} \;\cube{\mathsf{cube}}$}
    \AxiomC{$\cube{\Xi}, \pt{t} : \cube{I} \mid \tp{\Phi} \vdash \tp{\psi}(\pt{t}) \;\tp{\mathsf{tope}}$}
    \AxiomC{$\cube{\Xi}, \pt{t} : \cube{I} \mid \tp{\Phi}, \tp{\psi}(\pt{t}) \mid \ty{\Gamma} \vdash \ty{B}(\pt{t}) \;\ty{\mathsf{type}}$}
    \RightLabel{\textsc{$\Pi$-form-shape}}
    \TrinaryInfC{$\cube{\Xi} \mid \tp{\Phi} \mid \ty{\Gamma} \vdash \ty{\prod}_{\pt{t} : \{\cube{I} \mid \tp{\psi}\}} \ty{B}(\pt{t}) \;\ty{\mathsf{type}}$}
  \end{prooftree}
  \begin{prooftree}
    \AxiomC{$\cube{\Xi}, \pt{t} : \cube{I} \mid \tp{\Phi}, \tp{\psi}(\pt{t}) \mid \ty{\Gamma} \vdash b : \ty{B}(\pt{t})$}
    \RightLabel{\textsc{$\Pi$-intro-shape}}
    \UnaryInfC{$\cube{\Xi} \mid \tp{\Phi} \mid \ty{\Gamma} \vdash \lambda \pt{t}.\, b : \ty{\prod}_{\pt{t} : \{\cube{I} \mid \tp{\psi}\}} \ty{B}(\pt{t})$}
  \end{prooftree}
  \begin{prooftree}
    \AxiomC{$\cube{\Xi} \mid \tp{\Phi} \mid \ty{\Gamma} \vdash f : \ty{\prod}_{\pt{t} : \{\cube{I} \mid \tp{\psi}\}} \ty{B}(\pt{t})$}
    \AxiomC{$\cube{\Xi} \vdash \pt{s} : \cube{I}$}
    \AxiomC{$\cube{\Xi} \mid \tp{\Phi} \vdash \tp{\psi}(\pt{s})$}
    \RightLabel{\textsc{$\Pi$-app-shape}}
    \TrinaryInfC{$\cube{\Xi} \mid \tp{\Phi} \mid \ty{\Gamma} \vdash f(\pt{s}) : \ty{B}(\pt{s})$}
  \end{prooftree}
  \begin{prooftree}
    \AxiomC{$\cube{\Xi}, \pt{t} : \cube{I} \mid \tp{\Phi}, \tp{\psi}(\pt{t}) \mid \ty{\Gamma} \vdash b : \ty{B}(\pt{t})$}
    \AxiomC{$\cube{\Xi} \vdash \pt{s} : \cube{I}$}
    \AxiomC{$\cube{\Xi} \mid \tp{\Phi} \vdash \tp{\psi}(\pt{s})$}
    \RightLabel{\textsc{$\Pi$-$\beta$-shape}}
    \TrinaryInfC{$\cube{\Xi} \mid \tp{\Phi} \mid \ty{\Gamma} \vdash (\lambda \pt{t}.\, b)(\pt{s}) \equiv b[\pt{s}/\pt{t}] : \ty{B}(\pt{s})$}
  \end{prooftree}
  \begin{prooftree}
    \AxiomC{$\cube{\Xi} \mid \tp{\Phi} \mid \ty{\Gamma} \vdash f : \ty{\prod}_{\pt{t} : \{\cube{I} \mid \tp{\psi}\}} \ty{B}(\pt{t})$}
    \RightLabel{\textsc{$\Pi$-$\eta$-shape}}
    \UnaryInfC{$\cube{\Xi} \mid \tp{\Phi} \mid \ty{\Gamma} \vdash f \equiv \lambda \pt{t}.\, f(\pt{t}) : \ty{\prod}_{\pt{t} : \{\cube{I} \mid \tp{\psi}\}} \ty{B}(\pt{t})$}
  \end{prooftree}
  \begin{prooftree}
    \AxiomC{$\cube{\Xi} \mid \tp{\Phi} \mid \ty{\Gamma} \vdash \ty{B} \;\ty{\mathsf{type}}$}
    \AxiomC{$\cube{\Xi} \mid \tp{\Phi}, \tp{\phi} \mid \ty{\Gamma} \vdash a : \ty{B}$}
    \RightLabel{\textsc{Restr-form}}
    \BinaryInfC{$\cube{\Xi} \mid \tp{\Phi} \mid \ty{\Gamma} \vdash \ty{B}\,[\, \tp{\phi} \mapsto a \,] \;\ty{\mathsf{type}}$}
  \end{prooftree}
  \begin{prooftree}
    \AxiomC{$\cube{\Xi} \mid \tp{\Phi} \mid \ty{\Gamma} \vdash x : \ty{B}$}
    \AxiomC{$\cube{\Xi} \mid \tp{\Phi}, \tp{\phi} \mid \ty{\Gamma} \vdash x \equiv a : \ty{B}$}
    \RightLabel{\textsc{Restr-intro}}
    \BinaryInfC{$\cube{\Xi} \mid \tp{\Phi} \mid \ty{\Gamma} \vdash x : \ty{B}\,[\, \tp{\phi} \mapsto a \,]$}
  \end{prooftree}
  \begin{prooftree}
    \AxiomC{$\cube{\Xi} \mid \tp{\Phi} \mid \ty{\Gamma} \vdash x : \ty{B}\,[\, \tp{\phi} \mapsto a \,]$}
    \AxiomC{$\cube{\Xi} \mid \tp{\Phi} \vdash \tp{\phi}$}
    \RightLabel{\textsc{Restr-comp}}
    \BinaryInfC{$\cube{\Xi} \mid \tp{\Phi} \mid \ty{\Gamma} \vdash x \equiv a : \ty{B}\,[\, \tp{\phi} \mapsto a \,]$}
  \end{prooftree}
  \caption{Declarative typing rules for \Rzk{}'s split extension types: the shape-indexed function
  type $\ty{\prod}_{\pt{t} : \{\cube{I} \mid \tp{\psi}\}} \ty{B}$ (top five rules, with definitional
  $\beta$ and $\eta$) and the
  single-branch type restriction $\ty{B}\,[\, \tp{\phi} \mapsto a \,]$ (bottom three).}
  \label{fig:rzk-split-ext-types}
\end{figure*}

\begin{remark}[Surface notation]
\label{rem:surface-notation}
In \Rzk{}'s surface syntax, and in examples throughout the paper, the domain of a shape-$\ty{\Pi}$
is written as a \emph{tope family}: the surface form
$(\pt{t} : \tp{\psi}) \to \ty{A}\,[\, \tp{\phi} \mapsto a \,]$, which the reader has met in
\cref{ex:cubes-topes}, stands for
$\ty{\prod}_{\pt{t} : \{\cube{I} \mid \tp{\psi}\}} \ty{A}\,[\, \tp{\phi} \mapsto a \,]$ with the
cube $\cube{I}$ left implicit; tope families and the elaboration of this notation belong to the
implementation and are treated in \cref{sec:tope-params}.
\end{remark}

\begin{remark}[Multi-branch restrictions]
\label{rem:multibranch}
Throughout the declarative theory, a restriction carries a single branch. The multi-branch
restriction $\ty{B}\,[\, \tp{\phi_1} \mapsto b_1, \ldots, \tp{\phi_n} \mapsto b_n \,]$ of the
surface syntax, as in the hom-type of \cref{ex:hom-types}, is treated in
\cref{sec:type-checking}.
\end{remark}

\subsection{Adding Subtyping for Extension Types}
\label{sec:subtyping}

Splitting extension types into shape-$\ty{\Pi}$ and type restrictions exposes a gap
in Riehl and Shulman's treatment of extension types: for instance,
how are the types $\ty{B}$ and $\ty{B}\,[\,\tp{\bot} \mapsto \mathsf{rec}_{\tp{\bot}}\,]$ related?
On paper, such passages are made silently: the proof of \cite[Thm.~4.4]{RiehlShulman2017}
repackages a term ``into a different extension type'', and the remark following
\cite[Prop.~6.3]{RiehlShulman2017} ``blur[s] the line'' between a natural transformation and its
components, each time justified by an $\eta$-rule of \RSTT{}. A type checker cannot blur lines:
confronted with a term of one type used at another type, it must either treat the two types as
equal,
which they are not, or insert a coercion, and then guarantee that the coercions inserted at
different points of checking agree. \dRzk{} formalises the passages instead as an explicit
\emph{subtyping} judgement
$\cube{\Xi} \mid \tp{\Phi} \mid \ty{\Gamma} \vdash \ty{A} \subtype \ty{B}$,
which enters typing through the usual subsumption rule \textsc{T-Sub}:
\begin{prooftree}
  \AxiomC{$\cube{\Xi} \mid \tp{\Phi} \mid \ty{\Gamma} \vdash a : \ty{A}$}
  \AxiomC{$\cube{\Xi} \mid \tp{\Phi} \mid \ty{\Gamma} \vdash \ty{A} \subtype \ty{B}$}
  \RightLabel{\textsc{T-Sub}}
  \BinaryInfC{$\cube{\Xi} \mid \tp{\Phi} \mid \ty{\Gamma} \vdash a : \ty{B}$}
\end{prooftree}
Subtyping is \emph{coercion-free}: \textsc{T-Sub} leaves the term $a$ unchanged, so a term of a
subtype is literally a term of the supertype, with no coercion inserted. Coercion-freeness makes
the silent passages checkable without new proof obligations: no coherence questions arise at the
term level, and the
checker folds subtyping into its definitional-equality routine (\cref{sec:checking-subtyping}).

In \cref{fig:subtyping}, we present five rules of \Rzk{}'s subtyping. Four are related to the
split extension types, including the rule \textsc{S-Restr$'$} that drops a restriction entirely.
The fifth, \textsc{S-Split}, establishes a subtyping branch-wise over an entailed tope
disjunction, just as definitional equality is checked by cases under a disjunctive tope context
(\cref{sec:def-eq-topes}). The
remaining rules are standard; we present them in \cref{app:coherence} (\cref{fig:subtyping-std}).
Note that the cube-subtyping premise of \textsc{S-Pi-Shape} is degenerate, but we keep the premise to record the intended variance.

\begin{example}[Morphisms applied as functions]
\label{ex:arrow-as-function}
A morphism $f : \hom_{\ty{A}}(x, y)$ applies to any point $\pt{s} : \cube{\mathbbm{2}}$ as if $f$
were a function $\cube{\mathbbm{2}} \to \ty{A}$, with $f(\pt{0}) \equiv x$ at the endpoint. In
\RSTT{} the two readings of $f$ are identified by notational convention (\cref{rem:ext-conventions}); in
\Rzk{}, reading the restricted codomain of \cref{ex:hom-split} as plain $\ty{A}$ is the explicit
subtyping step \textsc{S-Restr$'$} (\cref{fig:subtyping}), silent in the term but present in the
derivation.
\end{example}

\begin{figure*}
  \begin{prooftree}
    \AxiomC{$\cube{\Xi} \vdash \cube{J} \subtype \cube{I}$}
    \AxiomC{$\cube{\Xi}, \pt{x} : \cube{J} \mid \tp{\Phi}, \tp{\phi} \mid \ty{\Gamma} \vdash \ty{B}(\pt{x}) \subtype \ty{D}(\pt{x})$}
    \AxiomC{$\cube{\Xi}, \pt{x} : \cube{J} \mid \tp{\Phi}, \tp{\phi} \vdash \tp{\psi}$}
    \RightLabel{\textsc{S-Pi-Shape}}
    \TrinaryInfC{$\cube{\Xi} \mid \tp{\Phi} \mid \ty{\Gamma} \vdash \ty{\prod}_{\pt{x} : \{\cube{I} \mid \tp{\psi}\}} \ty{B}(\pt{x}) \subtype \ty{\prod}_{\pt{x} : \{\cube{J} \mid \tp{\phi}\}} \ty{D}(\pt{x})$}
  \end{prooftree}
  \begin{prooftree}
    \AxiomC{$\cube{\Xi} \mid \tp{\Phi} \mid \ty{\Gamma} \vdash \ty{A} \subtype \ty{B}$}
    \AxiomC{$\cube{\Xi} \mid \tp{\Phi}, \tp{\psi} \vdash \tp{\phi}$}
    \AxiomC{$\cube{\Xi} \mid \tp{\Phi}, \tp{\phi}, \tp{\psi} \mid \ty{\Gamma} \vdash t \equiv u : \ty{A}$}
    \RightLabel{\textsc{S-Restr}}
    \TrinaryInfC{$\cube{\Xi} \mid \tp{\Phi} \mid \ty{\Gamma} \vdash \ty{A}[\tp{\phi} \mapsto t] \subtype \ty{B}[\tp{\psi} \mapsto u]$}
  \end{prooftree}
  \begin{prooftree}
    \AxiomC{$\cube{\Xi} \mid \tp{\Phi} \mid \ty{\Gamma} \vdash \ty{A}[\tp{\phi} \mapsto a] \;\ty{\mathsf{type}}$}
    \RightLabel{\textsc{S-Restr$'$}}
    \UnaryInfC{$\cube{\Xi} \mid \tp{\Phi} \mid \ty{\Gamma} \vdash \ty{A}[\tp{\phi} \mapsto a] \subtype \ty{A}$}
    \DisplayProof\quad\quad
    \AxiomC{$\cube{\Xi} \mid \tp{\Phi} \mid \ty{\Gamma} \vdash \ty{B} \;\ty{\mathsf{type}}$}
    \RightLabel{\textsc{S-Restr-Bot}}
    \UnaryInfC{$\cube{\Xi} \mid \tp{\Phi} \mid \ty{\Gamma} \vdash \ty{B} \subtype \ty{B}[\tp{\bot} \mapsto \mathsf{rec}_{\tp{\bot}}]$}
  \end{prooftree}
  \begin{prooftree}
    \AxiomC{$\cube{\Xi} \mid \tp{\Phi} \vdash \tp{\phi \vee \xi}$}
    \AxiomC{$\cube{\Xi} \mid \tp{\Phi}, \tp{\phi} \mid \ty{\Gamma} \vdash \ty{A} \subtype \ty{B}$}
    \AxiomC{$\cube{\Xi} \mid \tp{\Phi}, \tp{\xi} \mid \ty{\Gamma} \vdash \ty{A} \subtype \ty{B}$}
    \RightLabel{\textsc{S-Split}}
    \TrinaryInfC{$\cube{\Xi} \mid \tp{\Phi} \mid \ty{\Gamma} \vdash \ty{A} \subtype \ty{B}$}
  \end{prooftree}
  \caption{Selected subtyping rules of \Rzk{}: restriction weakening (\textsc{S-Restr}), dropping a
  restriction (\textsc{S-Restr$'$}), contravariant cube and shape domains (\textsc{S-Pi-Shape}),
  vacuous-boundary introduction (\textsc{S-Restr-Bot}), and the case split over an entailed tope
  disjunction (\textsc{S-Split}).}
  \label{fig:subtyping}
\end{figure*}

\begin{remark}[$\eta$-expanded witnesses]
\label{rem:eta-contraction}
Subtyping does not capture every silent passage of \RSTT{}. In the proof of the composition law
for cofibrations (\cref{ex:cofibration-composition}), the forward direction re-packages a section
$h$ as a pair of its own further extensions: the second component uses $h$ at a type whose
boundary $[\, \tp{\psi}(\pt{t}) \mapsto h(\pt{t}) \,]$ pins the values of the very term being
checked, while $h$ itself carries a boundary only on the smaller tope $\tp{\phi}$. This passage
cannot be a subtyping: it is sound for $h$ itself, but not for an arbitrary inhabitant of $h$'s
type, and a subtyping judgement quantifies over all inhabitants. \dRzk{}, like \RSTT{}, therefore
accepts here only the $\eta$-expanded witness $\lambda \pt{t}.\, h(\pt{t})$, by
\textsc{Restr-intro}. This is the form Riehl and Shulman write, and the \sHoTT{} library after
them. Thus, typing in \dRzk{} is not invariant under $\eta$-contraction of the subject. It could
be made so by a \emph{pointwise subsumption} rule, which types $h$ by typing its application
$h(\pt{t})$ under the binder. We have prototyped such a rule in a development version of the
type checker, but keep it out of \dRzk{} in this paper.
\end{remark}

\section{From \RSTT{} to \Rzk{}: Translation, Faithfulness, and Conservativity}
\label{sec:translation-conservativity}

In \cref{sec:review-rstt-mtpl} we reviewed \RSTTMTPL{}, and in \cref{sec:declarative-rzk} we
defined \dRzk{}. We now relate the two by a \emph{translation} $\tau$
from \RSTTMTPL{} to \dRzk{} and a
\emph{back-translation} $\sigma$ in the opposite direction.

The translation is mostly trivial. On the cube and tope layers and on the shared Martin-L\"of
fragment it is the identity; it re-encodes only the extension type, by the encoding
\eqref{eq:ext-encoding} below. A meta-theoretic parameter context $\mathcal{M}$ stays a parameter
context, with only the types of its entries translated; an \RSTT{} statement schematic over
abstract cubes, topes, and type families thus becomes a \dRzk{} statement schematic over the same
parameters.

The main result is \emph{conservativity} (\cref{sec:conservativity}), which states that \dRzk{}
proves nothing new compared to \RSTT{}: whenever \dRzk{}
inhabits the translation of an \RSTT{} type, \RSTT{} already inhabits the original.

\subsection{Translation from \RSTTMTPL to \dRzk}
\label{sec:translation}

The translation $\tau$ takes a meta-theoretic parameter context $\mathcal{M}$ (\cref{def:mpl}) and an
\RSTT{} derivation schematic over $\mathcal{M}$, and returns a \dRzk{} derivation schematic over
$\tau(\mathcal{M})$. We give it in three
stages: on the parameter layer, on judgements, and on derivations. Within the common fragment it
is the identity everywhere except on extension types, and it introduces no subtyping: \RSTT{}'s
formal rules make none of the silent identifications of \cref{sec:subtyping}, so their
translations need none either. Subtyping is exercised by \emph{new} \dRzk{} derivations, which
conservativity (\cref{sec:conservativity}) must therefore handle.

\begin{definition}[Translation $\tau$ from \RSTTMTPL{} to \dRzk{}]
\label{def:translation}
  Let $\mathcal{M}$ be a meta-theoretic parameter context and let $D$ be an \RSTT{}
  derivation schematic over $\mathcal{M}$.

  \emph{On the parameter layer.} The translation keeps $\mathcal{M}$ as a meta-theoretic parameter
  context and acts only on the types of its entries: a cube parameter, a tope parameter, and a tope
  inclusion assumption are unchanged, since $\tau$ is the identity on the cube and tope layers; a
  term parameter $v : \ty{T}$ becomes $v : \tau\ty{T}$; and a type-family parameter keeps its cube
  and tope contexts and has its type context translated entry by entry
  (\cref{ex:yoneda-translation}). We write $\tau(\mathcal{M})$ for the result; a translated
  judgement is schematic over $\tau(\mathcal{M})$ exactly as the original is over $\mathcal{M}$
  (\cref{def:mpl}).

  \emph{On judgements.} On the cube and tope layers, on the Martin-L\"of fragment ($\ty{\Pi}$ over
  types, $\ty{\Sigma}$, identity types, the unit type), and on tope disjunction elimination, \dRzk{}
  reproduces \RSTT{} verbatim and $\tau$ is the identity. In particular $\tau$ keeps the cube context
  $\cube{\Xi}$ and the tope context $\tp{\Phi}$ unchanged, so the assumed topes are the same formulas
  in both theories. The one non-identity case is the extension type, for which \dRzk{} has no
  primitive former: $\tau$ re-encodes it as the shape-$\ty{\Pi}$ of a restricted codomain,
  \begin{equation}
  \label{eq:ext-encoding}
    \exttype{\pt{t} : \cube{I} \mid \tp{\psi}}{\ty{A}(\pt{t})}{\tp{\phi}}{a}
      \;\overset{\tau}{\longmapsto}\;
      \ty{\prod}_{\pt{t} : \{\cube{I} \mid \tp{\psi}\}} \big( \tau\ty{A}(\pt{t}) \big)\,[\, \tp{\phi}(\pt{t}) \mapsto \tau a(\pt{t}) \,],
  \end{equation}
  and sends extension-type abstraction and application to shape-$\ty{\Pi}$ abstraction and
  application. On a type context $\ty{\Gamma}$, $\tau$ acts entry by entry, re-encoding any extension
  types in the types of $\ty{\Gamma}$ by \eqref{eq:ext-encoding}.

  \emph{On derivations.} $\tau$ is defined by recursion on $D$. Each rule of the shared fragment maps
  to the identical \dRzk{} rule. The extension-type rules map to derivations by the split constructs of
  \cref{sec:split-ext} under the encoding \eqref{eq:ext-encoding}: formation maps to a shape-$\ty{\Pi}$
  over a restricted codomain (\textsc{$\Pi$-form-shape}, \textsc{Restr-form}); introduction to
  shape-$\ty{\Pi}$ abstraction into the restriction (\textsc{$\Pi$-intro-shape}, \textsc{Restr-intro});
  application to shape-$\ty{\Pi}$ application (\textsc{$\Pi$-app-shape}), landing in the restricted
  type; and boundary computation to the restriction computation (\textsc{Restr-comp}).
  \Cref{fig:ext-translation} (\cref{app:admissibility}) shows how each judgement's conclusion
  translates. These rules are
  admissible in \dRzk{} (\cref{prop:ext-admissible}, \cref{app:admissibility}), and the translation
  introduces no subtyping.
\end{definition}

One feature of the definition deserves comment: $\tau$ does not forget boundaries. In \RSTT{},
applying $f$ at a point $\pt{s}$ of its shape yields $f(\pt{s}) : \ty{A}(\pt{s})$, and the
boundary information lives in the side condition of the computation rule \textsc{Ext-comp}. Under
the encoding \eqref{eq:ext-encoding}, applying $\tau f$ at $\pt{s}$ instead yields the
\emph{restricted} type $\tau\ty{A}(\pt{s})\,[\, \tp{\phi}(\pt{s}) \mapsto \tau a(\pt{s}) \,]$,
exactly as in \cref{ex:hom-split}: the boundary is carried by the type of the result. The extra
restriction claims nothing new: where $\tp{\phi}(\pt{s})$ holds, $f(\pt{s})$ equals
$\tau a(\pt{s})$ in \RSTT{} already by \textsc{Ext-comp}, and because restrictions are
coercion-free (\cref{sec:subtyping}), $f(\pt{s})$ itself inhabits the restricted type, with no
coercion to insert.

\begin{example}[Composites of cofibrations]
\label{ex:cofibration}
  Recall the meta-theoretic parameter context of \cref{ex:mpl-cofibration}: a cube $\cube{I}$, topes
  $\tp{\chi}, \tp{\psi}, \tp{\phi}$ with $\pt{t} : \cube{I} \mid \tp{\phi} \vdash \tp{\psi}$ and
  $\pt{t} : \cube{I} \mid \tp{\psi} \vdash \tp{\chi}$, and a type family
  $\ty{X} : \{\pt{t} : \cube{I} \mid \tp{\chi}\} \to \ty{\mathcal{U}}$. The statement asserts, for any
  section $a : \ty{\prod}_{\pt{t} : \cube{I} \mid \tp{\phi}} \ty{X}(\pt{t})$, an equivalence between an
  extension along $\tp{\phi} \vdash \tp{\chi}$ and a pair of extensions along
  $\tp{\phi} \vdash \tp{\psi}$ and $\tp{\psi} \vdash \tp{\chi}$.

  No entry of $\mathcal{M}$ mentions an extension type, so $\tau(\mathcal{M}) = \mathcal{M}$, and the
  statement translates to a \dRzk{} judgement schematic over the same parameters; the hypothesis $a$
  and the conclusion translate by \eqref{eq:ext-encoding}.
\end{example}

\begin{example}[Dependent Yoneda, translated]
\label{ex:yoneda-translation}
  The cofibration family $\ty{X}$ of \cref{ex:cofibration} is declared over a non-trivial cube and
  tope context, but its type context is empty. The dependent Yoneda lemma (\cref{ex:mpl-yoneda})
  is complementary: its cube and tope contexts are empty, its type context is not, and it contains
  an extension type that $\tau$ must re-encode. Here $\tau$ keeps
  $\ty{A}$, $\sigma$, and $a$ unchanged and translates the domain telescope of $\ty{C}$ entry by
  entry, so the re-encoding \eqref{eq:ext-encoding} reaches the $\hom$-type nested in $\ty{C}$'s
  domain: it becomes
  $\ty{\prod}_{\pt{t} : \cube{\mathbbm{2}}} \ty{A}\,[\, \pt{t} \tp{\equiv} \pt{0} \tp{\vee} \pt{t} \tp{\equiv} \pt{1} \mapsto \mathsf{rec}_{\tp{\vee}}(\pt{t} \tp{\equiv} \pt{0} \mapsto a, \pt{t} \tp{\equiv} \pt{1} \mapsto x) \,]$.
  The dependence of $\ty{C}$ on the earlier term parameter $a$ is untouched: it is a feature of
  $\mathcal{M}$ itself, not of the translation.
\end{example}

Because $\tau$ is defined by recursion on derivations, faithfulness reduces to a single fact:
every rule of \RSTT{} must map to a derivable \dRzk{} inference.

\begin{proposition}[Faithfulness]
\label{thm:faithfulness}
  The translation $\tau$ (\cref{def:translation}) carries every \RSTT{} derivation to a \dRzk{}
  derivation of the translated judgement; \dRzk{} faithfully contains \RSTT{}.
\end{proposition}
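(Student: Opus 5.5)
The proof is by induction on the \RSTT{} derivation $D$, following the recursive definition of $\tau$ in \cref{def:translation}. The parameter layer is handled first. Since $\tau$ is the identity on cube parameters, tope parameters and inclusion assumptions, and acts entrywise on term and type-family parameters, $\tau(\mathcal{M})$ is a well-formed parameter context whenever $\mathcal{M}$ is. Its well-formedness derivations are themselves \RSTT{} judgements of smaller height, so the induction covers them. Every leaf of $D$ that appeals to a parameter or an inclusion assumption then has a verbatim counterpart over $\tau(\mathcal{M})$.

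For the inductive step we strengthen the hypothesis to a substitution lemma: $\tau$ commutes with substitution of terms for type variables and of points for cube variables. That is, $\tau(b[s/t]) = \tau b[s/t]$ and $\tau(b[u/x]) = \tau b[\tau u/x]$. This follows by a routine structural induction on raw syntax, because \eqref{eq:ext-encoding} is compositional and does not inspect its arguments. With the lemma in hand, we split on the last rule of $D$.

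\emph{Shared rules.} For the cube and tope layers, the Martin-L\"of fragment and $\mathsf{rec}_{\tp{\vee}}$, the translated premises are exactly the premises of the identical \dRzk{} rule. Tope entailments are untouched, and substitution instances in conclusions (for example, $\Pi$-$\beta$ or $J$-computation) are matched using the substitution lemma.

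\emph{Extension-type rules.} Here we invoke \cref{prop:ext-admissible}, which states that the images of formation, introduction, application, $\beta$, boundary computation and $\eta$ are admissible in \dRzk{}. For completeness, each derivation is sketched as follows.
\begin{itemize}
\item Formation combines \textsc{Restr-form} in the context extended by $\pt{t}$ and $\tp{\psi}$ with \textsc{$\Pi$-form-shape}.
\item Introduction applies \textsc{Restr-intro}, whose equality premise is the translated boundary premise of \RSTT{}, and then \textsc{$\Pi$-intro-shape}.
\item Application is \textsc{$\Pi$-app-shape}, which lands in the restricted type. If the \RSTT{} conclusion is stated at the unrestricted $\tau\ty{A}(\pt{s})$, we pass to it with \textsc{S-Restr$'$} and \textsc{T-Sub}. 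Subtyping enters only here, as bookkeeping that bridges the two readings of the translated judgement, and it leaves the term unchanged.
\item Boundary computation is \textsc{Restr-comp} applied to that application, followed by the same weakening of the type in the equality judgement.
\item $\beta$ and $\eta$ are \textsc{$\Pi$-$\beta$-shape} and \textsc{$\Pi$-$\eta$-shape}. The $\eta$-expanded term must also be checked against the restriction, which we do by \textsc{Restr-intro}, using $f(\pt{t})\equiv\tau a(\pt{t})$ under $\tp{\phi}$ obtained from \textsc{Restr-comp}.
\end{itemize}

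The main obstacle is the congruence and conversion rules for definitional equality at extension types. Examples are conversion of a term along $\ty{A}\equiv\ty{A}'$, $a\equiv a'$ inside an extension type, and the equality rules for the type former itself. For these, \dRzk{} must supply equality of restricted types when both the base types and the boundary sections are definitionally equal under $\tp{\phi}$. I would establish this either from congruence of the type restriction built into the equality judgement or, failing that, by deriving mutual subtyping with \textsc{S-Restr} (whose third premise is exactly the boundary equality) and using \textsc{T-Sub} in both directions. The second route needs a short check that equality judgements can likewise be transported along mutual subtypes.
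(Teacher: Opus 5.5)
Your proposal is correct and follows essentially the paper's route: recursion on the \RSTT{} derivation, with shared-fragment rules sent to the identical \dRzk{} rules and the extension-type rules discharged by the admissibility result \cref{prop:ext-admissible}. Your additions fill in bookkeeping that the paper leaves implicit rather than changing the argument: commutation of $\tau$ with substitution, the congruence and conversion cases, and the explicit \textsc{S-Restr$'$}/\textsc{T-Sub} step after \textsc{$\Pi$-app-shape} (the paper avoids that step by stating the translated application at the restricted type, as in \cref{fig:ext-translation}).
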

\begin{proof}
  By recursion on the derivation. On the shared fragment each rule maps to the identical \dRzk{}
  rule, and the extension-type rules map to admissible \dRzk{} derivations
  (\cref{fig:ext-translation}), by the admissibility of \RSTT{}'s extension-type rules in \dRzk{}
  (\cref{prop:ext-admissible}, \cref{app:admissibility}).
\end{proof}

To see what the translation does on a derivation, consider the identity morphism
$\lambda \pt{t}.\, x : \hom_{\ty{A}}(x, x)$ (\cref{ex:hom-types}): $\tau$ maps its single
\RSTT{} introduction step \textsc{Ext-intro} to \textsc{Restr-intro} followed by
\textsc{$\Pi$-intro-shape}, over the \emph{same} two premises and with no subtyping.
\Cref{fig:idhom-translation} in \cref{app:admissibility} traces this derivation in full.

\subsection{Conservativity of \dRzk{} over \RSTT{}}
\label{sec:conservativity}

Conservativity is the statement that \dRzk{}'s two refinements, taken together, prove
nothing new compared to \RSTT{}: any \dRzk{} inhabitant of a translated \RSTT{} type yields an \RSTT{}
inhabitant of the original.

The obstacle to a back-translation is that a \dRzk{} restriction is more general than an \RSTT{}
extension type: it may sit on any type, not only on a shape-$\ty{\Pi}$ codomain, and \RSTT{} has no
free-standing restriction to receive it. We therefore prove conservativity for the derivations that
use this extra generality only \emph{positively}: a free-standing restriction may be
\emph{concluded}, but no variable may be \emph{assumed} at a type carrying one. Every derivation that $\tau$ produces is of this
form, and so are the worked examples of \cref{app:conservativity-example,app:backtranslation}. For
arbitrary \dRzk{} derivations we state conservativity as a conjecture at the end of this section.

\begin{definition}[Ext-style and tail types]
\label{def:ext-style}
  Call a restriction \emph{ext-style} when it is the codomain of a shape-$\ty{\Pi}$,
  $\ty{\prod}_{\pt{t} : \{\cube{I} \mid \tp{\psi}\}} \big( \ty{A}\,[\, \tp{\phi} \mapsto a \,] \big)$,
  and its boundary tope entails the shape tope of its binder,
  $\pt{t} : \cube{I} \mid \tp{\phi} \vdash \tp{\psi}$. These are exactly the restrictions that the
  encoding \eqref{eq:ext-encoding} produces. A \dRzk{} type is \emph{ext-style} when every
  restriction in it is ext-style. A \emph{tail type} may also carry restrictions that are not
  ext-style, but only in the nesting of codomains: tail types are generated by the grammar
  \[
    S \;::=\; E
      \;\mid\; \ty{\prod}_{y : E} S
      \;\mid\; \ty{\prod}_{\pt{t} : \{\cube{I} \mid \tp{\psi}\}} S
      \;\mid\; S\,[\, \tp{\phi} \mapsto a \,],
  \]
  where $E$ ranges over ext-style types. A restriction of a tail type that is not ext-style we call
  \emph{free-standing}.\footnote{The grammar could be relaxed to admit restrictions also in
  the second components of $\ty{\Sigma}$-types, with the boundary equation of
  \cref{def:backtranslation} transported along pairing and projections by the $\ty{\Sigma}$
  computation rules, and even in identity types, although there the transported equation is an
  $\mathsf{ap}$, definitional only on $\mathsf{refl}$ (the same obstacle as for a covariant
  identity-type subtyping rule, \cref{sec:rel-work}). We do not need either extension: the grammar
  above covers the image of $\tau$, the worked examples, and the \sHoTT{} library.}
\end{definition}

Note that every $\tau$-image is ext-style: the entailment in the definition is \RSTT{}'s side
condition on extension types (\cref{fig:rstt-ext-types}), which the encoding
\eqref{eq:ext-encoding} preserves. \dRzk{} itself does not impose it (\textsc{Restr-form},
\cref{fig:rzk-split-ext-types}): a boundary may \emph{overhang} its shape.\footnote{And the checker
accepts such a boundary with a warning (\cref{sec:bidirectional}).}

\begin{example}[Ext-style, tail, and free-standing types]
\label{ex:ext-style-types}
Some examples and counter-examples:
\begin{itemize}
  \item The $\tau$-image of $\hom_{\ty{A}}(x, x)$ (\cref{fig:idhom-translation}),
    $\ty{\prod}_{\pt{t} : \{\cube{\mathbbm{2}} \mid \tp{\top}\}} \ty{A}\,[\, \pt{t} \tp{\equiv} \pt{0} \tp{\vee} \pt{t} \tp{\equiv} \pt{1} \mapsto \ldots \,]$,
    is ext-style: its one restriction is the codomain of a shape-$\ty{\Pi}$, and its boundary tope
    entails the shape tope $\tp{\top}$.
  \item Applying a $\tau$-image at a point $\pt{s}$ yields
    $\big( \tau\ty{A}(\pt{s}) \big)\,[\, \tp{\phi}(\pt{s}) \mapsto \tau a(\pt{s}) \,]$
    (\cref{sec:translation}): a tail type that is not ext-style, with a free-standing restriction
    at the root.
  \item The type
    $\ty{\prod}_{\pt{t} : \{\cube{I} \mid \tp{\psi}\}} \ty{\prod}_{x : \ty{X}} \ty{Y}(\pt{t}, x)\,[\, \tp{\phi}(\pt{t}) \mapsto f(\pt{t}, x) \,]$
    of the auxiliary lemma in \cref{fig:nested-restriction} (\cref{app:backtranslation}) is a tail
    type: its restriction sits under an ordinary $\ty{\Pi}$, so it is free-standing.
  \item The type
    $\ty{\prod}_{\pt{t} : \{\cube{\mathbbm{2}} \mid \pt{t} \tp{\equiv} \pt{0}\}} \ty{A}\,[\, \pt{t} \tp{\equiv} \pt{0} \tp{\vee} \pt{t} \tp{\equiv} \pt{1} \mapsto a \,]$
    is a tail type but not ext-style: its restriction is the codomain of a shape-$\ty{\Pi}$, but
    the boundary tope overhangs the shape tope $\pt{t} \tp{\equiv} \pt{0}$, so the restriction is
    free-standing.
  \item The type
    $\ty{\sum}_{x : \ty{X}} \big( \ty{Y}(x)\,[\, \tp{\phi} \mapsto b(x) \,] \big)$
    is not a tail type: its restriction sits in the second component of a $\ty{\Sigma}$-type, which
    the grammar does not reach.
\end{itemize}
\end{example}

\begin{definition}[Derivations with ext-style hypotheses]
\label{def:ext-style-derivation}
  A \dRzk{} derivation has \emph{ext-style hypotheses} when
  \begin{enumerate*}
    \item every type in the context of its conclusion is ext-style,
    \item every binder in it ($\lambda$, $\ty{\Pi}$, $\ty{\Sigma}$, and their shape forms) binds its
      variable at an ext-style type, and
    \item every type concluded by one of its typing judgements is a tail type.
  \end{enumerate*}
\end{definition}

Intuitively, this means that free-standing restrictions may be \emph{concluded} but not
\emph{assumed}.
In the \sHoTT{} library~\cite{shott}, all of the roughly 450
restriction types sit directly on shape-$\ty{\Pi}$ codomains, with each boundary tope entailing
its shape tope, so the whole library stays within the fragment.

\begin{theorem}[Conservativity]
\label{thm:conservativity}
  Let $\mathcal{M}$ be a meta-theoretic parameter context (\cref{def:mpl}) and let
  $\mathcal{J} = (\cube{\Xi} \mid \tp{\Phi} \mid \ty{\Gamma} \vdash \ty{T} \;\ty{\mathsf{type}})$ be an
  \RSTT{} judgement schematic over $\mathcal{M}$. If a \dRzk{} term $M$ inhabits the translated type
  by a derivation with ext-style hypotheses (\cref{def:ext-style-derivation}), that is, if
  $\tau(\mathcal{J})$ extends to such a \dRzk{} derivation $D$ of
  \[
    \tau\big(\cube{\Xi} \mid \tp{\Phi} \mid \ty{\Gamma}\big) \;\vdash_{\dRzk{}}\; M : \tau(\ty{T}),
  \]
  then \RSTT{} derives an inhabitant $M'$ of $\ty{T}$ schematic over $\mathcal{M}$,
  \[
    \cube{\Xi} \mid \tp{\Phi} \mid \ty{\Gamma} \;\vdash_{\RSTT{}}\; M' : \ty{T},
  \]
  where $M'$ is the term concluded by the back-translated derivation $\sigma(D)$
  (\cref{def:backtranslation}).
\end{theorem}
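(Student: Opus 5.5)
The plan is to define the back-translation $\sigma$ on \dRzk{} derivations with ext-style hypotheses and then prove, by induction on the derivation $D$, a strengthened invariant that specialises to the theorem. On types, $\sigma$ inverts the encoding \eqref{eq:ext-encoding}. An ext-style shape-$\ty{\Pi}$ over a restricted codomain, $\ty{\prod}_{\pt{t} : \{\cube{I} \mid \tp{\psi}\}} \ty{A}\,[\, \tp{\phi} \mapsto a \,]$, goes back to the extension type $\exttype{\pt{t} : \cube{I} \mid \tp{\psi}}{\sigma\ty{A}}{\tp{\phi}}{\sigma a}$. The side condition $\tp{\phi} \vdash \tp{\psi}$ is exactly the ext-style entailment. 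A shape-$\ty{\Pi}$ with unrestricted codomain goes back to the empty-boundary extension type.

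Free-standing restrictions of tail types cannot be translated as types. I would therefore let $\sigma$ send a tail type $S$ to a pair: an \RSTT{} type $\lfloor S \rfloor$ obtained by erasing the free-standing restrictions, and a finite set of boundary obligations. Each obligation has the form ``under tope $\tp{\phi}$, the term (applied to the enclosing binders) is definitionally equal to $\sigma a$''.

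The invariant has two parts, both read in \RSTT{} over $\sigma$ of the context.
\begin{enumerate}
\item A typing judgement $M : S$ with $S$ a tail type back-translates to $\sigma M : \lfloor S \rfloor$, together with derivations of all of $S$'s boundary equations for $\sigma M$.
\item A subtyping judgement $\ty{A} \subtype \ty{B}$ back-translates to a meta-level operation. It turns any \RSTT{} inhabitant of $\lfloor \ty{A} \rfloor$ with $\ty{A}$'s boundary equations into an inhabitant of $\lfloor \ty{B} \rfloor$ with $\ty{B}$'s boundary equations. The operation is the identity up to $\eta$-expansion.
\end{enumerate}
Definitional equalities back-translate to \RSTT{} definitional equalities. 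Hypothesis (2) of \cref{def:ext-style-derivation} ensures that every binder and context entry lands in a genuine \RSTT{} type. For the theorem's conclusion, $\tau(\ty{T})$ is ext-style, so $\lfloor \tau\ty{T} \rfloor = \sigma\tau\ty{T} = \ty{T}$ with no leftover obligations, and $M' := \sigma M$.

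The induction cases go as follows.
\begin{itemize}
\item \textbf{Shared fragment.} These rules are verbatim.
\item \textsc{Restr-form}. This only enlarges the obligation set.
\item \textsc{Restr-intro}. The premise $x \equiv a$ under $\tp{\phi}$ becomes a new boundary obligation.
\item \textsc{Restr-comp}. It is discharged from the obligation, instantiated at the entailed tope, by weakening the tope context.
\item \textsc{$\Pi$-intro-shape} into a restricted codomain. This becomes \textsc{Ext-intro}; its premise $b \equiv a$ under $\tp{\psi}, \tp{\phi}$ comes from the inductive obligation.
\item \textsc{$\Pi$-app-shape}. This becomes \textsc{Ext-app}. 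The resulting free-standing obligation $f(\pt{s}) \equiv a(\pt{s})$ under $\tp{\phi}(\pt{s})$ is \textsc{Ext-comp}.
\item \textbf{Non-shape $\ty{\Pi}$ and shape-$\ty{\Pi}$ over tail codomains.} Obligations are carried under the binder. Introduction and application transport them by $\beta$ and $\eta$.
\item \textbf{Subtyping.}
  \begin{itemize}
  \item \textsc{S-Restr$'$} forgets an obligation.
  \item \textsc{S-Restr-Bot} adds the vacuous $\tp{\bot}$ obligation, derivable by $\tp{\bot}$-elimination.
  \item \textsc{S-Restr} derives the new equation from the old one using $\tp{\psi} \vdash \tp{\phi}$ and the premise $t \equiv u$, by transitivity.
  \item \textsc{S-Pi-Shape} $\eta$-expands, $\lambda \pt{t}.\, f(\pt{t})$, and applies the codomain operation under the binder, using the tope entailment for the application.
  \item \textsc{S-Split} uses \RSTT{}'s tope disjunction elimination for terms and equalities on the two branches.
  \end{itemize}
\item \textsc{T-Sub}. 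This applies the operation.
\end{itemize}

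The main obstacle I expect is the interaction between subtyping and the term-level identity of $\sigma$. Because \textsc{S-Pi-Shape} and \textsc{S-Split} are realised by $\eta$-expansion or case analysis, $\sigma M$ is not literally $M$. Definitional equalities in $D$ that mention subsumed terms must then still back-translate. I would handle this by proving, alongside the main induction, that each subtyping operation is definitionally the identity in \RSTT{}: by the $\eta$-rule for extension types, and by $\mathsf{rec}_{\tp{\vee}}$ uniqueness for \textsc{S-Split}. I would then let $\sigma$ commute with substitution, so that $\beta$ and boundary computation survive.

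A second delicate point is the overhanging tail types of \cref{ex:ext-style-types}, where the boundary tope of a restriction fails to entail its shape tope. There I would intersect: the obligation is recorded under $\tp{\psi} \wedge \tp{\phi}$, which is all that can be used inside the binder anyway. I would check that the rules only ever consume the obligation in that combined context.
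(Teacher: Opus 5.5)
Your architecture matches the paper's proof. In both, $\sigma$ transfers ext-style boundaries into extension types and erases free-standing ones. Boundary equations are created at \textsc{Restr-intro} and consumed at \textsc{Restr-comp} and at the agreement premise of \textsc{Ext-intro*}, and variables owe none because the hypotheses are ext-style. Subtyping becomes identities, $\eta$-reshapes and $\mathsf{rec}_{\tp{\vee}}$ case splits. The $\tau$-image conclusion leaves no obligation, since $\sigma(\tau(\ty{T})) = \ty{T}$. (A small slip: \textsc{Restr-comp} is discharged by cutting the obligation with $\tp{\Phi} \vdash \tp{\phi}$, not by weakening.)

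The step that does not go through as stated is your fix for the obstacle you correctly identify: the side lemma that every subtyping coercion is definitionally the identity.
\begin{itemize}
\item It fails in exactly the cases that make $\sigma M$ differ from $M$. These are \textsc{S-Pi-Shape} with a strictly narrower shape, or with a codomain premise by \textsc{S-Restr}, so that the two sides transfer to extension types with different boundaries. There $\sigma\ty{A}$ and $\sigma\ty{B}$ are different \RSTT{} types, so $c(f) \equiv f$ is not a well-typed \RSTT{} judgement. \textsc{Ext-$\eta$} only relates $f$ to $\lambda \pt{t}.\, f(\pt{t})$ at $f$'s own type.
\item ``$\mathsf{rec}_{\tp{\vee}}$ uniqueness'' cannot close \textsc{S-Split} either. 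Forming $\mathsf{rec}_{\tp{\vee}}(\tp{\phi} \mapsto c_{\tp{\phi}}(f), \tp{\xi} \mapsto c_{\tp{\xi}}(f))$ already requires $c_{\tp{\phi}}(f) \equiv c_{\tp{\xi}}(f) : \sigma\ty{B}$ under $\tp{\Phi}, \tp{\phi}, \tp{\xi}$. The two coercions come from two \emph{different} derivations of the same judgement $\ty{A} \subtype \ty{B}$.
\end{itemize}

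What you need instead is coherence: any two derivations of one subtyping judgement yield definitionally equal coercions. The paper proves exactly this (its coercion-coherence theorem), simultaneously with the \textsc{S-Split} clause. The induction runs on the shared $\ty{\Pi}$/$\ty{\Sigma}$/extension skeleton of the $\sigma$-types, on which every coercion has a uniform shape: $\lambda \pt{t}.\, c(g\,\pt{t})$, $\lambda y.\, c(g\,(c'\,y))$, or a componentwise pair.

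Even a repaired version of your lemma (``every coercion equals a canonical type-directed $\eta$-expansion'') has a further problem. Types related by the tope-disjunction equality rules need not share outer structure, e.g.\ $\ty{X} \equiv \mathsf{rec}_{\tp{\vee}}(\tp{\phi} \mapsto \ty{X}, \tp{\xi} \mapsto \ty{X})$. The paper therefore first case-splits on a cover that unfolds every stuck $\mathsf{rec}_{\tp{\vee}}$ type node, and only then inducts on structure; this is its skeleton-preservation lemma.
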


The proof is by a back-translation $\sigma$, a retraction of $\tau$
(\cref{lem:backtranslation-sound}), whose full
definition, by recursion on derivations, is given in \cref{app:backtranslation}
(\cref{def:backtranslation}). On types, $\sigma$ reads the encoding \eqref{eq:ext-encoding} in
reverse, transferring a restriction on a shape-$\ty{\Pi}$ codomain back into the extension type; a
free-standing restriction, which \RSTT{} cannot receive, is \emph{erased}. On derivations, the
erased boundaries are not lost: alongside each typing, $\sigma$ maintains an \RSTT{} proof of a
\emph{boundary equation} for every erased restriction, stating that the inhabitant equals its
boundary wherever the boundary tope holds. The equation is produced exactly where \dRzk{}
establishes the boundary (\textsc{Restr-intro}) and consumed exactly where \dRzk{} uses it
(\textsc{Restr-comp}); a subtyping step back-translates to the identity on terms or, where its
source and target become different extension types, to an $\eta$-expansion. This is also where the ext-style-hypotheses condition is needed.
$\sigma$ produces the boundary equation of a free-standing restriction from the derivation of its
inhabitant; a variable comes with no derivation, so a variable assumed at a type carrying a
free-standing restriction would need a boundary equation that $\sigma$ cannot produce.

The back-translation is \emph{sound}: on derivations
with ext-style hypotheses, every clause lands in \RSTT{} and maintains the boundary equations, and
$\sigma(\tau(\ty{T})) = \ty{T}$ for every \RSTT{} type $\ty{T}$ (\cref{lem:backtranslation-sound}).
This is all the theorem needs.

With soundness in hand, the proof of \cref{thm:conservativity} is by recursion on the given
derivation, applying $\sigma$ clause by clause and keeping the parameters of $\mathcal{M}$
schematic throughout. The conclusion type is a $\tau$-image, so it carries no free-standing
restriction and requires no boundary equation; every boundary equation produced along the way is
consumed inside the derivation. One
further obligation, \emph{conversion agreement} (that \dRzk{} proves no new equalities between
translated terms), holds because the only \dRzk{}-specific equality on $\tau$-types is the
restriction computation, which is exactly \RSTT{}'s extension-type computation
(\cref{prop:ext-admissible}). The proof is assembled in \cref{app:backtranslation}, and a fully
worked instance of conservativity, recovering an \RSTT{} proof of
\cite[Thm.~4.1]{RiehlShulman2017} from a \dRzk{} one, is given in \cref{app:conservativity-example}.

The ext-style-hypotheses condition is where our proof stops, and we state the general case as a
conjecture.

\begin{conjecture}[Full conservativity]
\label{conj:full-conservativity}
  \Cref{thm:conservativity} holds for every \dRzk{} derivation, without the ext-style-hypotheses
  condition (\cref{def:ext-style-derivation}).
\end{conjecture}

\section{Implementing \Rzk{}}
\label{sec:type-checking}

\Rzk{} is implemented in the Haskell programming language; its full source is open and available
\ifanonsubmission
online (the repository URL is withheld for double-blind review; an anonymised version of the
source is included in the supplementary material).
\else
at \url{https://github.com/rzk-lang/rzk}.
\fi
This section describes the type checker. Much of it is standard for a dependently typed
language, and we keep its description brief; we present in the body only the rules of specific
interest to \Rzk{}, collecting the remaining typing, subtyping, and desugaring rules in
\cref{app:rules}. After fixing the term representation (\cref{sec:syntax-rep}), the section
develops six points that we emphasise up front:
\begin{itemize}
  \item \emph{tope families} internalise the tope inclusions that \RSTTMTPL{} carries as side
    conditions: a single conjunct, inserted where a family is applied, makes the inclusion
    assumptions available to the tope solver (\cref{sec:tope-params});
  \item bidirectional typing interacts with the tope layer: every judgement
    carries the tope context, restrictions are checked under their boundary topes,
    tope-disjunction eliminators are checked by cases,
    and a contradictory context collapses every type to $\mathsf{rec}_{\tp{\bot}}$ (\cref{sec:bidirectional});
  \item computation is type-directed, as in cubical type
    theories~\cite{CCHM2016,VezzosiMortbergAbel2019}: a term cannot be normalised before it is
    checked, so normalisation is defined on elaborated terms, which cache their types
    (\cref{sec:bidirectional});
  \item subtyping is folded into definitional equality: being coercion-free, it is checked by
    the same comparison routine, directed by a variance flag, much as the \Coq{} kernel checks
    cumulativity (\cref{sec:checking-subtyping});
  \item the tope solver is a simple prototype, sound but incomplete; it has proven sufficient
    in practice (\cref{sec:automated-tope-logic});
  \item definitional equality is checked \emph{relative to the tope context}: a tope
    disjunction in the context licenses a case split on the equality judgement itself, so one
    term may compute differently in different branches; splitting is expensive, and the
    checker resorts to it only after equality fails in the undivided context
    (\cref{sec:def-eq-topes}).
\end{itemize}
Finally, \cref{sec:algorithmic-soundness} states soundness of the whole algorithm relative to
\dRzk{}: everything the checker accepts is derivable declaratively, and the solver's soundness
alone carries the guarantees of \cref{sec:translation-conservativity} over to the checker.

We generate parsing and pretty-printing with the BNF Converter\footnote{\url{https://bnfc.digitalgrammars.com}}~\cite{BNFC}
from a labelled BNF grammar; for type checking, we re-present terms in an intrinsically scoped
form (\cref{sec:syntax-rep}) that rules out variable-capture errors during substitution.

\subsection{Abstract Syntax Representation}
\label{sec:syntax-rep}

Internally, we represent \Rzk{} terms as a \emph{free scoped monad}~\cite{Kudasov2024freescoped},
a presentation of second-order abstract syntax in which the datatype provides binding and
capture-avoiding substitution once and for all: substitution is the monadic bind, and is
capture-avoiding by construction. We refer to \citet{Kudasov2024freescoped} and the $E$-unification
work it builds on~\cite{Kudasov2023_FSCD} for the construction. We type-check over an
annotated copy of this syntax, in which every node carries its checked type together with cached
weak-head and full normal forms; repeated normalisation and type lookups during checking are thus
memoised rather than recomputed. A checked declaration is a typed term paired with its type,
elaborated in a context of previously checked declarations.

\subsection{First-Class Parameters and Their Elaboration}
\label{sec:tope-params}

A \Rzk{} definition may take cubes, points, tope families, terms, and type families as
\emph{first-class parameters}, as the examples of \cref{sec:proving} do throughout (e.g.
\cref{fig:cofibration-composition}). Inside the checker all of them are ordinary entries of a
single variable context, so that one substitution lemma (\cref{lem:substitution},
\cref{app:substitution}) covers them all. Most of them mean
what such parameters mean in any proof assistant, read against the meta-theoretic parameter
layer of \cref{def:mpl}: a cube or type-family parameter stands for the corresponding parameter
of $\mathcal{M}$, and a point or term parameter lands in the cube or type context of the
judgement.\footnote{Except that a term parameter stands for a term parameter of $\mathcal{M}$
when a later type-family parameter depends on it (\cref{ex:mpl-yoneda}).}

The \Rzk{}-specific entries are the \emph{tope families}: they internalise the tope parameters
of $\mathcal{M}$ together with their inclusion assumptions, which \RSTTMTPL{} carries as side
conditions. The rest of this subsection describes the elaboration that achieves this.

\paragraph{Tope families.}
\label{sec:sugar}
Shapes appear in \Rzk{} not as a separate syntactic category but through \emph{tope families}:
functions into the single tope universe $\tp{\mathsf{TOPE}}$, written
$\tp{\phi} : \{\cube{I} \mid \tp{\psi}\} \to \tp{\mathsf{TOPE}}$.\footnote{As syntactic sugar,
the surface syntax also allows $\tp{\phi} : \tp{\psi} \to \tp{\mathsf{TOPE}}$, with the cube
left implicit, as in the parameters of \cref{fig:cofibration-composition}.}
A bare tope parameter, assumed to be included in no other, is the special case
$\tp{\phi} : \{\pt{t} : \cube{I} \mid \tp{\top}\} \to \tp{\mathsf{TOPE}}$, whose domain records
only the trivial inclusion $\tp{\phi} \vdash \tp{\top}$ (as $\tp{\chi}$ in
\cref{fig:cofibration-composition}).
\Cref{fig:tope-rules} (\cref{app:rules}) gives the typing
rules: the connectives and the atomic topes have type $\tp{\mathsf{TOPE}}$, and families are
formed and applied as ordinary functions. Note that these rules do \emph{not} enforce the domain
$\tp{\psi}$: there is a single tope universe, so an application $\tp{\phi}(\pt{t})$ has type
$\tp{\mathsf{TOPE}}$ at any point $\pt{t}$ of the cube, whether or not $\tp{\psi}(\pt{t})$ holds.
Instead, we enforce the domain at application sites, by the inserted conjunct described below.

Tope families are subtyped covariantly:
\begin{prooftree}
  \AxiomC{$\cube{\Xi}, \pt{t} : \cube{I} \mid \tp{\phi}(\pt{t}) \vdash \tp{\psi}(\pt{t})$}
  \RightLabel{\textsc{S-TopeFam}}
  \UnaryInfC{$\cube{\Xi} \mid \tp{\Phi} \vdash \{\pt{t} : \cube{I} \mid \tp{\phi}\} \to \tp{\mathsf{TOPE}} \;\subtype\; \{\pt{t} : \cube{I} \mid \tp{\psi}\} \to \tp{\mathsf{TOPE}}$}
\end{prooftree}
in contrast to the contravariant shape domains of $\ty{\Pi}$-types (\cref{sec:subtyping}); this
variance mismatch is what forces tope families to be treated separately in the subtyping routine
(\cref{sec:checking-subtyping}).\footnote{Note that the entailment in the premise is checked
without the ambient tope context $\tp{\Phi}$, matching \cref{def:mpl}: the inclusion
assumptions of a meta-theoretic parameter context are absolute entailments between the declared
topes, and an instance must satisfy them as such. A $\tp{\Phi}$-relative premise, as in
\textsc{S-Pi-Shape} (\cref{fig:subtyping}), appears sound and would accept a family whose
inclusion holds only on the current subshape; but such a family is not an instance in the sense
of \cref{def:mpl}, so the correspondence of this subsection would need a generalised notion of
instance.}

Finally, a tope family in domain position abbreviates a shape-indexed $\ty{\Pi}$-type,
\[
  (\pt{t} : \tp{\phi}) \to \ty{A} \;:\equiv\; \ty{\prod}_{\pt{t} : \{\cube{I} \mid \tp{\phi}\}} \ty{A},
\]
recovering the function over a shape of \cref{sec:review-rstt} and the surface notation of
\cref{rem:surface-notation}.

\paragraph{Applying tope families: the inserted domain conjunct.}
\label{sec:domain-conjunct}
A tope family $\tp{\phi} : \{\cube{I} \mid \tp{\psi}\} \to \tp{\mathsf{TOPE}}$ specifies a tope
only \emph{within} its domain $\tp{\psi}$, and says nothing about points outside it. The typing
rules cannot record this (there is a single tope universe, \cref{fig:tope-rules}), so we enforce
it during elaboration: wherever the tope $\tp{\phi}(\pt{t})$ is written, the checker reads its
conjunction with the domain instead,
\[
  \tp{\phi}(\pt{t}) \;\rightsquigarrow\; \tp{\psi}(\pt{t}) \tp{\wedge} \tp{\phi}(\pt{t}),
\]
which appears as the boxed conjunct of rule \textsc{$\Pi$-elim-shape} in \cref{fig:pi-shape-rules}
(\cref{app:rules}). The insertion iterates along a telescope of parameters: for the chain
$\tp{\chi} : \{\cube{I} \mid \tp{\top}\} \to \tp{\mathsf{TOPE}}$,
$\tp{\psi} : \{\cube{I} \mid \tp{\chi}\} \to \tp{\mathsf{TOPE}}$,
$\tp{\phi} : \{\cube{I} \mid \tp{\psi}\} \to \tp{\mathsf{TOPE}}$ of \cref{fig:cofibration-composition},
$\tp{\phi}(\pt{t})$ elaborates to
$\tp{\chi}(\pt{t}) \tp{\wedge} \tp{\psi}(\pt{t}) \tp{\wedge} \tp{\phi}(\pt{t})$. In general, cube
and tope-family parameters form a dependent telescope: the domain of each family is an arbitrary
tope over an arbitrary cube, both formed from the earlier parameters.

The following minimal type is well-typed \emph{only} because of the insertion:
\[
  \cube{I} : \cube{\cubeU},\quad
  \tp{\psi} : \cube{I} \to \tp{\mathsf{TOPE}},\quad
  \tp{\phi} : \{\cube{I} \mid \tp{\psi}\} \to \tp{\mathsf{TOPE}},\quad
  A : \tp{\psi} \to \ty{\mathcal{U}}
  \;\vdash\;
  (\pt{t} : \tp{\phi}) \to A\,\pt{t}.
\]

\noindent Forming $A\,\pt{t}$ in the body requires the tope $\tp{\psi}(\pt{t})$, since the domain of
the family $A$ is $\tp{\psi}$; the binder $(\pt{t} : \tp{\phi})$ puts the \emph{elaborated} tope
$\tp{\psi}(\pt{t}) \tp{\wedge} \tp{\phi}(\pt{t})$ into the context, which supplies it. Had
$\tp{\phi}$ been given the plain type $\cube{I} \to \tp{\mathsf{TOPE}}$ instead, the inserted
conjunct would be the trivial $\tp{\top}$, and \Rzk{} would reject the same body, reporting that
the local context $\tp{\phi}(\pt{t})$ does not entail the tope $\tp{\psi}(\pt{t})$.

The inserted conjuncts also keep instantiation within the domain. For example, substituting the
constantly true family $\lambda \pt{t}.\, \tp{\top}$ for $\tp{\phi}$ might appear to extend the
shape $\{\pt{t} : \cube{I} \mid \tp{\phi}(\pt{t})\}$ to all of $\cube{I}$. It does not: every
occurrence of $\tp{\phi}(\pt{t})$ was elaborated to
$\tp{\psi}(\pt{t}) \tp{\wedge} \tp{\phi}(\pt{t})$, which the substitution turns into
$\tp{\psi}(\pt{t}) \tp{\wedge} \tp{\top}$, the shape carved out by $\tp{\psi}$. Thus
instantiating a family with $\tp{\top}$ selects its whole domain and no more, as intended.

\paragraph{The correspondence with \MTPL{}}
By the correspondence at the start of this subsection, a definition over first-class parameters is
the implementation's internalised form of a \dRzk{} judgement schematic over a meta-theoretic
parameter context (\cref{def:mpl}). In
\cref{fig:cofibration-composition}, the two inclusion assumptions of \cref{ex:mpl-cofibration} have
become the domains of $\tp{\phi}$ and $\tp{\psi}$, and the whole statement is a single definition
schematic over these parameters. Specialising a schematic definition to concrete data is a
substitution into the variable context; the substitution lemma (\cref{lem:substitution},
\cref{app:substitution}) makes every instance of a checked definition admissible. The soundness
statement of \cref{sec:algorithmic-soundness} uses this correspondence.

We regard the inserted conjunct as a practical answer to a typing question, namely what the type
of a tope family should say about its domain; a cleaner design, refining the single universe into
\emph{tope subuniverses} that track the domain by typing, is sketched as future work in
\cref{sec:conclusion}.

\subsection{Bidirectional Typing}
\label{sec:bidirectional}

\Rzk{} type-checks with a bidirectional algorithm in the style of \citet{Coquand1996}
and \citet{DunfieldKrishnaswami2021}: every term is either \emph{checked} against a known type,
$\cube{\Xi} \mid \tp{\Phi} \mid \ty{\Gamma} \vdash t \Leftarrow \ty{T}$, or has a type
\emph{synthesised}, $\cube{\Xi} \mid \tp{\Phi} \mid \ty{\Gamma} \vdash t \Rightarrow \ty{T}$. The
two modes meet at the change-of-direction rule: to check a synthesising term, \Rzk{} synthesises its
type and compares it with the expected one up to definitional equality and subtyping
(\cref{sec:subtyping}). Dependent types are treated as by \citet{Coquand1996}. What is distinctive is
that every judgement also carries the tope context $\tp{\Phi}$, and side conditions
$\cube{\Xi} \mid \tp{\Phi} \vdash \tp{\psi}$ are checked by the tope solver of \cref{sec:automated-tope-logic}. This is a
constraint-oriented reading of bidirectional typing in the spirit of \citet{Pierce2000}
and \citet{DunfieldKrishnaswami2021}. The rest of this subsection goes through the points where
the cube and tope layers change this standard picture.

\paragraph{Type-directed computation.}
Extension types make computation depend on type information: a term cannot be normalised before
it is checked, so a na\"ive ``evaluate, then check'' strategy is unavailable. Concretely, applying $f : \exttype{\pt{t} : \cube{I} \mid \tp{\psi}}{\ty{A}}{\tp{\phi}}{a}$ at a
point $\pt{s}$ with $\cube{\Xi} \mid \tp{\Phi} \vdash \tp{\phi}(\pt{s})$ reduces $f(\pt{s})$ to
$a(\pt{s})$ by the restriction
computation of \cref{sec:split-ext}; whether this step applies depends on the type of $f$ and on
whether the solver derives $\tp{\phi}(\pt{s})$. The checker therefore threads types through reduction:
normalisation is defined on the elaborated terms of \cref{sec:syntax-rep}, which cache their
types, and weak-head normalisation consults the head restriction, firing it exactly when the
relevant tope is entailed. This difficulty is not specific to \Rzk{}: computation in cubical
type theories is similarly type-directed~\cite{CCHM2016,VezzosiMortbergAbel2019}.

\paragraph{Checking against a restriction.}
To check a term against a restricted type
$\ty{A}\,[\, \tp{\phi_1} \mapsto t_1, \ldots, \tp{\phi_n} \mapsto t_n \,]$,
\Rzk{} checks it against $\ty{A}$ and verifies each boundary agreement, $\cube{\Xi} \mid \tp{\Phi}, \tp{\phi_i} \mid
\ty{\Gamma} \vdash t \equiv t_i$. Each boundary is thus used only on its overlap with the ambient
context.

The multi-branch form is the surface generality deferred in \cref{rem:multibranch}: \Rzk{} checks
each branch $t_i : \ty{A}$ under $\tp{\phi_i}$ and the pairwise coherence $t_i \equiv t_j$ under
$\tp{\phi_i} \tp{\wedge} \tp{\phi_j}$, exactly the premises for assembling the branches
by tope disjunction elimination, so the multi-branch restriction declaratively reads as
\[
  \ty{A}\,[\, \tp{\phi_1} \mapsto t_1, \ldots, \tp{\phi_n} \mapsto t_n \,]
  \;:\equiv\;
  \ty{A}\,[\, \tp{\phi_1} \tp{\vee} \cdots \tp{\vee} \tp{\phi_n} \mapsto
     \mathsf{rec}_{\tp{\vee}}(\tp{\phi_1} \mapsto t_1, \ldots, \tp{\phi_n} \mapsto t_n) \,],
\]
and the soundness statement of \cref{sec:algorithmic-soundness} uses this desugaring.

Following Riehl and Shulman, a boundary tope need not be entailed by the context: reusing a shape
that is defined on the whole cube naturally makes a boundary \emph{overhang} the context (for
example, splitting $\tp{\Delta^2}$ by the global order topes $\pt{t} \tp{\leq} \pt{s}$ and
$\pt{s} \tp{\leq} \pt{t}$), and such a
boundary is accepted with a non-fatal warning. We reject one degenerate case: a boundary
disjoint from a \emph{consistent} context (its conjunction with $\tp{\Phi}$ entails $\tp{\bot}$) makes the
restriction vacuous everywhere and is likely a mistake; a user who really intends it can write the
contradictory conjunction explicitly. When the context is itself contradictory
($\cube{\Xi} \mid \tp{\Phi} \vdash \tp{\bot}$), every type collapses to
$\mathsf{rec}_{\tp{\bot}}$ (below), so the check is skipped.

\paragraph{Tope disjunction.}
Tope-disjunction elimination
$\mathsf{rec}_{\tp{\vee}}(\tp{\phi_1} \mapsto a_1, \ldots, \tp{\phi_n} \mapsto a_n)$
defines a term by cases over a covering family $\tp{\bigvee_i \phi_i}$ of topes that agree on
overlaps. It is checked by entering each $\tp{\phi_i}$ in turn (extending $\tp{\Phi}$), checking
the branch $a_i$, and verifying coherence $a_i \equiv a_j$ under $\tp{\phi_i} \tp{\wedge}
\tp{\phi_j}$. Definitional equality is therefore \emph{relative to $\tp{\Phi}$}: one term may
reduce differently in different tope branches. This is how $\mathsf{rec}_{\tp{\vee}}$ meets
extension types, treated in full in \cref{sec:def-eq-topes}.

\paragraph{$\eta$-expansion and $\mathsf{rec}_{\tp{\bot}}$.}
Checking respects $\eta$ for functions, pairs, and $\mathsf{Unit}$: rather than expand eagerly, the
checker performs at most one top-level $\eta$-expansion on demand. When the tope context is
contradictory ($\cube{\Xi} \mid \tp{\Phi} \vdash \tp{\bot}$), every type collapses to
$\mathsf{rec}_{\tp{\bot}}$, the eliminator for $\tp{\bot}$, which inhabits every type and to
which every term is then definitionally equal. A term checked against $\mathsf{rec}_{\tp{\bot}}$
must nonetheless be well-typed in its own right,
so that ill-formed terms are not silently admitted under an absurd hypothesis.

\Rzk{} normalises by direct reduction over the abstract syntax of \cref{sec:syntax-rep}, which
keeps the checker close to the rules of this section but is slower than \emph{normalisation by
evaluation}~\cite{AbelOhmanVezzosi2018}; an NbE backend, where the non-trivial part is the
type-directed restriction computation, is discussed as future work in \cref{sec:conclusion}.

\subsection{Checking Subtyping}
\label{sec:checking-subtyping}

Subtyping (\cref{sec:subtyping}) is not a separate judgement in the implementation. Since it is
coercion-free, we check it by
the same routine as definitional equality, which carries a variance flag (covariant by default) in
the context: with the flag set, the routine accepts a subtype where a supertype is expected, and
definitional equality is its symmetric, variance-free case. The flag is flipped
(\textsf{switchVariance}) at negative positions, such as the domain of a $\Pi$-type, and is otherwise
threaded unchanged through the structural comparison. There is precedent for folding subsumptive subtyping into
conversion: the \Coq{} kernel decides cumulativity by its conversion
routine, parameterised by a conversion-versus-cumulativity flag, as formalised in
MetaCoq~\cite{SozeauBoulierForsterTabareauWinterhalter2020}. Two things are specific to \Rzk{}:
\begin{enumerate*}
  \item where the
flag flips (cumulativity never does, its domains being invariant) and
  \item which formers are
asymmetric.
\end{enumerate*}

Subtyping is asymmetric at only three formers: restrictions (\textsc{S-Restr} and
\textsc{S-Restr$'$}, \cref{fig:subtyping}), tope-family types, whose domains are covariant
(\textsc{S-TopeFam}, \cref{fig:tope-rules}), and $\ty{\Pi}$-types, whose cube and shape domains
are contravariant (\textsc{S-Pi-Shape}, \cref{fig:subtyping}).
(In a contradictory tope context every two terms are already definitionally equal, so
$\mathsf{rec}_{\tp{\bot}}$ needs no subtyping rule of its own.) Everywhere else the routine is a structural
congruence carrying the variance flag, with the same shape as the corresponding
definitional-equality rule. In particular, the term-level cases ($\textsc{Lambda}$, $\textsc{Refl}$,
$\textsc{App}$, and the like) are merely congruences for definitional equality (\cref{app:rules}):
there is no genuine subtyping of terms, which is also why subtyping adds no terms over \RSTT{}
(\cref{thm:conservativity}). The one place where the variance flag is not simply threaded through
such a congruence is the argument of an application: the head of a neutral application is opaque,
so its variance in the argument is unknown, and $\textsc{App}$ therefore compares arguments
invariantly (the flag is set to its invariant value) even under subtyping. This is
sound but conservative; it never arises for the \sHoTT{} library.

Because subtyping is checked by the equality routine, the case analyses of
\cref{sec:def-eq-topes} also run in subtyping mode: a stuck tope disjunction is compared
branch-wise, and the routine falls back to splitting the context when the undivided comparison
fails. Declaratively, both are instances of the case-split rule \textsc{S-Split}
(\cref{fig:subtyping}).

\subsection{Automated Reasoning for Tope Logic}
\label{sec:automated-tope-logic}

Type checking repeatedly asks whether a tope $\tp{\psi}$ follows from the topes currently
assumed: shape inclusions, the firing of a restriction (\cref{sec:bidirectional}), and the
coverage of a tope disjunction all reduce to an \emph{entailment} query
$\cube{\Xi} \mid \tp{\Phi} \vdash \tp{\psi}$. \Rzk{} checks
these automatically, so that, as in \RSTT{}, the user never proves an inclusion by hand.

\paragraph{The fragment.}
Topes form an intuitionistic propositional logic over two atomic predicates on cube points,
equality $\pt{s} \tp{\equiv} \pt{t}$ and the interval order $\pt{s} \tp{\leq} \pt{t}$, closed
under $\tp{\top}$, $\tp{\bot}$, $\tp{\wedge}$, and $\tp{\vee}$, subject to the strict-interval
axioms~\cite[\S3.1]{RiehlShulman2017} (a total order with least $\pt{0}$ and greatest $\pt{1}$,
distinct endpoints, and comparability
$\pt{s} \tp{\leq} \pt{t} \tp{\vee} \pt{t} \tp{\leq} \pt{s}$). We expect
entailment in this fragment to be decidable, since only finitely many cube points occur within a
query; we have no proof, and leave it to future work (\cref{sec:conclusion}).

\paragraph{The algorithm.}
To decide $\cube{\Xi} \mid \tp{\Phi} \vdash \tp{\psi}$ the solver proceeds in three steps. (i)~It puts the
assumptions $\tp{\Phi}$
into disjunctive normal form; the goal must then be derivable in \emph{every} resulting
conjunctive case. (ii)~In each case it \emph{saturates} the assumptions with their consequences
under the structural rules of the theory (the equivalence and order laws, substitution of equals,
and boundary facts such as $\pt{x} \tp{\leq} \pt{0} \vdash \pt{x} \tp{\equiv} \pt{0}$), up to a fixed bound. (iii)~It then
matches the goal against the saturated context, splitting conjunctions, solving atomic goals by
reflexivity, membership, or the order laws, and attempting a disjunctive goal on each side and,
failing that, by a case split on a \emph{valid} disjunction supplied by the theory.

The only disjunction the solver introduces for such a split is the \emph{linearity} of the
interval, $(\pt{x} \tp{\leq} \pt{y}) \tp{\vee} (\pt{y} \tp{\leq} \pt{x})$, which is an axiom
of the tope theory (\cref{sec:review-rstt}), so the split is sound. Importantly, it does not treat the boundary
$(\pt{x} \tp{\equiv} \pt{0}) \tp{\vee} (\pt{x} \tp{\equiv} \pt{1})$ as covering: the directed
interval is linearly ordered, not a two-point set.

\begin{lemma}[Soundness of the tope solver]
\label{lem:solver-sound}
  If the solver reports $\cube{\Xi} \mid \tp{\Phi} \vdash \tp{\psi}$, then $\tp{\Phi}$ entails
  $\tp{\psi}$ in \RSTT{}'s tope logic, extended with the directed-interval
  axioms~\cite[\S3.1]{RiehlShulman2017}. Equivalently, the solver derives no
  entailment that \RSTT{} does not.
\end{lemma}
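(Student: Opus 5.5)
We prove the lemma by induction on the solver's run, showing that each of its three steps only ever replaces the query by queries whose derivability in \RSTT{}'s tope logic implies that of the original. The invariant we maintain is this: at every point the solver holds a finite list of pending subgoals $\cube{\Xi} \mid \tp{\Phi_k} \vdash \tp{\psi_k}$, and if each of them is derivable, then so is the original entailment $\cube{\Xi} \mid \tp{\Phi} \vdash \tp{\psi}$. A successful run closes every subgoal by an axiom-level step, so the original entailment follows.

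For step~(i), we show that a context is interderivable with the disjunction of its DNF cases. This uses distributivity of $\tp{\wedge}$ over $\tp{\vee}$, which holds in intuitionistic propositional logic. Disjunction elimination on the left then reduces the goal to one goal per conjunctive case. Cases whose conjunction contains $\tp{\bot}$ are closed by $\tp{\bot}$-elimination.

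For step~(ii), we argue that saturation adds only consequences. We go through the finite list of saturation rules the solver uses and check that each is derivable in \RSTT{}'s tope logic with the interval axioms of \cref{sec:review-rstt}. These rules are reflexivity, symmetry and transitivity of $\tp{\equiv}$; substitution of equals into atoms; transitivity and antisymmetry of $\tp{\leq}$; the boundary facts such as $\pt{x} \tp{\leq} \pt{0} \vdash \pt{x} \tp{\equiv} \pt{0}$, which follow from $\pt{0} \tp{\leq} \pt{x}$ and antisymmetry; and the contradiction $\pt{0} \tp{\equiv} \pt{1} \vdash \tp{\bot}$. By the cut rule, adding a derived atom to the context preserves derivability. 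The fixed bound on saturation affects only completeness, not soundness.

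For step~(iii), we treat each matching clause as a backwards application of a rule. Splitting a conjunctive goal is $\tp{\wedge}$-introduction. Trying a disjunctive goal on one side is $\tp{\vee}$-introduction. Membership is the hypothesis rule. Reflexivity and the order laws are axioms. The case split is $\tp{\vee}$-elimination on the axiom $\pt{x} \tp{\leq} \pt{y} \tp{\vee} \pt{y} \tp{\leq} \pt{x}$, which the paper stresses is the only disjunction the solver introduces. Here we explicitly note that the boundary disjunction $\pt{x} \tp{\equiv} \pt{0} \tp{\vee} \pt{x} \tp{\equiv} \pt{1}$ is never used as a cover, since it is not valid.

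The main obstacle is step~(ii), and specifically the faithfulness of the implemented saturation rules. The proof is only as good as the audit showing that each rule of the code really is one of the derivable principles above. We therefore present the rule table explicitly, give a one-line \RSTT{} derivation for each rule (antisymmetry and the boundary facts are the nontrivial ones, obtained from the total-order axioms of \cite[\S3.1]{RiehlShulman2017}), and then conclude by the induction on the run described above. Cube parameters and tope-family applications are treated as opaque atoms and opaque points. Their elaborated domain conjuncts (\cref{sec:tope-params}) are then ordinary hypotheses, so schematic queries need no extra argument.
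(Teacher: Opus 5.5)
Your proposal is correct and follows essentially the same route as the paper's proof. It uses the same three-step audit: DNF normalisation discharged by $\vee$-elimination, saturation by rules each valid from the directed-interval axioms, and goal matching whose only introduced case split is interval linearity, with the saturation bound and failure fallbacks affecting only completeness. The paper additionally notes two small items. First, normalisation also splits pair equalities $(x,y) \equiv (x',y')$ into componentwise equalities, again an equivalence. Second, both solver entry points, the monadic one and the pure $\bot$-collapse check, run the same core, so one argument covers both.
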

The proof follows the three steps of the algorithm, checking that each is a valid inference from
the directed-interval axioms; it is given in \cref{app:solver-soundness}.

\paragraph{Incompleteness and practice.}
The solver is incomplete: saturation is bounded, the only case split it attempts is the
linearity lemma, and otherwise it fails. It began as a prototype, with simple rules whose
soundness is easy to establish, and it has proven sufficient in practice: real developments
generate only small tope problems, which the na\"ive implementation dispatches quickly
(\cref{sec:validation-solver}). A purpose-built decision procedure, sound and complete for the
fragment, is discussed as future work in \cref{sec:conclusion}.

\subsection{Checking Definitional Equality under Tope Disjunctions}
\label{sec:def-eq-topes}

Definitional equality in \Rzk{} is checked \emph{relative to the tope context} $\tp{\Phi}$
(\cref{sec:bidirectional}); when $\tp{\Phi}$ (or one of the terms being compared) mentions a
disjunction, the checker reasons by cases. Two rules capture this.

First, a term is equal to a tope-disjunction eliminator branch-wise. Writing
$\mathsf{rec}_{\tp{\vee}}(\tp{\phi_1} \mapsto a_1, \ldots, \tp{\phi_n} \mapsto a_n)$ for the
eliminator over a cover $\tp{\bigvee_i \phi_i}$ (the cover being established when the eliminator
is formed),
\begin{prooftree}
  \AxiomC{$\cube{\Xi} \mid \tp{\Phi}, \tp{\phi_i} \mid \ty{\Gamma} \vdash t \equiv a_i : \ty{T}$ \quad for each $i = 1, \ldots, n$}
  \UnaryInfC{$\cube{\Xi} \mid \tp{\Phi} \mid \ty{\Gamma} \vdash t \equiv \mathsf{rec}_{\tp{\vee}}(\tp{\phi_1} \mapsto a_1, \ldots, \tp{\phi_n} \mapsto a_n) : \ty{T}$}
\end{prooftree}
This is exactly how a
section of a restricted type is recognised as equal to its boundary value wherever the boundary tope
holds (\cref{sec:split-ext}).

Second, and more generally, a disjunction in the \emph{context} licenses a case split on the
equality judgement itself:
\begin{prooftree}
  \AxiomC{$\cube{\Xi} \mid \tp{\Phi} \vdash \tp{\phi \vee \xi}$}
  \AxiomC{$\cube{\Xi} \mid \tp{\Phi}, \tp{\phi} \mid \ty{\Gamma} \vdash t \equiv s : \ty{T}$}
  \AxiomC{$\cube{\Xi} \mid \tp{\Phi}, \tp{\xi} \mid \ty{\Gamma} \vdash t \equiv s : \ty{T}$}
  \TrinaryInfC{$\cube{\Xi} \mid \tp{\Phi} \mid \ty{\Gamma} \vdash t \equiv s : \ty{T}$}
\end{prooftree}
Iterating this rule reduces $\tp{\Phi}$ to a disjunction of conjunctive sub-contexts (its
disjunctive normal form, \cref{sec:automated-tope-logic}), and two terms are equal under
$\tp{\Phi}$ exactly when
they are equal in every sub-context. As each disjunct holds in its case, this is sound
disjunction elimination, and it is what the conversion step of conservativity
(\cref{thm:conservativity}) appeals to.

Splitting the context is expensive, so we first attempt the equality in the undivided context
and fall back to the case analysis only when that attempt fails.

\subsection{Soundness Relative to Declarative \Rzk{}}
\label{sec:algorithmic-soundness}

The intended guarantee of the algorithm is \emph{soundness} relative to \dRzk{}: everything the
checker accepts is derivable declaratively. Each preceding subsection has supplied one
ingredient. The correspondence of \cref{sec:tope-params} says which \dRzk{} judgement a checker
run establishes: the judgement schematic over the meta-theoretic parameter context that the
first-class parameters elaborate to, with tope-family applications carrying the inserted
conjunct (\cref{sec:domain-conjunct}) and multi-branch restrictions replaced by their
single-branch desugaring (\cref{sec:bidirectional}). Every tope side condition is sound by
\cref{lem:solver-sound}, and the equality and subtyping steps are those of
\cref{sec:checking-subtyping,sec:def-eq-topes}.

\begin{proposition}[Soundness of the algorithm]
\label{prop:algorithmic-soundness}
  Suppose the checker derives
  $\cube{\Xi} \mid \tp{\Phi} \mid \ty{\Gamma} \vdash t \Leftarrow \ty{T}$ or
  $\cube{\Xi} \mid \tp{\Phi} \mid \ty{\Gamma} \vdash t \Rightarrow \ty{T}$
  (\cref{app:rules}). Then \dRzk{} derives
  $\cube{\Xi'} \mid \tp{\Phi} \mid \ty{\Gamma'} \vdash t : \ty{T}$, schematic over the
  meta-theoretic parameter context $\mathcal{M}$ to which the first-class parameters among the
  entries of $\cube{\Xi}$ and $\ty{\Gamma}$ elaborate (\cref{sec:tope-params}), and where
  $\cube{\Xi'}$ and $\ty{\Gamma'}$ collect the remaining entries.
\end{proposition}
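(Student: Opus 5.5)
We would prove the proposition by simultaneous induction on the algorithmic derivations of \cref{app:rules}. The induction covers the checking and synthesis judgements together with the auxiliary judgements they invoke: tope entailment, the variance-flagged comparison routine of \cref{sec:checking-subtyping}, and weak-head and full normalisation on elaborated terms.

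The statement is strengthened so that each auxiliary judgement has a declarative reading as well:
\begin{itemize}
  \item a solver success gives an \RSTT{} tope entailment, by \cref{lem:solver-sound};
  \item a comparison run in covariant mode gives $\ty{A} \subtype \ty{B}$, a run in contravariant mode gives $\ty{B} \subtype \ty{A}$, and an invariant run gives $\ty{A} \equiv \ty{B}$ (hence both subtypings);
  \item a reduction step on a checked term gives a \dRzk{} definitional equality at its cached type.
\end{itemize}
Before the induction we fix the elaboration map $e$ from checker contexts to pairs $(\mathcal{M}, \cube{\Xi'} \mid \tp{\Phi} \mid \ty{\Gamma'})$, as in \cref{sec:tope-params}. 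Cube and type-family parameters go to $\mathcal{M}$. A tope family $\tp{\phi} : \{\cube{I} \mid \tp{\psi}\} \to \tp{\mathsf{TOPE}}$ goes to a tope parameter together with the inclusion assumption $\tp{\phi} \vdash \tp{\psi}$, which is available because every use of $\tp{\phi}(\pt{t})$ carries the inserted conjunct $\tp{\psi}(\pt{t}) \tp{\wedge} \tp{\phi}(\pt{t})$. Term parameters on which a later family depends go to $\mathcal{M}$, and all remaining entries go to $\cube{\Xi'}$ or $\ty{\Gamma'}$. Multi-branch restrictions are replaced by their single-branch desugaring throughout.

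The induction then proceeds rule by rule.
\begin{itemize}
  \item \textbf{Structural Martin-L\"of rules.} These map to the identical \dRzk{} rules.
  \item \textbf{Shape-$\ty{\Pi}$ rules.} These map to \textsc{$\Pi$-intro-shape} and \textsc{$\Pi$-app-shape}. Their tope side conditions come from the solver, and the boxed domain conjunct is justified by the corresponding inclusion assumption in $\mathcal{M}$.
  \item \textbf{Checking against a restriction.} This maps to \textsc{Restr-intro}, with the branch coherences assembling the $\mathsf{rec}_{\tp{\vee}}$ boundary. For an overhanging boundary, \textsc{Restr-form} imposes no entailment, so overhang is not an issue.
  \item \textbf{Change of direction.} This is \textsc{T-Sub} applied to the comparison result, with the convention that the synthesised type is checked in covariant mode against the expected one.
  \item \textbf{Firing a head restriction during weak-head normalisation.} This is \textsc{Restr-comp}, applied after re-typing the head at its restricted type, which is available from the cached type.
  \item \textbf{Tope-context splitting in equality.} This is the case-split rule of \cref{sec:def-eq-topes}, or \textsc{S-Split} in subtyping mode.
  \item \textbf{Branch-wise comparison with $\mathsf{rec}_{\tp{\vee}}$.} This follows from the $\tp{\vee}$-computation rules under each $\tp{\phi_i}$.
  \item \textbf{Contradictory contexts.} The collapse to $\mathsf{rec}_{\tp{\bot}}$ is the $\tp{\bot}$-elimination rule. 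The side check that the term is well-typed in its own right supplies the typing premise.
\end{itemize}
In the comparison routine, \textsf{switchVariance} at domain positions matches the contravariant premises of the standard $\ty{\Pi}$ rule and of \textsc{S-Pi-Shape}. The covariant \textsc{S-TopeFam} case uses an absolute entailment, which matches \cref{def:mpl}. The invariant argument comparison in \textsc{App} yields an equality, which is stronger than what is needed.

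We expect the main obstacle to be the type-directed reduction. Normalisation runs on elaborated terms whose cached types were computed in possibly different tope contexts, for example under a case split. We would therefore prove a side lemma: if a term is checked in $\tp{\Phi}$, its cached type is a valid declarative type in every extension $\tp{\Phi}, \tp{\phi}$. This is a weakening of tope contexts for \dRzk{}, which should be admissible by a routine induction.

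A second delicate point is the on-demand $\eta$-expansion. It must be justified by the $\eta$-rules only for $\ty{\Pi}$, $\ty{\Sigma}$ and $\mathsf{Unit}$. In view of \cref{rem:eta-contraction}, we must verify that the checker never $\eta$-contracts a subject into a restricted type.

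Finally, instances of schematic definitions are handled by the substitution lemma (\cref{lem:substitution}).
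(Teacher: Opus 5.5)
Your proposal is correct and follows the paper's own argument. That argument is an induction on the algorithmic derivation in which structural rules erase to (short derivations in) the declarative rules, an accepting run of the variance-flagged comparison becomes a derivation from the subtyping and equality rules (with the case analyses mapped to \textsc{S-Split}), each reduction step instantiates a declarative equality rule entering through \textsc{S-Conv}, and every tope premise is discharged by \cref{lem:solver-sound}. Your explicit strengthening for the auxiliary judgements, the tope-context weakening lemma for cached types, and the $\eta$ caveat refine what the paper's sketch leaves implicit rather than take a different route.
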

The proof is by induction on the algorithmic derivation; it is sketched in
\cref{app:algorithmic-soundness-proof}, in three families of steps over the rule listing of
\cref{app:rules}.

We do not claim the converse: the algorithm is incomplete, both in the solver
(\cref{sec:automated-tope-logic}) and in its bounded reduction strategy, and normalisation for
\RSTT{} is not known to terminate in general~\cite{WeinbergerAhrensBuchholtzNorth2022}. A full
metatheoretic analysis, including a machine-checked account of
\cref{prop:algorithmic-soundness}, remains open.

To compose with conservativity (\cref{thm:conservativity}), the obtained \dRzk{} derivation
must moreover have ext-style hypotheses (\cref{def:ext-style-derivation}). We impose all
three conditions as a side condition: parameter and binder types must be ext-style
(\cref{def:ext-style}), and every concluded type must be a tail type. The types the checker
generates on its own are tail types: a restriction it introduces sits on a shape-$\ty{\Pi}$
codomain or at the root (\textsc{$\Pi$-elim-shape}, \textsc{Restr-intro}). However, a
user-written type can reach conclusion position, e.g.\ through a type ascription
(\code{t as T}) or a type argument of a lemma, and condition~(3) constrains these.
Similarly, ext-style requires each boundary tope to entail its shape tope
(\cref{def:ext-style}). A user-written boundary can overhang it, accepted with a warning
(\cref{sec:bidirectional}); the side condition excludes such a boundary from parameter and
binder types, while as a free-standing restriction it may still be concluded.
The side condition is mild in practice, holding throughout the \sHoTT{} library
(\cref{sec:conservativity}). The composite
is the end-to-end guarantee of this paper.

\begin{corollary}
\label{cor:end-to-end}
  Let $\ty{T}$ be an \RSTT{} type schematic over a meta-theoretic parameter context $\mathcal{M}$,
  and suppose the checker accepts a definition of type $\tau(\ty{T})$ over parameters elaborating
  to $\tau(\mathcal{M})$, satisfying the side condition above: all parameter and binder types
  ext-style (\cref{def:ext-style}), and every concluded type a tail type. Then
  \RSTT{} derives an inhabitant of $\ty{T}$ schematic over $\mathcal{M}$.
\end{corollary}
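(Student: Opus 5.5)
The corollary is a composition of \cref{prop:algorithmic-soundness} with \cref{thm:conservativity}, and I would prove it in exactly that order. First, from the checker's acceptance of the definition, i.e.\ a checking judgement $\cube{\Xi} \mid \tp{\Phi} \mid \ty{\Gamma} \vdash t \Leftarrow \tau(\ty{T})$ whose first-class parameters elaborate to $\tau(\mathcal{M})$, \cref{prop:algorithmic-soundness} yields a \dRzk{} derivation $D$ of $\cube{\Xi'} \mid \tp{\Phi} \mid \ty{\Gamma'} \vdash t : \tau(\ty{T})$, schematic over $\tau(\mathcal{M})$. The judgement this run establishes carries the inserted domain conjuncts and the single-branch desugaring of multi-branch restrictions, as described in \cref{sec:tope-params,sec:bidirectional}. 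The remaining entries $\cube{\Xi'}$, $\tp{\Phi}$, $\ty{\Gamma'}$ should then be identified with $\tau$ of the \RSTT{} contexts of $\ty{T}$. Since $\tau$ is the identity on cubes and topes, this reduces to checking that the elaborated type context is the entrywise translation.

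Second, I would verify that $D$ has ext-style hypotheses in the sense of \cref{def:ext-style-derivation}, so that \cref{thm:conservativity} applies. Condition~(1) holds because the context types are $\tau$-images of \RSTT{} types, and these are ext-style by the remark following \cref{def:ext-style}. Conditions~(2) and~(3) are imposed directly by the side condition for binders and parameters written by the user. Beyond that, one must check that the proof of \cref{prop:algorithmic-soundness} introduces no new binders and concludes no non-tail types. This needs a short inspection of its three families of steps. Restrictions generated by the checker appear only via \textsc{$\Pi$-elim-shape} (root position, hence tail) and \textsc{Restr-intro}, and the desugaring of $\mathsf{rec}_{\tp{\vee}}$ and multi-branch restrictions introduces no binder at a non-ext-style type. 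I expect this bookkeeping to be the main obstacle. The difficulty is that the soundness proposition is only sketched, so I would argue rule by rule over the listing in \cref{app:rules} that each declarative step it emits preserves the three conditions. The extra tope conjuncts do not affect ext-styleness, since weakening a shape tope by a conjunct still entails the boundary inclusion, or leaves it unchanged once elaborated on both sides.

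Finally, \cref{thm:conservativity} applied to $\mathcal{J} = (\cube{\Xi} \mid \tp{\Phi} \mid \ty{\Gamma} \vdash \ty{T}\;\ty{\mathsf{type}})$ and $D$ gives an \RSTT{} term $M'$, concluded by $\sigma(D)$, with $\cube{\Xi} \mid \tp{\Phi} \mid \ty{\Gamma} \vdash_{\RSTT{}} M' : \ty{T}$ schematic over $\mathcal{M}$. One point needs care: the corollary's hypothesis is phrased over $\tau(\mathcal{M})$. Because $\tau$ acts on the parameter layer only through the types of term and type-family parameters, and $\sigma\circ\tau$ is the identity (\cref{lem:backtranslation-sound}), the schematic parameters are recovered as $\mathcal{M}$ itself, and instances correspond as in \cref{def:mpl}.
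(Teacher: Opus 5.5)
Your proposal is correct and matches the paper's proof, which consists of composing \cref{prop:algorithmic-soundness} with \cref{thm:conservativity}. Your extra bookkeeping is harmless but largely redundant: re-checking the ext-style hypotheses rule by rule is unnecessary because the paper imposes those conditions directly on the derivation produced by soundness, and recovering $\mathcal{M}$ from $\tau(\mathcal{M})$ is unnecessary because \cref{thm:conservativity} is already stated for derivations schematic over $\tau(\mathcal{M})$.
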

\begin{proof}
  Compose \cref{prop:algorithmic-soundness} with \cref{thm:conservativity}.
\end{proof}

\section{Validation}
\label{sec:validation}

\Rzk{} is intended both as a faithful implementation of \RSTT{}
(\cref{thm:faithfulness,thm:conservativity}) and as a system one can use today: we evaluate the
scale of the library it supports and the performance of the tope solver in practice, and survey
the supporting tooling.

\subsection{The \sHoTT{} Library}
\label{sec:validation-shott}

The most direct evidence that \Rzk{} suffices for non-trivial synthetic
$(\infty,1)$-category theory is the in-progress \sHoTT{} library~\cite{shott}. At the paper
commit, it consists of 25 literate-\Rzk{} modules totalling 25\,095 lines and 1\,471 top-level
declarations: on top of a HoTT base layer (paths, equivalences, $\Sigma$-types, fibres,
propositions), it develops Segal and Rezk types, discrete and covariant families, the Yoneda lemma
for $\infty$-categories~\cite{KudasovRiehlWeinberger2024}, adjunctions, cocartesian fibrations,
and (co)limits.

\subsection{Tooling}
\label{sec:validation-tooling}

\Rzk{} compiles to JavaScript with GHCJS and
runs entirely in the browser as an online playground, letting users write, check, and share
developments with no local installation. This has proven convenient for introductory teaching.
\Rzk{} also implements the
Language Server Protocol (\texttt{rzk lsp}), and provides a Visual Studio Code
extension~\cite{AbounegmKudasov2023lsp};
the server rechecks incrementally across a multi-file development, though richer interaction
remains work in progress. Finally, \Rzk{} accepts \emph{literate} Markdown sources
(\code{.rzk.md}), type-checking fenced code blocks in place; the \sHoTT{} library
(\cref{sec:validation-shott}) is written this way.

\subsection{Tope-Solver Performance}
\label{sec:validation-solver}

In this section we show that the incompleteness of the tope solver is not a major limitation in practice.
We measure the solver's \emph{aggregate} cost with GHC's cost-centre profiler, which attributes time to functions
reliably, and separately log each query as it is handled, recording its result and its
\emph{search steps}: the number of entries of the disjunction-search core \texttt{solveRHS},
including recursion. The step count is a machine-independent difficulty measure, and it is
deterministic for a pinned checker and corpus.
We typecheck the full \sHoTT{} library with \Rzk{}~v0.7.8 compiled by GHC~9.10.2
(Stackage LTS~24.4, the resolver pinned in the shipped \texttt{stack.yaml}).

\paragraph{Scale and difficulty.}
Over the run the solver handles 25\,615 entailment queries, and the profiler attributes
about 10\% of the typecheck to the solver and all its callees (inherited cost-centre time of the
outermost solver calls; \cref{tab:solver-perf}). The difficulty distribution is extremely skewed:
59\% of queries are dispatched in a single search step, 99\% within 35 steps, and only 82 queries
(0.3\%) ever use the linearity case-split. The two hardest queries alone hold 22\% of all search
steps, and the profiler confirms the concentration in time: the typical query is trivial, and the
aggregate cost sits in a handful of hard coverage checks. At the same time, a 10\% share for a
solver that mostly answers trivial queries indicates an inefficient implementation; together with
a normalisation-by-evaluation backend (\cref{sec:conclusion}), the solver is among the first
candidates for optimisation.

\paragraph{The hardest query.}
The hardest entailment (19\,507 search steps, including 825 linearity case-splits) arises in the
definition \texttt{h\textasciicircum} of \texttt{05-segal-types.rzk.md}, where checking a
six-branch $\mathsf{rec}_\lor$ requires verifying that the branch-conjunctions \emph{cover} a
pushout-product shape $\Lambda^3_1 \boxtimes \Lambda^2_1 \subseteq \Delta^3 \times \Delta^2$; the
goal is a disjunction of $\leq$-inequalities over eight projections of a five-cube point.

\paragraph{Failure modes.}
The depth-bounded saturation (\cref{sec:automated-tope-logic}), whose bound is a cap on the number
of derived topes in \texttt{generateTopes}, is never reached on \sHoTT{}: 0 saturation-cap
hits. 9\,125 queries (35.6\%) return False, but these are legitimate negative answers,
obtained when the checker probes a side condition and takes an alternative path. The overall
typecheck succeeds, and no \sHoTT{} development required a manual workaround for the solver.

\begin{table}[h]
  \small
  \centering
  \begin{tabular}{lr}
    \hline
    Number of solver queries                       & $25\,615$ \\
    Solver share of typecheck time (profiler)      & $\approx 10\%$ \\
    Search steps per query: median / p99 / max     & 1 / 35 / $19\,507$ \\
    Queries solved in a single step                & $15\,133$ ($59.1\%$) \\
    Queries returning False                        & $9\,125$ ($35.6\%$) \\
    Saturation-cap hits                            & 0    \\
    \hline
  \end{tabular}
  \caption{Tope-solver behaviour while typechecking the full \sHoTT{} library with the pinned
  toolchain of \cref{sec:validation-solver}. The solver share is from a GHC profiling run; the
  remaining rows are from the per-query log. The analysis scripts, the raw trace, and the
  reproduction procedure are in the \texttt{evaluation/} directory of \suppmat{}.}
  \label{tab:solver-perf}
\end{table}

\section{Related Work}
\label{sec:rel-work}

\paragraph*{Type Theories for $\infty$-Categories}

\citet{RiehlShulman2017} develop \RSTT{}~[\S3] as a special case of a general type theory with
shapes and extension types~[\S2], giving basic definitions and results within \RSTT{} and
sketching a model construction; we review it in \cref{sec:review-rstt}.
The sketch leaves open the strict stability of extension types under substitution;
\citet{Weinberger2026strict} proves it, via Voevodsky's splitting method, so that the system has
semantics in simplicial objects of an $\infty$-topos. That result is the semantic counterpart of
the coherence concerns that our coercion-free treatment addresses syntactically
(\cref{sec:subtyping}, \cref{app:coherence}).

Some mathematics has been formalised in \RSTT{}, but not necessarily implemented yet in \Rzk{}:
\citet{DBLP:journals/corr/abs-2202-13132} and \citet{BuchholtzWeinberger2022} develop
(co)cartesian fibrations and prove Moen's Theorem, \citet{weinberger_twosided} develops two-sided
cartesian fibrations, and \citet{martinez2023limits} develops the theory of (co)limits.

\citet{myers2023commuting} study type theories with several \emph{commuting} cohesive
modalities, part of the broader effort to capture geometric and directed structure
type-theoretically. \citet{WeaverLicata2020} construct a model of \emph{directed univalence} in
bicubical sets; their use of monotone maps out of the directed interval is closely related to the
order structure on $\mathbbm{2}$ that \Rzk{}'s tope solver relies on.
\citet{cisinski-nguyen-walde} pursue a directed type theory from a semantic, higher-categorical
perspective.

\paragraph*{Technology for Proof Assistants}

\Rzk{}'s checker is built from standard proof-assistant technology, adapted to the shape layer. Its
core is \emph{bidirectional} type checking~\cite{Coquand1996, Pierce2000, DunfieldKrishnaswami2021}
(\cref{sec:type-checking}); what is non-standard is that, because extension types make computation
depend on type information, reduction itself is type-directed. For conversion, \Rzk{} normalises terms
eagerly, rather than via the normalisation-by-evaluation used to establish decidability of conversion
for Martin-L\"of type theory~\cite{AbelOhmanVezzosi2018}; for \RSTT{}, normalisation is
not known to terminate in general~\cite{WeinbergerAhrensBuchholtzNorth2022}. Implementing type
theories whose contexts carry extra judgemental structure is a recurring theme:
\citet{GratznerSterlingBirkedal2019} describe implementing a modal dependent type theory, where,
as with \Rzk{}'s cube and tope layers, the algorithm must track and respect that structure.

\paragraph*{Subtyping in Dependent Type Theories}

\dRzk{}'s subtyping is \emph{subsumptive}: \textsc{T-Sub} reuses the term unchanged
(\cref{sec:subtyping}). The main alternative is the \emph{coercive subtyping} of
\citet{Luo1999}, where a subtyping judgement abbreviates an explicitly inserted coercion; its
extension of a type theory is conservative~\cite[Theorem~3.9]{LuoSolovievXue2013}, a result in
the same role as our \cref{thm:conservativity}. Identity types are the delicate case for structural subtyping:
lifting a coercion through a type former calls for \emph{functorial equality
rules}~\cite{LuoAdams2008}. \citet{LaurentLennonBertrandMaillard2024} treat full MLTT and give a
covariant subsumptive rule for identity types (\textsc{IdSub}, Appendix~C.7 of the extended
version), equivalent to the coercive reading precisely because the functor laws hold
\emph{definitionally}, notably $\mathrm{map}_{\mathsf{Id}}\,\mathrm{id} \cong \mathrm{id}$
(their \S5). On the semantic side, \citet{NajmaeiWeideAhrensNorth2026} justify such a covariant
rule in comprehension categories with functorial identity types. \Rzk{} keeps identity types invariant in their
underlying type: a covariant rule would
back-translate to \RSTT{} as an $\mathsf{ap}$, which is not definitionally the identity on
neutral terms; this is exactly the equation that the definitional functor laws supply. We leave
such a rule as possible future work.

\paragraph*{Proof Assistants and Libraries}

To our knowledge, \Rzk is the first and so far only proof assistant for \RSTT{}. There are, of
course, proof assistants for other variants of (homotopy) type theory; below we mention the
systems, referring to \citet{KudasovRiehlWeinberger2024} for an overview of proof-assistant
libraries for category theory and, specifically, of formalisations of the Yoneda lemma.
A ``standard library'' for \Rzk, called \sHoTT \cite{shott}, originated as a fork of the Yoneda
formalisation~\cite{KudasovRiehlWeinberger2024} and received major contributions during and after
a summer school held in Regensburg in 2023.\anonurlnote{https://itp-school-2023.github.io}

The proof assistants \Coq and \Agda have modes compatible with HoTT;
\textsf{Lean}~2 supported HoTT
natively~\cite{vanDoornVonRaumerBuchholtz2017}, though later versions of \Lean{} do not.

The \href{https://github.com/mortberg/cubicaltt}{\texttt{cubicaltt}} type
checker~\cite{CCHM2016}
was the first for (a variant of) cubical type theory. Cubical Agda~\citep{VezzosiMortbergAbel2019}
is a mode for Agda that
provides cubical features, such as an interval type, and supports higher inductive types
natively, with definitional computation rules. The RedPRL Development Team has developed several
proof assistants for variants of Cartesian cubical type theory
(\RedPRL{}~\cite{DBLP:journals/corr/abs-1807-01869}, \redtt{}, and \cooltt{}),
focusing on prototypes rather than production-ready libraries~\cite{sterling-wits};
\RedPRL{} implements a two-level type theory (see also
\cite{Annenkov_Capriotti_Kraus_Sattler_2023}).

The proof assistant \Arend~\cite{arend} implements a variant of HoTT with an interval type.

Narya \citep{shulman:observational_proof_assistant}, currently under development, is a proof assistant for several type theories, including higher observational, internally parametric, and displayed type theory; its base theory does not include an interval.

\section{Conclusion and Future Work}
\label{sec:conclusion}

We have presented \Rzk{}, a proof assistant for \RSTT{}, and given a
precise account of the theory it implements as a \emph{refinement} of \RSTT{}: extension types are
\emph{split} into a shape-indexed $\Pi$-type and a first-class type restriction; the silent
reshaping conventions of the original paper are internalised as an explicit, coercion-free
\emph{subtyping} relation; and topes and shapes become \emph{schematic parameters}, so that a
single \Rzk{} definition is general over all instances of a shape. We related the two systems in both directions: \Rzk{} faithfully implements
\RSTT{} (\cref{thm:faithfulness}) and is conservative over it on a natural fragment of derivations
(\cref{thm:conservativity}), conjecturally in general (\cref{conj:full-conservativity}); both
results reduce to the admissibility of \RSTT{}'s extension-type rules (\cref{prop:ext-admissible}) and to
tope entailment in the shared logic. Carrying them over to the checker rests on the soundness of the
bidirectional, type-directed algorithm relative to \dRzk{} (\cref{prop:algorithmic-soundness}), and
in particular of the tope solver (\cref{lem:solver-sound}). \Rzk{} is a usable tool
(\cref{sec:validation}), and it underpins a growing library of synthetic
$\infty$-category theory~\cite{shott, KudasovRiehlWeinberger2024}.

Several questions remain open. The conservativity argument rests on obligations we have stated but
not mechanised (coercion coherence for the back-translation, the collapse of term-level subtyping
to definitional equality, and solver soundness). Mechanising them means formalising the
judgements of \RSTT{} and \dRzk{} together with the back-translation $\sigma$; to our knowledge,
no mechanisation of \RSTT{} exists yet. A promising setting is the \Agda{} framework for
second-order abstract syntax of \citet{FioreSzamozvancev2022}: it provides binding and
substitution lemmas once and for all, and it matches the free scoped monads of \Rzk{}'s own term
representation (\cref{sec:syntax-rep}). We expect the tope fragment to be decidable, but leave a proof and a complete
decision procedure to future work. A direct semantics for the split constructs is also open:
\citet{Weinberger2026strict} proves strict stability for the fused extension type, which covers
the shape-$\ty{\Pi}$, but a free-standing restriction constrains the ambient cube context, which
that former cannot. Most fundamentally, normalisation for \RSTT{} is not known to
terminate~\cite{WeinbergerAhrensBuchholtzNorth2022}, so a full metatheory of the checker, and
canonicity or normalisation for the theory itself, remain open problems.

Three concrete refinements of the checker are open. First, extending a well-studied intuitionistic
kernel~\cite{Brede2005,Besson2021} (Voronkov's proof search with equality~\cite{Voronkov1998} is
the closest fragment to ours) with the directed-interval theory would make the tope solver complete
for the fragment, and handle the heavy outliers (\cref{tab:solver-perf,sec:validation-solver}) more
uniformly. Second, a normalisation-by-evaluation backend~\cite{AbelOhmanVezzosi2018} should
give substantially better constant factors than the current direct
reduction~\cite{lambda-n-ways,KudasovShakirovaShalaginTyulebaeva2024}; two wrinkles are specific
to \RSTT{}: the type-directed restriction computation of \cref{sec:bidirectional}, and the fact
that one term may compute differently under different tope assumptions
(\cref{sec:def-eq-topes}), so values must be indexed by the tope context. Third, the
planned rework of tope-family domains refines the single tope universe into \emph{tope
subuniverses} $\mathsf{Tope}[\phi]$, covariantly ordered by entailment, so that a tope family's
domain is tracked by typing rather than by the inserted conjunct (\cref{sec:domain-conjunct}).

Two missing features currently limit formalisation in \Rzk{}.
First, \Rzk{} lacks \emph{implicit arguments},
making its developments substantially more verbose than in \Coq{} or \Agda{}.
We plan to implement them in the style of Mazzoli and Abel~\cite{MazzoliAbel2016}, via the
second-order abstract syntax~\cite{KudasovStarikovIvanovAfliatonov2025unif} that underlies
\Rzk{}'s term representation (\cref{sec:syntax-rep}). Second, \Rzk{} has no user-defined (higher)
inductive types, so it suits \emph{abstract} synthetic $(\infty,1)$-category theory better than
\emph{concrete} $\infty$-categories (which currently need to be postulated with propositional computation rules). To the best of our knowledge, a satisfactory account of directed or
simplicial higher inductive types is itself open.

\begin{acks}
We thank the
contributors to \Rzk{}: Abdelrahman Abounegm, for the language server and the Visual Studio
Code extension~\cite{AbounegmKudasov2023lsp}; Danila Danko, for the Nix setup and the online
playground; Ekaterina Maximova, for tuple patterns, LSP tokenization, and the syntactic sugar
for nested $\Sigma$-types; and Alice Logos, for fixes to the tuple-pattern syntax. Special
thanks go to Emily Riehl and Jonathan Weinberger, the first outside users of \Rzk{}. We thank
the \sHoTT{} contributors and the participants of the 2023 summer school in Regensburg for
testing early versions of \Rzk{}, and Fredrik Bakke, Benno Lossin, Kenji Maillard, and Tashi
Walde for their bug reports and suggestions.
\end{acks}

\bibliographystyle{ACM-Reference-Format}
\bibliography{ms}

\ifsplitappendix\else
\appendix
\crefalias{section}{appendix}
\crefalias{subsection}{subappendix}
\crefalias{subsubsection}{subappendix}

\opt{draft}{\tableofcontents}

\section{Glossary}
\label{app:glossary}

This appendix collects the terminology used throughout the paper, separated into four groups: \RSTT{}
as published by \citet{RiehlShulman2017}; its meta-theory and notational conventions
(the informal practices that surround paper proofs); the declarative type theory of \Rzk{}; and the
algorithmic notions specific to the implementation. \Cref{tab:rzk-schema}, deferred from \cref{sec:declarative-rzk},
pairs each informal practice of \RSTT{} with the \Rzk{} device that formalises it.

\begin{table}
  \small
  \centering
  \caption{\Rzk{} as \RSTT{} together with its refinements. The top row is the
  shared core, which \Rzk{} reproduces verbatim; each remaining row pairs a feature of \RSTT{}, its
  meta-theory, or its notational conventions on the left with the \Rzk{} construct that internalises
  it on the right, and points to the subsection that treats it. The first two refinements are
  declarative (\cref{sec:declarative-rzk}); the last two belong to the implementation's elaboration
  of first-class parameters (\cref{sec:tope-params}).}
  \label{tab:rzk-schema}
  \begin{tabular}{@{}p{0.26\linewidth} p{0.56\linewidth} c@{}}
    \hline
    \textbf{\RSTT{}, meta-theory, notation} & \textbf{\Rzk{}} & \textbf{\S} \\
    \hline
    Shared core
      & reproduced verbatim; the translation $\tau$ is the identity
      & \labelcref{sec:review-rstt} \\[2pt]
    Extension types
      & \emph{split} extension types: a shape-$\ty{\Pi}$ and a type restriction
      & \labelcref{sec:split-ext} \\[2pt]
    Silent identifications
      & \emph{coercion-free subtyping}: a subtype's term is accepted unchanged, with no coercion
      & \labelcref{sec:subtyping} \\[2pt]
    Shapes and inclusions
      & tope families $\tp{\phi} : \tp{\psi} \to \tp{\mathsf{TOPE}}$, with the domain conjunct
      inserted at applications
      & \labelcref{sec:domain-conjunct} \\[2pt]
    Meta-level topes and shapes
      & topes and shapes as \emph{first-class parameters} of a definition
      & \labelcref{sec:tope-params} \\
    \hline
  \end{tabular}
\end{table}

\subsection{\RSTT{} (Riehl--Shulman, as published)}

\begin{description}
  \item[\RSTT{} (simplicial type theory)] The three-layered type theory of
    \cite{RiehlShulman2017}: a cube layer, a tope layer over it, and Martin-L\"of type theory over
    both. A judgement carries three contexts, $\cube{\Xi} \mid \tp{\Phi} \mid \ty{\Gamma} \vdash J$
    (\cref{sec:review-rstt}).
  \item[Cube layer] The simply-typed theory whose types are the \emph{cubes} (the terminal cube
    $\cube{\mathbbm{1}}$, the directed interval $\cube{\mathbbm{2}}$, and products), with points
    $\pt{t} : \cube{I}$.
  \item[Tope layer] The intuitionistic propositional logic over the cube layer, with topes built
    from $\tp{\top}, \tp{\bot}, \tp{\wedge}, \tp{\vee}$, an equality $\pt{s} \tp{\equiv} \pt{t}$, and
    an inequality $\pt{s} \tp{\leq} \pt{t}$ on the interval.
  \item[Shape] A pair $\{\pt{t} : \cube{I} \mid \tp{\phi}\}$ of a cube and a tope, denoting the
    points of $\cube{I}$ satisfying $\tp{\phi}$.
  \item[Sub-shape inclusion] An entailment $\pt{t} : \cube{I} \mid \tp{\phi} \vdash \tp{\psi}$,
    making $\{\cube{I} \mid \tp{\phi}\}$ a subshape of $\{\cube{I} \mid \tp{\psi}\}$.
  \item[Extension type] The former
    $\exttype{\pt{t} : \cube{I} \mid \tp{\psi}}{\ty{A}}{\tp{\phi}}{a}$: functions over the shape
    $\{\cube{I} \mid \tp{\psi}\}$ that agree definitionally with the boundary section $a$ on the
    subshape $\{\cube{I} \mid \tp{\phi}\}$.
  \item[Tope disjunction elimination ($\mathsf{rec}_{\vee}$)] The eliminator providing a case split
    under a disjunction of topes; reproduced verbatim in \Rzk{}.
\end{description}

\subsection{\RSTT{} meta-theory and abuse of notation}

\begin{description}
  \item[Meta-theoretic parameter layer (MTPL)] The implicit quantification over cubes and topes,
    together with the inclusion side conditions, under which the propositions of
    \cite{RiehlShulman2017} are stated. It is \emph{not} part of \RSTT{}'s object syntax: a context
    cannot posit an abstract cube or tope, so this quantification lives in the meta-theory. We make
    it formal as \emph{meta-theoretic parameter contexts} (below), shared by \RSTT{} and \dRzk{}; the
    implementation internalises it as first-class parameters of a definition
    (\cref{sec:tope-params}).
  \item[Meta-theoretic parameter context] An element of the meta-theoretic parameter layer
    (\cref{def:mpl}): the cube, tope, and type-family parameters, with their inclusion side
    conditions, that a single \RSTT{} theorem is schematic over. The translation $\tau$ keeps it
    schematic; the implementation internalises it as the first-class parameters of a definition
    (\cref{sec:tope-params}).
  \item[\RSTTMTPL] \RSTT{} together with its meta-theoretic parameter layer made explicit, treated as a
    single system: the system over which a Riehl--Shulman theorem, schematic over abstract cubes,
    topes, and type families, is stated. It is the source of the translation $\tau$ into \Rzk{}
    (\cref{sec:translation-conservativity}); \RSTT{} itself is the special case of an empty context.
  \item[Side condition (inclusion)] An entailment such as
    $\pt{t} : \cube{I} \mid \tp{\phi} \vdash \tp{\psi}$ attached to a theorem. It is a judgement of
    the tope logic, but its derivation is left implicit in paper proofs.
  \item[Empty-boundary convention] Writing $\ty{\prod}_{\pt{t} : \cube{I} \mid \tp{\psi}} \ty{A}$ for
    the extension type with boundary $\tp{\bot}$, omitting the angle brackets
    \cite[\S2.2]{RiehlShulman2017}. The \emph{empty-boundary form} of an extension type is its image
    under this convention.
  \item[Codomain-as-function convention] Writing the codomain family $\ty{A}$ and the boundary $a$
    as ordinary functions of the shape variable \cite[\S2.2]{RiehlShulman2017}.
  \item[Silent identification (abuse of notation)] Re-reading a term as an element of a different
    extension type, by an $\eta$-expansion that is definitionally the identity (for instance the
    ``repackaging'' in the proof of \cite[Thm.~4.4]{RiehlShulman2017}). \Rzk{} makes these passages
    formal as \emph{coercion-free subtyping} (\cref{sec:subtyping}).
\end{description}

\subsection{\Rzk{} (declarative)}

\begin{description}
  \item[Split extension types] \Rzk{}'s replacement of \RSTT{}'s extension-type former by two
    independent constructs, a shape-$\ty{\Pi}$ and a type restriction, whose combination recovers
    the extension type (\cref{sec:split-ext}, \cref{def:translation}).
  \item[Shape-$\ty{\Pi}$] The function type over a shape,
    $\ty{\prod}_{\pt{t} : \{\cube{I} \mid \tp{\psi}\}} \ty{B}$; exactly \RSTT{}'s empty-boundary case.
  \item[Type restriction] The construct $\ty{B}\,[\, \tp{\phi_1} \mapsto b_1, \ldots, \tp{\phi_n}
    \mapsto b_n \,]$ attaching definitional boundaries to an arbitrary type. The
    declarative theory treats the single-branch form $\ty{B}\,[\, \tp{\phi} \mapsto b \,]$; the
    multi-branch surface form desugars to a single branch over the disjunction, with the section
    assembled by tope disjunction elimination (\cref{sec:split-ext}, \cref{sec:bidirectional}).
  \item[Coercion-free subtyping] The judgement
    $\cube{\Xi} \mid \tp{\Phi} \mid \ty{\Gamma} \vdash \ty{A} \subtype \ty{B}$ under which any
    $a : \ty{A}$ is accepted where a $\ty{B}$ is expected, with no coercion inserted, so a subtype's
    elements are literally elements of the supertype (\cref{sec:subtyping}).
  \item[Ext-style type, tail type] A \dRzk{} type is \emph{ext-style} when every restriction in it
    is the codomain of a shape-$\ty{\Pi}$, with the boundary tope entailing the shape tope; a
    \emph{tail type} also allows \emph{free-standing}
    restrictions in its nesting of codomains (\cref{def:ext-style}). Conservativity
    (\cref{thm:conservativity}) is proved for derivations whose hypotheses are ext-style.
  \item[Boundary equation] The \RSTT{} definitional equality that the back-translation $\sigma$
    re-proves for each erased free-standing restriction, produced at \textsc{Restr-intro} and
    consumed at \textsc{Restr-comp} and at the agreement premise of \textsc{Ext-intro}
    (\cref{def:backtranslation}).
\end{description}

\subsection{\Rzk{} (algorithmic)}

\begin{description}
  \item[Tope family] A function $\tp{\phi} : \{\cube{I} \mid \tp{\psi}\} \to \tp{\mathsf{TOPE}}$
    into the single tope universe, representing a (sub-)shape; its domain $\tp{\psi}$ is enforced
    by the inserted domain conjunct rather than by typing (\cref{sec:domain-conjunct}).
  \item[Tope subuniverse] The planned refinement $\tp{\mathsf{Tope}}[\tp{\phi}]$ of the tope
    universe to the topes considered within the context $\tp{\phi}$, with
    $\tp{\mathsf{TOPE}} = \tp{\mathsf{Tope}}[\tp{\top}]$; not part of the implementation, which
    tracks domains by the inserted conjunct instead (\cref{sec:domain-conjunct}).
  \item[Inserted domain conjunct] At a tope-family application, the elaborated tope
    $\tp{\psi}(\pt{t}) \tp{\wedge} \tp{\phi}(\pt{t})$ rather than the bare $\tp{\phi}(\pt{t})$,
    confining a tope family to its domain against \Rzk{}'s single tope universe
    (\cref{sec:domain-conjunct}).
  \item[Ambient tope context] The implementation-side name for the tope context $\tp{\Phi}$ of a
    judgement: like \RSTT{}'s, it carries no variables, since the tope logic has no term syntax in
    either theory. The implementation also keeps the cube and type contexts $\cube{\Xi}$ and
    $\ty{\Gamma}$ as a single variable context, a minor representation detail
    (\cref{sec:tope-params}).
  \item[Bidirectional typing] The type-directed algorithm alternating checking against a known type
    and inferring a type, used to decide the declarative judgements (\cref{sec:bidirectional}).
  \item[Tope solver] The automated decision procedure for tope entailments
    $\cube{\Xi} \mid \tp{\Phi} \vdash \tp{\psi}$, checking the side conditions left implicit in
    \RSTT{} (\cref{sec:automated-tope-logic}, \cref{lem:solver-sound}).
\end{description}

\section{Further Tutorial Examples}
\label{app:tutorial-more}

This appendix collects the examples deferred from \cref{sec:proving}.
The first four stay within the Martin-L\"of fragment of \Rzk{}: no tope queries reach the solver (\cref{sec:automated-tope-logic}), and the restriction computation (\cref{sec:bidirectional}) is never used.
The last exercises the simplicial features.

\subsection{Presentations of the Identity Function}

The identity function is parameterized by a type \code{A}, introduced as an argument of type \code{U}.
\Cref{fig:identity-variants} shows three equivalent presentations: the full form with a named argument (\code{identity₁}); the same with \code{(A : U)} moved into the parameter list and the arrow shorthand \code{A → A} for the body type, since \code{x} does not appear in it (\code{identity₂}); and the fully parameterised form (\code{identity₃}).

\begin{figure}
\centering
\begin{minipage}[t]{0.32\textwidth}
\rzkinput{src/identity-1.rzk}
\end{minipage}\hfill
\begin{minipage}[t]{0.32\textwidth}
\rzkinput{src/identity-2.rzk}
\end{minipage}\hfill
\begin{minipage}[t]{0.32\textwidth}
\rzkinput{src/identity-3.rzk}
\end{minipage}
\caption{Three equivalent presentations of the dependent identity function in \Rzk{}.}
\label{fig:identity-variants}
\end{figure}

\subsection{Argument Swapping}

The function \code{swap} swaps the arguments of another (curried) function.
\Cref{fig:swap} shows both a non-dependent version, where the codomain \code{C} is a plain type, and a dependent version, where the codomain \code{C a b} is a family depending on both arguments.
The term is the same in both cases, and only the type changes.

\begin{figure}
\centering
\begin{minipage}[t]{0.48\textwidth}
\rzkinput{src/swap.rzk}
\end{minipage}\hfill
\begin{minipage}[t]{0.48\textwidth}
\rzkinput{src/swap-dep.rzk}
\end{minipage}
\caption{Argument-swapping for a curried function: non-dependent (left) and dependent (right). In
the dependent version the codomain \code{C a b} depends on both arguments; the swapping term is
identical.}
\label{fig:swap}
\end{figure}

\subsection{Projections out of the Total Type}

\Cref{fig:sigma-projections} shows the $\Sigma$-type and its two projections in \Rzk{} (\sHoTT{} module \texttt{hott/05-sigma}).
The first projection lands in \code{A} and is non-dependent.
The second is dependently typed: its result type depends on the value of the first projection of its argument.

\begin{figure}
\centering
\begin{minipage}[t]{0.27\textwidth}
\rzkinput{src/total-type.rzk}
\end{minipage}\hfill
\begin{minipage}[t]{0.32\textwidth}
\rzkinput{src/projection-total-type.rzk}
\end{minipage}\hfill
\begin{minipage}[t]{0.37\textwidth}
\rzkinput{src/second-projection-total-type.rzk}
\end{minipage}
\caption{The $\Sigma$-type (total type) and its first and second projections in \Rzk{}.}
\label{fig:sigma-projections}
\end{figure}

\subsection{Swap is Self-Inverse}

The \code{swap} function of \cref{fig:swap} is its own inverse: swapping the arguments twice yields back the original function.

\rzkinput{src/swap-swap.rzk}

\noindent The composite \code{swap B A C (swap A B C f)} reduces to a $\lambda$-term that is $\eta$-equal to \code{f}, so the propositional equality holds definitionally and \code{refl} suffices.
\Rzk{} performs this $\eta$-expansion on demand (\cref{sec:bidirectional}).

\subsection{Functions into Segal Types are Segal}
\label{ex:segal-function-types}

A central technical lemma of synthetic $(\infty,1)$-category theory is that the Segal condition
lifts along the function-type formers: if each fibre \code{A x} of a family \code{A : X → U} is
Segal then so is the function type \code{(x : X) → A x}, and the analogous statement holds for an
extension type over a shape~\cite[Corollary~5.6]{RiehlShulman2017}. In \sHoTT{} (\texttt{simplicial-hott/05-segal-types}):

\begin{minted}[frame=lines]{rzk}
#def is-segal-function-type uses (funext)
  ( X : U)
  ( A : X → U)
  ( fiberwise-is-segal-A : (x : X) → is-segal (A x))
  : is-segal ((x : X) → A x)
  := ...  -- delegates to is-segal-is-local-horn-inclusion via a horn-inclusion lemma; see sHoTT
\end{minted}

The same statement with the domain a shape rather than a type, for extension types over
\code{(s : I | ψ)}:

\begin{minted}[frame=lines]{rzk}
#def is-segal-extension-type uses (extext)
  ( I : CUBE)
  ( ψ : I → TOPE)
  ( A : ψ → U)
  ( fiberwise-is-segal-A : (s : ψ) → is-segal (A s))
  : is-segal ((s : ψ) → A s)
  := ...
\end{minted}

The \code{uses (funext)} and \code{uses (extext)} clauses record that the proof depends on the
postulated function-extensionality and extension-type-extensionality\footnote{Extension-type
extensionality is called \emph{relative function extensionality} in \RSTT{}; see Axiom~4.6 and
Proposition~4.8 of \citet{RiehlShulman2017}.} axioms without mentioning
them in its parameters, type, or body: the dependency is indirect, through the lemmas the proof
invokes, and \Rzk{} rejects such a hidden dependency unless it is declared. \sHoTT{} introduces
both axioms via \code{#assume}, in the HoTT tradition. The \code{uses} keyword is a first-class
syntactic feature of \Rzk{} for tracking such required assumptions.

\noindent Here, the Segal predicate is itself an extension-type statement (a horn-filler
condition), so its proof terms invoke the type-directed restriction computation of
\cref{sec:bidirectional} throughout, and the auxiliary horn-inclusion lemmas rely on the tope
solver to check shape-inclusion side conditions (\cref{sec:automated-tope-logic}).
The analogous closure lemma for Rezk types, \code{is-rezk-function-type}, mirrors this one;
Riehl and Shulman prove its non-dependent case~\cite[Proposition~10.9]{RiehlShulman2017}, and the
dependent statement is a strengthening due to \sHoTT{}.

\section{Admissibility of \RSTT{} Extension Types}
\label{app:admissibility}

The split constructs of \cref{sec:split-ext} recover \RSTT{}'s extension-type former: when an
extension type is read via the encoding \eqref{eq:ext-encoding} of \cref{def:translation}, each of
\RSTT{}'s extension-type rules is admissible in \dRzk{}. This is the local ingredient of faithfulness
(\cref{thm:faithfulness}) and conservativity (\cref{thm:conservativity}). The proof is a routine
rule-by-rule check, which we give here. The appendix also collects the figures deferred from
\cref{sec:review-rstt-mtpl,sec:translation-conservativity}: the \RSTT{} extension-type rules
(\cref{fig:rstt-ext-types}) and the
conclusion-by-conclusion translation of the extension-type rules (\cref{fig:ext-translation}).

\begin{figure*}
  \begin{prooftree}
    \AxiomC{$\pt{t} : \cube{I} \mid \tp{\phi} \vdash \tp{\psi}$}
    \AxiomC{$\cube{\Xi}, \pt{t} : \cube{I} \mid \tp{\Phi}, \tp{\psi} \mid \ty{\Gamma} \vdash \ty{A}(\pt{t}) \;\ty{\mathsf{type}}$}
    \AxiomC{$\cube{\Xi} \mid \tp{\Phi} \mid \ty{\Gamma} \vdash a : \ty{\prod}_{\pt{t} : \{\cube{I} \mid \tp{\phi}\}} \ty{A}(\pt{t})$}
    \RightLabel{\textsc{Ext-form}}
    \TrinaryInfC{$\cube{\Xi} \mid \tp{\Phi} \mid \ty{\Gamma} \vdash \exttype{\pt{t} : \cube{I} \mid \tp{\psi}}{\ty{A}(\pt{t})}{\tp{\phi}}{a} \;\ty{\mathsf{type}}$}
  \end{prooftree}
  \begin{prooftree}
    \AxiomC{$\cube{\Xi}, \pt{t} : \cube{I} \mid \tp{\Phi}, \tp{\psi} \mid \ty{\Gamma} \vdash b : \ty{A}(\pt{t})$}
    \AxiomC{$\cube{\Xi}, \pt{t} : \cube{I} \mid \tp{\Phi}, \tp{\phi} \mid \ty{\Gamma} \vdash b \equiv a(\pt{t}) : \ty{A}(\pt{t})$}
    \RightLabel{\textsc{Ext-intro}}
    \BinaryInfC{$\cube{\Xi} \mid \tp{\Phi} \mid \ty{\Gamma} \vdash \lambda \pt{t}.\, b : \exttype{\pt{t} : \cube{I} \mid \tp{\psi}}{\ty{A}(\pt{t})}{\tp{\phi}}{a}$}
  \end{prooftree}
  \begin{prooftree}
    \AxiomC{$\cube{\Xi} \mid \tp{\Phi} \mid \ty{\Gamma} \vdash f : \exttype{\pt{t} : \cube{I} \mid \tp{\psi}}{\ty{A}(\pt{t})}{\tp{\phi}}{a}$}
    \AxiomC{$\cube{\Xi} \vdash \pt{s} : \cube{I}$}
    \AxiomC{$\cube{\Xi} \mid \tp{\Phi} \vdash \tp{\psi}(\pt{s})$}
    \RightLabel{\textsc{Ext-app}}
    \TrinaryInfC{$\cube{\Xi} \mid \tp{\Phi} \mid \ty{\Gamma} \vdash f(\pt{s}) : \ty{A}(\pt{s})$}
  \end{prooftree}
  \begin{prooftree}
    \AxiomC{$\cube{\Xi}, \pt{t} : \cube{I} \mid \tp{\Phi}, \tp{\psi} \mid \ty{\Gamma} \vdash b : \ty{A}(\pt{t})$}
    \AxiomC{$\cube{\Xi} \vdash \pt{s} : \cube{I}$}
    \AxiomC{$\cube{\Xi} \mid \tp{\Phi} \vdash \tp{\psi}(\pt{s})$}
    \RightLabel{\textsc{Ext-$\beta$}}
    \TrinaryInfC{$\cube{\Xi} \mid \tp{\Phi} \mid \ty{\Gamma} \vdash (\lambda \pt{t}.\, b)(\pt{s}) \equiv b[\pt{s}/\pt{t}] : \ty{A}(\pt{s})$}
  \end{prooftree}
  \begin{prooftree}
    \AxiomC{$\cube{\Xi} \mid \tp{\Phi} \mid \ty{\Gamma} \vdash f : \exttype{\pt{t} : \cube{I} \mid \tp{\psi}}{\ty{A}(\pt{t})}{\tp{\phi}}{a}$}
    \AxiomC{$\cube{\Xi} \vdash \pt{s} : \cube{I}$}
    \AxiomC{$\cube{\Xi} \mid \tp{\Phi} \vdash \tp{\phi}(\pt{s})$}
    \RightLabel{\textsc{Ext-comp}}
    \TrinaryInfC{$\cube{\Xi} \mid \tp{\Phi} \mid \ty{\Gamma} \vdash f(\pt{s}) \equiv a(\pt{s}) : \ty{A}(\pt{s})$}
  \end{prooftree}
  \begin{prooftree}
    \AxiomC{$\cube{\Xi} \mid \tp{\Phi} \mid \ty{\Gamma} \vdash f : \exttype{\pt{t} : \cube{I} \mid \tp{\psi}}{\ty{A}(\pt{t})}{\tp{\phi}}{a}$}
    \RightLabel{\textsc{Ext-$\eta$}}
    \UnaryInfC{$\cube{\Xi} \mid \tp{\Phi} \mid \ty{\Gamma} \vdash f \equiv \lambda \pt{t}.\, f(\pt{t}) : \exttype{\pt{t} : \cube{I} \mid \tp{\psi}}{\ty{A}(\pt{t})}{\tp{\phi}}{a}$}
  \end{prooftree}
  \caption{\RSTT{}'s extension typing rules~\cite[Fig.~4]{RiehlShulman2017}: formation, introduction,
  application, $\beta$-reduction, boundary computation, and $\eta$. The boundary section $a$ lives over the
  subshape $\{\pt{t} : \cube{I} \mid \tp{\phi}\}$; \textsc{Ext-$\beta$} is the usual computation rule
  for an applied abstraction, and \textsc{Ext-comp} makes a function agree with $a$ definitionally
  on the subshape. The shape inclusion $\pt{t} : \cube{I} \mid \tp{\phi} \vdash \tp{\psi}$
  is a side condition of the tope logic, proved in the meta-theory.}
  \label{fig:rstt-ext-types}
\end{figure*}

\begin{figure*}
  \centering
  \[
    \begin{array}{r c l}
      \exttype{\pt{t} : \cube{I} \mid \tp{\psi}}{\ty{A}(\pt{t})}{\tp{\phi}}{a(\pt{t})} \;\ty{\mathsf{type}}
        & \overset{\tau}{\longmapsto} &
        \ty{\prod}_{\pt{t} : \{\cube{I} \mid \tp{\psi}\}} \big( \tau\ty{A}(\pt{t}) \big)\,[\, \tp{\phi}(\pt{t}) \mapsto \tau a(\pt{t}) \,] \;\ty{\mathsf{type}}
        \\[6pt]
      \lambda \pt{t}.\, b \;:\; \exttype{\pt{t} : \cube{I} \mid \tp{\psi}}{\ty{A}(\pt{t})}{\tp{\phi}}{a(\pt{t})}
        & \overset{\tau}{\longmapsto} &
        \lambda \pt{t}.\, \tau b \;:\; \ty{\prod}_{\pt{t} : \{\cube{I} \mid \tp{\psi}\}} \big( \tau\ty{A}(\pt{t}) \big)\,[\, \tp{\phi}(\pt{t}) \mapsto \tau a(\pt{t}) \,]
        \\[6pt]
      f(\pt{s}) \;:\; \ty{A}(\pt{s})
        & \overset{\tau}{\longmapsto} &
        \tau f(\pt{s}) \;:\; \big( \tau\ty{A}(\pt{s}) \big)\,[\, \tp{\phi}(\pt{s}) \mapsto \tau a(\pt{s}) \,]
        \\[6pt]
      f(\pt{s}) \equiv a(\pt{s}) \;:\; \ty{A}(\pt{s})
        & \overset{\tau}{\longmapsto} &
        \tau f(\pt{s}) \equiv \tau a(\pt{s}) \;:\; \big( \tau\ty{A}(\pt{s}) \big)\,[\, \tp{\phi}(\pt{s}) \mapsto \tau a(\pt{s}) \,]
    \end{array}
  \]
  \caption{The translation $\tau$ on the conclusions of \RSTT{}'s extension-type judgements, under the encoding \eqref{eq:ext-encoding}.}
  \label{fig:ext-translation}
\end{figure*}

\begin{proposition}[Admissibility of \RSTT{} extension types]
\label{prop:ext-admissible}
  The formation, introduction, application, $\beta$-reduction, boundary computation, and $\eta$
  rules for
  \RSTT{} extension types \cite[Fig.~4]{RiehlShulman2017} (\cref{fig:rstt-ext-types}) are admissible in
  \dRzk{} when extension types are read via the encoding \eqref{eq:ext-encoding}: for each rule,
  whenever \dRzk{} derives (the translations of) its premises, it also derives (the translation of)
  its conclusion.
\end{proposition}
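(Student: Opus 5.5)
We go rule by rule through \cref{fig:rstt-ext-types}. In each case we assume \dRzk{} derivations of the $\tau$-translated premises and assemble the translated conclusion (\cref{fig:ext-translation}) from the split rules of \cref{fig:rzk-split-ext-types}, inducting on nothing deeper than the rule itself. Throughout, write $E := \ty{\prod}_{\pt{t} : \{\cube{I} \mid \tp{\psi}\}} \big(\tau\ty{A}(\pt{t})\big)[\tp{\phi}(\pt{t}) \mapsto \tau a(\pt{t})]$.

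\emph{Formation.} Under $\pt{t} : \cube{I} \mid \tp{\Phi}, \tp{\psi}$, the premise on $\tau a$ is a shape-$\ty{\Pi}$ over $\tp{\phi}$. Weakening the tope context to $\tp{\Phi}, \tp{\psi}, \tp{\phi}$ and applying \textsc{$\Pi$-app-shape} at $\pt{t}$ gives $\tau a(\pt{t}) : \tau\ty{A}(\pt{t})$; the tope side condition $\tp{\phi}(\pt{t})$ holds by assumption. \textsc{Restr-form} then forms the restricted codomain, and \textsc{$\Pi$-form-shape} closes.

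\emph{Introduction.} From the premise $\tau b : \tau\ty{A}(\pt{t})$ under $\tp{\psi}$ and the agreement premise under $\tp{\phi}$, we apply \textsc{Restr-intro} under $\tp{\Phi}, \tp{\psi}$. Its second premise needs the context $\tp{\Phi}, \tp{\psi}, \tp{\phi}$, so we weaken the agreement premise, whose context is $\tp{\Phi}, \tp{\phi}$, by the tope $\tp{\psi}$. \textsc{$\Pi$-intro-shape} then closes.

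\emph{Application.} This is \textsc{$\Pi$-app-shape} directly, landing in the restricted type as in \cref{fig:ext-translation}.

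\emph{$\beta$.} We use \textsc{$\Pi$-$\beta$-shape}, whose typing premise is exactly the body of the introduction case.

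\emph{Boundary computation.} We first apply the application case, giving $\tau f(\pt{s})$ in the restricted type. We then apply \textsc{Restr-comp}, using the premise $\tp{\Phi} \vdash \tp{\phi}(\pt{s})$.

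\emph{$\eta$.} This is \textsc{$\Pi$-$\eta$-shape}.

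Two \dRzk{} facts are needed and will be stated as lemmas. The first is \textbf{weakening of the tope context} by an extra tope, and more generally \textbf{tope-context monotonicity}: a judgement derivable under $\tp{\Phi}$ remains derivable under any $\tp{\Phi}'$ entailing $\tp{\Phi}$. The second is \textbf{substitution of a cube point} $\pt{s}$ for $\pt{t}$, so that $(\tau\ty{A})(\pt{s})$ and the substituted restriction agree with $\tau(\ty{A}(\pt{s}))$. Both lemmas are proved by a routine induction over \dRzk{} derivations. They are also the only places where the inclusion $\tp{\phi} \vdash \tp{\psi}$ of \textsc{Ext-form} is used; there it justifies weakening $\tp{\Phi}, \tp{\phi}$ to $\tp{\Phi}, \tp{\psi}, \tp{\phi}$ without losing anything.

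The main obstacle I expect is a typing mismatch in the computation and $\beta$ rules. \RSTT{} concludes $f(\pt{s}) \equiv a(\pt{s}) : \ty{A}(\pt{s})$, whereas \dRzk{} yields the equation at the restricted type. I would first try to read \cref{fig:ext-translation} literally, as it already states the restricted type. If an unrestricted equation is needed (e.g.\ for $\beta$), I would push the equation along \textsc{S-Restr$'$}, using the standard fact that definitional equality respects subsumption; I would check that this fact is among the standard rules of \cref{app:coherence}. In the $\beta$ case a further check is needed: $b[\pt{s}/\pt{t}]$ must inhabit the restricted type, which holds by substituting $\pt{s}$ into the \textsc{Restr-intro} derivation from the introduction case.
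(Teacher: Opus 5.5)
Your proposal is correct and follows the paper's own proof essentially step for step: the same rule-by-rule assembly from \textsc{Restr-form} and \textsc{$\Pi$-form-shape}, \textsc{Restr-intro} and \textsc{$\Pi$-intro-shape}, \textsc{$\Pi$-app-shape}, \textsc{$\Pi$-$\beta$-shape}, \textsc{Restr-comp}, and \textsc{$\Pi$-$\eta$-shape}, with \textsc{S-Restr$'$} used to drop the restriction wherever an unrestricted conclusion is wanted. One small correction: adding $\tp{\psi}$ to $\tp{\Phi}, \tp{\phi}$ is plain weakening and does not use the inclusion $\pt{t} : \cube{I} \mid \tp{\phi} \vdash \tp{\psi}$; the inclusion is needed in the boundary-computation case, because \textsc{Ext-comp} supplies only $\tp{\Phi} \vdash \tp{\phi}(\pt{s})$, and you must first derive $\tp{\Phi} \vdash \tp{\psi}(\pt{s})$ from it (point substitution and cut) before you can apply $\tau f$ at $\pt{s}$.
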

\begin{proof}
  Write $\ty{E} := \ty{\prod}_{\pt{t} : \{\cube{I} \mid \tp{\psi}\}} \ty{A}(\pt{t})\,[\,\tp{\phi}(\pt{t}) \mapsto a(\pt{t})\,]$
  for the right-hand side of \eqref{eq:ext-encoding}, and recall \RSTT{}'s side condition $\pt{t} : \cube{I} \mid \tp{\phi} \vdash \tp{\psi}$,
  so $\tp{\phi}$ is a subshape of $\tp{\psi}$. We treat the rules of \cref{fig:rstt-ext-types} in
  turn, using the declarative rules for the split constructs (\cref{fig:rzk-split-ext-types}).

  \emph{Formation.} Assume the \RSTT{} premises: $\ty{A}$ is a type over
  $\{\pt{t} : \cube{I} \mid \tp{\psi}\}$ and
  $a : \ty{\prod}_{\pt{t} : \{\cube{I} \mid \tp{\phi}\}} \ty{A}(\pt{t})$; we must show that $\ty{E}$
  is a well-formed type. Under $\tp{\psi}(\pt{t})$
  the restriction $\ty{A}(\pt{t})\,[\,\tp{\phi}(\pt{t}) \mapsto a(\pt{t})\,]$ is well-formed by
  \textsc{Restr-form}: its single branch is well-typed because $\tp{\phi}(\pt{t}) \vdash \tp{\psi}(\pt{t})$, so under
  $\tp{\psi}(\pt{t}) \tp{\wedge} \tp{\phi}(\pt{t}) = \tp{\phi}(\pt{t})$ we have
  $a(\pt{t}) : \ty{A}(\pt{t})$, and the coherence condition is vacuous for a single branch. Closing
  under the shape-$\ty{\Pi}$ over $\{\pt{t} : \cube{I} \mid \tp{\psi}\}$ (\textsc{$\Pi$-form-shape})
  yields $\ty{E} \;\ty{\mathsf{type}}$, which is the formation rule.

  \emph{Introduction.} We must show: if $b(\pt{t}) : \ty{A}(\pt{t})$ for
  $\pt{t} : \{\cube{I} \mid \tp{\psi}\}$ and $b$ agrees with $a$ on the subshape (\RSTT{}'s premises),
  then $\lambda \pt{t}.\, b : \ty{E}$. By shape-$\ty{\Pi}$ introduction (\textsc{$\Pi$-intro-shape}),
  $\lambda \pt{t}.\, b : \ty{E}$ holds iff, for $\pt{t} : \{\cube{I} \mid \tp{\psi}\}$, the body
  $b(\pt{t})$ has type $\ty{A}(\pt{t})\,[\,\tp{\phi}(\pt{t}) \mapsto a(\pt{t})\,]$. By the membership
  condition \textsc{Restr-intro} this holds iff $b(\pt{t}) : \ty{A}(\pt{t})$ under $\tp{\psi}(\pt{t})$ and
  $b(\pt{t}) \equiv a(\pt{t})$ under $\tp{\psi}(\pt{t}) \tp{\wedge} \tp{\phi}(\pt{t})$, i.e.\ under
  $\tp{\phi}(\pt{t})$, which is exactly \RSTT{}'s premise that the section agrees with $a$ on the subshape.

  \emph{Application.} Let $f : \ty{E}$ and $\pt{s} : \cube{I}$ with $\tp{\psi}(\pt{s})$.
  Shape-$\ty{\Pi}$ application (\textsc{$\Pi$-app-shape}) gives
  $f(\pt{s}) : \ty{A}(\pt{s})\,[\,\tp{\phi}(\pt{s}) \mapsto a(\pt{s})\,]$, the restricted type. The
  translation keeps this restriction (\cref{def:translation}, \cref{fig:ext-translation}) rather than
  forgetting it, so it uses no subtyping here. Because restrictions are coercion-free, a term of the
  restricted type is literally a term of $\ty{A}(\pt{s})$
  ($\ty{A}(\pt{s})\,[\,\tp{\phi}(\pt{s}) \mapsto a(\pt{s})\,] \subtype \ty{A}(\pt{s})$, \textsc{S-Restr$'$}),
  which recovers \RSTT{}'s application rule with no coercion inserted.

  \RSTT{} has two computation rules, and we treat them separately.

  \emph{$\beta$-reduction.} For an abstraction $\lambda \pt{t}.\, b : \ty{E}$ and $\pt{s} : \cube{I}$
  with $\tp{\psi}(\pt{s})$, the shape-$\ty{\Pi}$ $\beta$-law gives
  $(\lambda \pt{t}.\, b)(\pt{s}) \equiv b[\pt{s}/\pt{t}]$, which the application case types at
  $\ty{A}(\pt{s})$. This is \textsc{Ext-$\beta$}, and it holds for every point of the shape,
  independently of the boundary.

  \emph{Boundary computation.} Suppose additionally $\tp{\phi}(\pt{s})$ is entailed,
  $\cube{\Xi} \mid \tp{\Phi} \vdash \tp{\phi}(\pt{s})$. Then the restriction
  $\ty{A}(\pt{s})\,[\,\tp{\phi}(\pt{s}) \mapsto a(\pt{s})\,]$ is active and the restriction
  computation \textsc{Restr-comp} rewrites its inhabitant to the boundary:
  $f(\pt{s}) \equiv a(\pt{s}) : \ty{A}(\pt{s})\,[\,\tp{\phi}(\pt{s}) \mapsto a(\pt{s})\,]$. Forgetting
  the restriction (coercion-free, \textsc{S-Restr$'$}) gives the same equation at $\ty{A}(\pt{s})$,
  which is \textsc{Ext-comp}. This is the one step that consumes type information (the equality holds
  by virtue of the type of $f$), and it applies exactly when the boundary tope $\tp{\phi}(\pt{s})$ is
  entailed --- the same entailment that \RSTT{}'s \textsc{Ext-comp} requires.

  \emph{$\eta$.} Finally, \textsc{Ext-$\eta$} is the shape-$\ty{\Pi}$'s $\eta$-law
  (\textsc{$\Pi$-$\eta$-shape}, \cref{fig:rzk-split-ext-types}) read at the encoded type $\ty{E}$.
\end{proof}

To see the translation of \cref{def:translation} on a concrete derivation, consider the identity
morphism on a point $x$ of a type $\ty{A}$, the constant map $\lambda \pt{t}.\, x$ of the $\hom$-type
$\hom_{\ty{A}}(x, x)$ (\cref{ex:hom-types}). Its \RSTT{} introduction and the \dRzk{} derivation
its translation produces are shown in \cref{fig:idhom-translation}. The \RSTT{} rule
\textsc{Ext-intro} has two premises: the body $x : \ty{A}$ over the shape, and the agreement of the
body with the boundary section on the subshape, here
$x \equiv \mathsf{rec}_{\tp{\vee}}(\pt{t} \tp{\equiv} \pt{0} \mapsto x, \pt{t} \tp{\equiv} \pt{1} \mapsto x)$
under $\tp{\partial\Delta^1}$, which holds by reflexivity since both endpoints are $x$. The
translation maps this single step to two \dRzk{} steps over the \emph{same} two premises:
\textsc{Restr-intro} attaches the boundary as a restriction on the codomain, and
\textsc{$\Pi$-intro-shape} abstracts over the shape. The conclusion type is the $\tau$-image of
$\hom_{\ty{A}}(x, x)$, and no subtyping is used. This is the general pattern of
\cref{fig:ext-translation} on a concrete derivation.

\begin{figure*}
  \centering
  \begin{prooftree}
    \AxiomC{$\pt{t} : \cube{\mathbbm{2}} \mid \tp{\top} \mid \Gamma_0 \vdash x : \ty{A}$}
    \AxiomC{$\pt{t} : \cube{\mathbbm{2}} \mid \tp{\partial\Delta^1} \mid \Gamma_0 \vdash x \equiv \mathsf{rec}_{\tp{\vee}}(\pt{t} \tp{\equiv} \pt{0} \mapsto x, \pt{t} \tp{\equiv} \pt{1} \mapsto x) : \ty{A}$}
    \RightLabel{\textsc{Ext-intro}}
    \BinaryInfC{$\Gamma_0 \vdash \lambda \pt{t}.\, x : \hom_{\ty{A}}(x, x)$}
  \end{prooftree}
  \[ \rotatebox[origin=c]{-90}{$\mapsto$}\ \tau \]
  \begin{prooftree}
    \AxiomC{$\pt{t} : \cube{\mathbbm{2}} \mid \tp{\top} \mid \Gamma_0 \vdash x : \ty{A}$}
    \AxiomC{$\pt{t} : \cube{\mathbbm{2}} \mid \tp{\partial\Delta^1} \mid \Gamma_0 \vdash x \equiv \mathsf{rec}_{\tp{\vee}}(\pt{t} \tp{\equiv} \pt{0} \mapsto x, \pt{t} \tp{\equiv} \pt{1} \mapsto x) : \ty{A}$}
    \RightLabel{\textsc{Restr-intro}}
    \BinaryInfC{$\pt{t} : \cube{\mathbbm{2}} \mid \tp{\top} \mid \Gamma_0 \vdash x : \ty{A}\,[\, \tp{\partial\Delta^1} \mapsto \mathsf{rec}_{\tp{\vee}}(\pt{t} \tp{\equiv} \pt{0} \mapsto x, \pt{t} \tp{\equiv} \pt{1} \mapsto x) \,]$}
    \RightLabel{\textsc{$\Pi$-intro-shape}}
    \UnaryInfC{$\Gamma_0 \vdash \lambda \pt{t}.\, x : \ty{\prod}_{\pt{t} : \{\cube{\mathbbm{2}} \mid \tp{\top}\}} \ty{A}\,[\, \tp{\partial\Delta^1} \mapsto \mathsf{rec}_{\tp{\vee}}(\pt{t} \tp{\equiv} \pt{0} \mapsto x, \pt{t} \tp{\equiv} \pt{1} \mapsto x) \,]$}
  \end{prooftree}
  \caption{Translating the introduction of the identity morphism $\lambda \pt{t}.\, x : \hom_{\ty{A}}(x, x)$,
  schematic over $\Gamma_0 := (\ty{A} : \ty{\mathcal{U}},\, x : \ty{A})$, with boundary tope
  $\tp{\partial\Delta^1} := \pt{t} \tp{\equiv} \pt{0} \tp{\vee} \pt{t} \tp{\equiv} \pt{1}$ and
  $\hom_{\ty{A}}(x, x) := \exttype{\pt{t} : \cube{\mathbbm{2}}}{\ty{A}}{\tp{\partial\Delta^1}}{\mathsf{rec}_{\tp{\vee}}(\pt{t} \tp{\equiv} \pt{0} \mapsto x, \pt{t} \tp{\equiv} \pt{1} \mapsto x)}$.
  The two derivations share the same leaves; $\tau$ maps the single \textsc{Ext-intro} step to
  \textsc{Restr-intro} followed by \textsc{$\Pi$-intro-shape}, and the conclusion type is the
  $\tau$-image of $\hom_{\ty{A}}(x, x)$ by \eqref{eq:ext-encoding}.}
  \label{fig:idhom-translation}
\end{figure*}

\section{A Worked Conservativity Instance}
\label{app:conservativity-example}

We trace conservativity (\cref{thm:conservativity}) on the statement of
\citet[Thm.~4.1]{RiehlShulman2017}, recovering an \RSTT{} proof from a \Rzk{} one and staying schematic
over its meta-theoretic parameter context throughout. For a case where the proof routes through a lemma whose type is not a
$\tau$-image, see \cref{app:backtranslation}.

The statement is schematic over the meta-theoretic parameter context $\mathcal{M}$ (\cref{def:mpl}) with cube $\cube{I}$, topes
$\tp{\psi}, \tp{\phi}$ subject to $\pt{t} : \cube{I} \mid \tp{\phi} \vdash \tp{\psi}$, and type families
$\ty{X} : \ty{\mathcal{U}}$ and
$\ty{Y} : \{\pt{t} : \cube{I} \mid \tp{\psi}\} \to \ty{X} \to \ty{\mathcal{U}}$. Over a section
$f : \ty{\prod}_{\pt{t} : \cube{I} \mid \tp{\phi}} \ty{\prod}_{x : \ty{X}} \ty{Y}(\pt{t}, x)$, it
asserts the equivalence $\type{T}$,
\[
\type{
  \exttype{\pt{t} : \cube{I} \mid \tp{\psi}}{\ty{\prod}_{x : \ty{X}} \ty{Y}(\pt{t}, x)}{\tp{\phi}}{f}
  \simeq
  \ty{\prod}_{x : \ty{X}} \exttype{\pt{t} : \cube{I} \mid \tp{\psi}}{\ty{Y}(\pt{t}, x)}{\tp{\phi}}{\lambda \pt{t}. f(\pt{t}, x)}
}.
\]
We recover an \RSTT{} inhabitant of $\type{T}$, schematic over $\mathcal{M}$, from a \Rzk{} one.

The translation keeps $\mathcal{M}$ schematic (\cref{def:translation}); no entry of $\mathcal{M}$
mentions an extension type, so $\tau(\mathcal{M}) = \mathcal{M}$.
The section $f$ is an ordinary term of the type context, so $\tau(\cube{\Xi})$ and $\tau(\tp{\Phi})$
are empty and
$\tau(\ty{\Gamma}) = f : \ty{\prod}_{\pt{t} : \cube{I} \mid \tp{\phi}} \ty{\prod}_{x : \ty{X}} \ty{Y}(\pt{t}, x)$.
A
\Rzk{} inhabitant of $\tau(\type{T})$ over $\tau(\mathcal{M})$ is the function \code{flip-ext-fun} of
\cref{fig:flip-ext-fun}, whose first-class parameters render $\mathcal{M}$ with the inclusion
$\tp{\phi} \vdash \tp{\psi}$ recorded as the domain of $\tp{\phi}$ (\cref{sec:tope-params}).

\begin{figure}
\begin{minted}[texcomments,frame=lines]{rzk}
#def flip-ext-fun
  ( I : CUBE) ( ψ : I → TOPE) ( ϕ : ψ → TOPE) -- $t : I \mid \phi \vdash \psi$
  ( X : U)
  ( Y : ψ → X → U)
  ( f : (t : ϕ) → (x : X) → Y t x)
  : Equiv
      ( ( t : ψ) → ((x : X) → Y t x) [ϕ t ↦ f t])
      ( ( x : X) → (t : ψ) → Y t x [ϕ t ↦ f t x])
  := ( \ g x t → g t x ,
       ( ( \ h t x → (h x) t , \ g → refl) ,
         ( \ h t x → (h x) t , \ h → refl)))
\end{minted}
\caption{A \Rzk{} proof of \cite[Thm.~4.1]{RiehlShulman2017}, schematic over the topes
$\tp{\psi}$ and $\tp{\phi}$. Verified with \Rzk{}~v0.7.8.}
\label{fig:flip-ext-fun}
\end{figure}

The surface listing is for context; the back-translation acts on the declarative term it desugars to
(\cref{sec:sugar}). Desugared, \code{flip-ext-fun} is the declarative-\Rzk{} term
\[
  M \;=\; ( \lambda g\, x\, \pt{t}.\, g\,\pt{t}\,x,\;
    ( ( \lambda h\, \pt{t}\, x.\, h\,x\,\pt{t},\; \lambda g.\, \mathsf{refl} ),\;
      ( \lambda h\, \pt{t}\, x.\, h\,x\,\pt{t},\; \lambda h.\, \mathsf{refl} ) ) )
\]
of type
\[
  \tau(\type{T}) \;=\; \mathsf{Equiv}\Big(
    \ty{\textstyle\prod}_{\pt{t} : \{\cube{I} \mid \tp{\psi}\}} \big(\ty{\textstyle\prod}_{x : \ty{X}} \ty{Y}(\pt{t}, x)\big)\,[\, \tp{\phi}(\pt{t}) \mapsto f(\pt{t}) \,],\;
    \ty{\textstyle\prod}_{x : \ty{X}} \ty{\textstyle\prod}_{\pt{t} : \{\cube{I} \mid \tp{\psi}\}} \ty{Y}(\pt{t}, x)\,[\, \tp{\phi}(\pt{t}) \mapsto f(\pt{t}, x) \,]
  \Big).
\]
The back-translation $\sigma$ keeps the cube
$\cube{I}$, topes $\tp{\psi}, \tp{\phi}$, and type families $\ty{X}, \ty{Y}$ of $\mathcal{M}$
schematic, with no instantiation to concrete topes. Both codomain restrictions are
single-branch and sit on a shape-$\ty{\Pi}$ codomain, so $\sigma$ transfers each boundary directly
(\cref{def:backtranslation}) into the matching side of $\type{T}$, and every subtyping step
(dropping a boundary by \textsc{S-Restr$'$}) back-translates to the identity. On the term $\sigma$
is thus the identity: the same $M$ inhabits $\type{T}$ in
\RSTT{}, schematic over $\mathcal{M}$, now with the two extension types in place of the restricted
shape-$\ty{\Pi}$s. The four reflexivity witnesses, which in \Rzk{} type-check by the restriction
computation \textsc{Restr-comp}, type-check in \RSTT{} by the extension-type computation
\textsc{Ext-comp} (\cref{prop:ext-admissible}): the same definitional equalities. Thus $M$ is the \RSTT{}
proof of the original equivalence, schematic over $\mathcal{M}$.

\section{Back-Translation through a Nested Restriction}
\label{app:backtranslation}

The back-translation $\sigma$ in the proof of conservativity (\cref{thm:conservativity}) takes a
\Rzk{} derivation whose conclusion is a translated \RSTT{} type and returns an \RSTT{} derivation. The
conclusion type is a $\tau$-image, but the derivation may use auxiliary lemmas whose types are
\emph{not} $\tau$-images. The delicate case is a restriction nested under an inner ordinary
$\ty{\Pi}$, which \RSTT{} cannot express as a single extension type. This appendix defines $\sigma$
in full, establishes its soundness, assembles the proof of
\cref{thm:conservativity}, and then gives a small well-typed \Rzk{} development
(\cref{fig:nested-restriction}) that exhibits exactly this delicate case, tracing how $\sigma$
back-translates it.

\begin{definition}[Back-translation $\sigma$]
\label{def:backtranslation}
  Fix a meta-theoretic parameter context $\mathcal{M}$ with translation $\tau(\mathcal{M})$
  (\cref{def:translation}). The back-translation $\sigma$ acts on a well-typed \dRzk{} derivation
  $D$ schematic over $\tau(\mathcal{M})$. Its conclusion is one of the \dRzk{} judgement forms
  (\cref{sec:rzk-judgements}): (1)~a cube, (2)~a point of a cube, (3)~a tope, (4)~a tope
  entailment, (5)~a type, (6)~a typing, (7)~a definitional equality, or (8)~a subtyping.
  The back-translation $\sigma(D)$ is an \RSTT{} derivation schematic over $\mathcal{M}$
  (\cref{def:mpl}), of the corresponding \RSTT{} judgement, defined by recursion on $D$; we require
  $D$ to have ext-style hypotheses (\cref{def:ext-style-derivation}).

  \emph{On types}, $\sigma$ is a structural function, independent of the ambient context.
  \begin{itemize}
    \item On the cube and tope layers and on the Martin-L\"of formers, $\sigma$ is the identity.
    \item A shape-$\ty{\Pi}$ whose codomain carries an ext-style restriction
      (\cref{def:ext-style}, so that $\pt{t} : \cube{I} \mid \tp{\phi} \vdash \tp{\psi}$)
      back-translates by \emph{boundary transfer}, the single-branch encoding
      \eqref{eq:ext-encoding} read in reverse:
      $\ty{\prod}_{\pt{t} : \{\cube{I} \mid \tp{\psi}\}} \ty{B}(\pt{t})\,[\, \tp{\phi}(\pt{t}) \mapsto b(\pt{t}) \,]$
      becomes
      $\exttype{\pt{t} : \cube{I} \mid \tp{\psi}}{\sigma\ty{B}(\pt{t})}{\tp{\phi}}{\sigma b}$. A
      shape-$\ty{\Pi}$ with no codomain restriction becomes the empty-boundary extension type
      $\exttype{\pt{t} : \cube{I} \mid \tp{\psi}}{\sigma\ty{B}}{\tp{\bot}}{\mathsf{rec}_{\tp{\bot}}}$;
      an overhanging codomain restriction is free-standing (\cref{def:ext-style}) and falls to
      the next clause.
    \item A free-standing restriction is \emph{erased}:
      $\sigma(S\,[\, \tp{\phi} \mapsto a \,]) = \sigma S$.
  \end{itemize}
  Because $\sigma$ on types never consults the context, it commutes with substitution and weakening
  on the nose: $\sigma(\ty{T}[N/y]) = (\sigma\ty{T})[\sigma N / y]$.

  \emph{On derivations}, $\sigma$ maintains, alongside the typing $\sigma M : \sigma\ty{S}$, one
  \emph{boundary equation} for each free-standing restriction $[\tp{\phi_i} \mapsto a_i]$ of
  $\ty{S}$: an \RSTT{} derivation of
  \[
    \cube{\Xi}, \Delta_i \mid \tp{\Phi}, \tp{\phi_i} \mid \ty{\Gamma}, \Delta_i \;\vdash_{\RSTT{}}\;
    \sigma M \cdot \Delta_i \,\equiv\, \sigma a_i,
  \]
  where $\Delta_i$ is the telescope of binders above the restriction and $\sigma M \cdot \Delta_i$
  applies $\sigma M$ to its binders. The erased boundaries are thus re-proved about each inhabitant,
  not remembered outside \RSTT{}. The clauses:
  \begin{itemize}
    \item On the cube and tope layers, items (1)--(4), and on the Martin-L\"of fragment, $\sigma$
      maps each rule to the corresponding \RSTT{} rule. The boundary equations are propagated:
      abstraction turns an equation of the body into one of the abstraction by $\beta$, and
      application instantiates an equation by substitution; for a term argument this is
      substitution into judgmental equality, which \RSTT{} adopts as the primitive rule
      $(\ast)$~\cite[\S2.1]{RiehlShulman2017}, and for a point argument it is the admissible
      point substitution.
    \item A variable owes no boundary equation, since its type is ext-style
      (\cref{def:ext-style-derivation}).
    \item \textsc{Restr-form} maps to the formation of $\sigma\ty{B}$. \textsc{Restr-intro} leaves
      the term unchanged; its second premise, $x \equiv a$ under $\tp{\phi}$, becomes the new root
      equation. \textsc{Restr-comp}, firing on an entailed $\tp{\phi}$, back-translates to the root
      equation itself, cut with the entailment $\tp{\Phi} \vdash \tp{\phi}$.
    \item At a shape-$\ty{\Pi}$ whose codomain restriction is transferred,
      \textsc{$\Pi$-intro-shape} maps to \textsc{Ext-intro}, whose boundary-agreement premise is the
      body's root equation; \textsc{$\Pi$-app-shape} maps to \textsc{Ext-app}, and the root equation
      of the restricted conclusion comes from \textsc{Ext-comp} on the head.
    \item A subtyping step back-translates to a coercion that is the identity on terms wherever
      possible, matching
      \dRzk{}'s coercion-free reading. The restriction layer contributes nothing: the restriction
      rules (\textsc{S-Restr$'$},
      \textsc{S-Restr}, \textsc{S-Restr-Bot}) pass on the coercion of their type premise, if they
      have one, with the target's boundary
      equation obtained from the source's (or holding trivially under $\tp{\bot}$
      \cite[Fig.~3]{RiehlShulman2017}). Shape-domain narrowing (\textsc{S-Pi-Shape}), and
      congruences whose source and target back-translate to different extension types, insert the
      \emph{$\eta$-reshape} $\lambda \pt{x}.\, c(f\,\pt{x})$, an $\eta$-expansion into a
      \emph{different} extension type, with $c$ the coercion of the codomain premise; \RSTT{} has
      no implicit reshaping. A case split
      (\textsc{S-Split}) assembles the branch coercions by $\mathsf{rec}_{\tp{\vee}}$
      (\cref{app:coh-rules}).
      \textsc{T-Sub} applies the resulting coercion; the coercion details are collected in
      \cref{app:coherence}.
  \end{itemize}
\end{definition}

Because $\sigma$ is the identity on the tope layer, it runs with the topes left \emph{schematic}: a
shape or a boundary tope built from the tope parameters of $\mathcal{M}$ back-translates to itself,
and the inclusion assumptions of $\mathcal{M}$ are available on both sides, so \RSTT{}'s tope logic
derives the required entailments schematically and no instantiation to concrete topes is required.

The free-standing clauses are where $\sigma$ does the real work. Erasure is lossy on types:
$\sigma$ forgets which boundary a free-standing restriction imposed. The derivation loses nothing,
however: the boundary equation re-proves the constraint about each inhabitant, produced exactly
where \dRzk{} establishes the boundary (\textsc{Restr-intro}) and consumed exactly where \dRzk{}
uses it (\textsc{Restr-comp}, and the agreement premise of \textsc{Ext-intro}).

\Cref{def:backtranslation} generates proof obligations: each clause must land in \RSTT{} and must
maintain the boundary equations. We collect and prove them now.

\begin{lemma}[Soundness of $\sigma$]
\label{lem:backtranslation-sound}
  On derivations with ext-style hypotheses, every clause of \cref{def:backtranslation} yields a
  well-formed \RSTT{} derivation and maintains the boundary equations. In particular:
  \emph{(1)}~boundary transfer back-translates a $\tau$-image to the original extension type, and
  $\sigma(\tau(\ty{T})) = \ty{T}$ for every \RSTT{} type $\ty{T}$;
  \emph{(2)}~every boundary equation has a source: the second premise of \textsc{Restr-intro}, or
  \textsc{Ext-comp} for a term of transferred type;
  \emph{(3)}~the equations are stable under the Martin-L\"of rules; and
  \emph{(4)}~subtyping back-translates to a coercion built from identities, $\eta$-reshapes
  $\lambda \pt{x}.\, c(f\,\pt{x})$, and case splits assembled by $\mathsf{rec}_{\tp{\vee}}$.
\end{lemma}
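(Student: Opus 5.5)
We will prove the lemma by simultaneous induction on the \dRzk{} derivation $D$, with the invariant stated in \cref{def:backtranslation}. For a typing conclusion $M : \ty{S}$, the back-translation $\sigma(D)$ is an \RSTT{} derivation of $\sigma M : \sigma\ty{S}$. Alongside it we carry, for each free-standing restriction $[\tp{\phi_i} \mapsto a_i]$ of the tail type $\ty{S}$, an \RSTT{} derivation of the boundary equation $\sigma M \cdot \Delta_i \equiv \sigma a_i$ under $\tp{\phi_i}$. For equality and subtyping conclusions the invariant takes a different form: an equality becomes an \RSTT{} equality of the back-translated terms, and a subtyping $\ty{A} \subtype \ty{B}$ becomes an \RSTT{} term $c$ with $x : \sigma\ty{A} \vdash c : \sigma\ty{B}$ that preserves the boundary equations.

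Before the induction we establish two facts about $\sigma$ on types. The first is part~(1). We prove $\sigma(\tau(\ty{T})) = \ty{T}$ by structural induction on $\ty{T}$. Every $\tau$-image is ext-style, by the remark after \cref{def:ext-style}. So the encoding \eqref{eq:ext-encoding} is always read back by the boundary-transfer clause and never by erasure, and the equation holds on the nose. The second fact is the substitution and weakening commutation already noted in the definition. It follows because $\sigma$ on types is structural and context-free, and the clause that fires depends only on whether a restriction is ext-style, a property stable under substitution since tope entailment is.

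For the induction itself we treat the rules in groups.

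\emph{Shared fragment and parameter layer.} Here the step is verbatim, and the boundary equations travel by $\beta$ for abstraction and by the substitution rule $(\ast)$ for application. This gives part~(3).

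\emph{Variables.} A variable owes no equation, because condition~(1) or~(2) of \cref{def:ext-style-derivation} makes its type ext-style.

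\emph{Restrictions.} For \textsc{Restr-intro}, the new equation is exactly its second premise. For \textsc{Restr-comp}, we cut the root equation with $\tp{\Phi} \vdash \tp{\phi}$.

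\emph{Transferred shape-$\ty{\Pi}$.} We reduce these rules to \textsc{Ext-intro} and \textsc{Ext-app}. The agreement premise of \textsc{Ext-intro} is supplied by the body's root equation. For an application, the root equation is obtained from \textsc{Ext-comp} on the head. Together these give part~(2).

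Throughout, condition~(3) of \cref{def:ext-style-derivation} guarantees that every concluded type is a tail type. This matters because the set of free-standing restrictions, and hence the set of equations to maintain, is indexed along the codomain spine, and that is exactly the telescope $\Delta_i$ on which application and abstraction act.

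The main obstacle is part~(4), the subtyping clauses, together with the requirement that equality (conversion) premises be preserved. For each subtyping rule we must produce a coercion $c$ and show two things: that $c$ transports every boundary equation of the source to the target, and that $c$ is definitionally the identity whenever $\sigma\ty{A} = \sigma\ty{B}$. The second property is what keeps \textsc{T-Sub} from changing terms.

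\emph{Restriction rules.} For \textsc{S-Restr$'$}, the target has fewer free-standing restrictions, so we simply drop equations. For \textsc{S-Restr}, the target equation under $\tp{\psi}$ is the source equation under $\tp{\phi}$, using $\tp{\Phi}, \tp{\psi} \vdash \tp{\phi}$, composed with the premise $t \equiv u$. For \textsc{S-Restr-Bot}, the equation holds vacuously by the $\tp{\bot}$ rules of \cite[Fig.~3]{RiehlShulman2017}.

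\emph{\textsc{S-Pi-Shape} and mismatched congruences.} Here we try the $\eta$-reshape $\lambda \pt{x}.\, c(f\,\pt{x})$. We check it with \textsc{Ext-intro}, whose agreement premise is discharged by \textsc{Ext-comp} on $f$ composed with the codomain coercion's equation transport.

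\emph{\textsc{S-Split}.} We assemble the branch coercions by $\mathsf{rec}_{\tp{\vee}}$. Coherence on the overlap holds because both branches are identity-like on terms.

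\emph{Conversion.} Finally we must check that $\sigma$ respects conversion. The only \dRzk{}-specific equation on these types is \textsc{Restr-comp}, which already back-translates to \textsc{Ext-comp} or to a stored boundary equation. The case-split rule for equality of \cref{sec:def-eq-topes} maps to the corresponding \RSTT{} disjunction elimination on judgements.
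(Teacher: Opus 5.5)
Your clauses for variables, \textsc{Restr-intro}/\textsc{Restr-comp}, transferred shape-$\ty{\Pi}$, and the Martin-L\"of propagation match the paper's sketch. The \textsc{S-Split} case, however, has a genuine gap.

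To form $\mathsf{rec}_{\tp{\vee}}(\tp{\phi} \mapsto c_{\tp{\phi}}(f),\, \tp{\xi} \mapsto c_{\tp{\xi}}(f))$ you must prove $c_{\tp{\phi}}(f) \equiv c_{\tp{\xi}}(f) : \sigma\ty{B}$ under $\tp{\Phi}, \tp{\phi}, \tp{\xi}$. The two coercions come from two unrelated derivations of $\ty{A} \subtype \ty{B}$, each weakened to the overlap. Saying that ``both branches are identity-like on terms'' does not establish this equation.

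\begin{itemize}
\item When $\sigma\ty{A} \neq \sigma\ty{B}$, which is the typical case (for instance a narrowed shape domain), neither coercion is the identity. Indeed $c(f) \equiv f$ is not even a well-typed \RSTT{} equation there.
\item The two branches can have quite different forms. One may be an $\eta$-reshape from \textsc{S-Pi-Shape}, the other may pass through \textsc{S-Trans} via a different intermediate type, or nest further \textsc{S-Split}s.
\end{itemize}

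What you need is coercion coherence (\cref{thm:coercion-coherence}): any two subtyping derivations of the same judgement yield definitionally equal coercions. This is a statement about \emph{pairs} of derivations, so your induction on the single derivation $D$ cannot supply it.

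The paper strengthens the induction instead. It proves coherence simultaneously with well-formedness of the \textsc{S-Split} coercion, by induction on combined derivation size. This is necessary because the nested \textsc{S-Split} coercions inside the branches must already be well-formed before coherence can even be stated.

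Coherence itself is not a one-liner either. The paper first refines $\tp{\Phi}$ by a resolving cover, so that stuck $\mathsf{rec}_{\tp{\vee}}$ type nodes unfold and both sides share a skeleton (\cref{lem:skeleton}). It then compares the coercions node by node, using their uniform forms $\lambda \pt{t}.\, c(g\,\pt{t})$, $\lambda y.\, c(g\,(c'\,y))$, and componentwise pairing, with $\eta$ absorbing the identity steps.

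Your extra invariant, that the coercion is definitionally the identity whenever $\sigma\ty{A} = \sigma\ty{B}$, has the same problem. It does not pass through \textsc{S-Trans} when the intermediate type has a different $\sigma$-image, since neither factor is then an identity. It is also unnecessary, because \textsc{T-Sub} may legitimately change the term by an $\eta$-reshape.

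There are two smaller omissions.
\begin{itemize}
\item For \textsc{S-Restr} you ignore the coercion of the type premise $\ty{A} \subtype \ty{B}$. The target equation therefore also needs that coercion applied by congruence and shown to collapse on the boundary section.
\item Well-formedness of a transferred type in general, not only for $\tau$-images, uses the ext-style entailment $\tp{\phi} \vdash \tp{\psi}$ as \RSTT{}'s \textsc{Ext-form} side condition. Your proposal never invokes it.
\end{itemize}
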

\begin{proof}[Proof sketch of \cref{lem:backtranslation-sound}]
  \emph{(1)}~Boundary transfer is \eqref{eq:ext-encoding} read in reverse; the ext-style
  entailment $\tp{\phi} \vdash \tp{\psi}$ (\cref{def:ext-style}) is exactly \RSTT{}'s side
  condition on the extension type, so the transferred type is well-formed by
  \cref{prop:ext-admissible}. $\sigma(\tau(\ty{T})) = \ty{T}$ follows by induction on $\ty{T}$,
  since $\tau$ and $\sigma$ are mutually inverse structural maps on the shared fragment and on
  extension types, and a $\tau$-image carries no free-standing restriction. \emph{(2)}~At \textsc{Restr-intro} the equation is literally the second premise;
  for a head $f$ of transferred type, \textsc{Ext-comp} gives
  $\sigma f(\pt{s}) \equiv \sigma b(\pt{s})$ under $\tp{\phi}(\pt{s})$. A variable owes no equation,
  since its type is ext-style; this is the one place the fragment condition enters, and it is
  exactly what fails in general (\cref{conj:full-conservativity}). \emph{(3)}~Abstraction turns an
  equation of the body into one of the abstraction by $\beta$; application instantiates an equation
  by substitution (the primitive rule $(\ast)$ of \RSTT{} for a term
  argument~\cite[\S2.1]{RiehlShulman2017}, admissible point substitution for a point one); and
  the $\sigma$-type of the conclusion is the
  substituted $\sigma$-type, because $\sigma$ on types is structural (\cref{def:backtranslation}).
  \emph{(4)}~The restriction layer of \textsc{S-Restr$'$} and \textsc{S-Restr} leaves the term
  unchanged; \textsc{S-Restr} passes on the coercion of its type premise. The target's
  equation follows from the source's by transitivity along $\tp{\psi} \vdash \tp{\phi}$ and
  $t \equiv u$; the coercion is applied by congruence and collapses on the boundary section by
  the $\eta$-rules. \textsc{S-Restr-Bot}'s equation holds under $\tp{\bot}$ by
  $\mathsf{rec}_{\tp{\bot}}$. The $\eta$-reshape of \textsc{S-Pi-Shape} is accepted by
  \textsc{Ext-intro} at the target type, its boundary premise supplied by \textsc{Ext-comp} on $f$,
  the tope entailment of the rule, and the codomain coercion's collapse on the boundary. The case
  split \textsc{S-Split} forms a $\mathsf{rec}_{\tp{\vee}}$ of the branch coercions; its
  formation premises hold by the entailed cover, and its coherence premise is coercion coherence
  at the extended context (\cref{thm:coercion-coherence}), proved simultaneously with this
  clause (\cref{app:coh-rules}).
\end{proof}

Note that $\sigma$ back-translates a \emph{derivation}, not a term: a \Rzk{} judgement may have
more than one derivation, and we make no claim that their $\sigma$-images agree. Conservativity
needs no such claim, since it produces an inhabitant from the one derivation it is given.

\begin{proof}[Proof sketch of \cref{thm:conservativity}]
  The argument is by recursion on the given \dRzk{} derivation, applying $\sigma$
  (\cref{def:backtranslation}) clause by clause and keeping the parameters of $\mathcal{M}$
  schematic throughout. No instantiation to concrete topes is
  needed: $\tau$ is the identity on the tope layer, so the tope parameters and their inclusion
  assumptions are the same schematic data on both sides.

  By \cref{lem:backtranslation-sound} every clause lands in \RSTT{}: a restriction on a
  shape-$\ty{\Pi}$ codomain transfers its boundary to the extension type, every free-standing
  restriction is erased with its boundary equation re-proved and consumed inside the derivation, and
  subtyping back-translates to the identity or an $\eta$-reshape. The conclusion type $\tau(\ty{T})$
  is a $\tau$-image, so it carries no free-standing restrictions and
  $\sigma(\tau(\ty{T})) = \ty{T}$ (\cref{lem:backtranslation-sound}); in particular every boundary
  equation is produced and consumed inside the derivation, and the result is a self-contained
  \RSTT{} derivation $\sigma(D)$ of $M' : \ty{T}$, schematic over $\mathcal{M}$. This establishes
  typability conservativity on
  derivations with ext-style hypotheses.

  One obligation remains, which upgrades the result to \emph{definitional} conservativity, namely that
  \Rzk{} proves no new equalities between translated terms:
  \begin{quote}
    \emph{Conversion agreement.} If
    $\tau\big(\cube{\Xi} \mid \tp{\Phi} \mid \ty{\Gamma}\big) \vdash_{\dRzk{}} \tau(s) \equiv \tau(t) : \tau(\ty{T})$,
    then $\cube{\Xi} \mid \tp{\Phi} \mid \ty{\Gamma} \vdash_{\RSTT{}} s \equiv t : \ty{T}$.
  \end{quote}
  The only \Rzk{}-specific equality on $\tau$-types is the restriction computation
  $\ty{A}\,[\,\tp{\phi} \mapsto a\,] \equiv a$ under $\tp{\phi}$, which is exactly \RSTT{}'s
  extension-type computation (\cref{prop:ext-admissible}) and applies only when the boundary tope
  $\tp{\phi}$ is entailed. The remaining equality rules of \dRzk{} are shared with \RSTT{}: the
  Martin-L\"of $\beta$- and $\eta$-rules, the branch-wise and case-split equalities of tope
  disjunction elimination (\cref{sec:def-eq-topes}), and the collapse under a contradictory tope
  context \cite[Fig.~3]{RiehlShulman2017}. Since $\tau$ is the identity on the tope layer, this is the same entailment
  in both theories, so \RSTT{} validates the same equation and conversion agreement holds with no
  further declarative obligation.
\end{proof}

We work over a meta-theoretic parameter context (\cref{def:mpl}) with a cube $\cube{I}$, topes
$\tp{\psi}, \tp{\phi}$ subject to $\pt{t} : \cube{I} \mid \tp{\phi} \vdash \tp{\psi}$, type families
$\ty{X} : \ty{\mathcal{U}}$ and
$\ty{Y} : \{\pt{t} : \cube{I} \mid \tp{\psi}\} \to \ty{X} \to \ty{\mathcal{U}}$, a boundary $f$ over
$\tp{\phi}$, and a section
$g$ whose type
$\ty{\prod}_{x : \ty{X}} \ty{\prod}_{\pt{t} : \{\cube{I} \mid \tp{\psi}\}} \ty{Y}(\pt{t}, x)\,[\, \tp{\phi}(\pt{t}) \mapsto f(\pt{t}, x) \,]$
is a $\tau$-image: its restriction sits on a shape-$\ty{\Pi}$ codomain. The surface listing of
\cref{fig:nested-restriction} is for context; the demonstration below is in declarative \Rzk{}, to
which it desugars (\cref{sec:sugar}).

\begin{figure}
\rzkinput[linenos,firstnumber=1]{src/nested-restriction.rzk}
\caption{A \Rzk{} definition \code{flip-via-nested} with a $\tau$-image type, whose proof routes
through a lemma \code{nested-lemma} of nested-restriction type. Both are verified with \Rzk{}~v0.7.8.}
\label{fig:nested-restriction}
\end{figure}

The lemma \code{nested-lemma} is the declarative term $L = \lambda \pt{t}\, x.\, g\,x\,\pt{t}$, which
swaps the two arguments of $g$. Its type
$\ty{\prod}_{\pt{t} : \{\cube{I} \mid \tp{\psi}\}} \ty{\prod}_{x : \ty{X}} \ty{Y}(\pt{t}, x)\,[\, \tp{\phi}(\pt{t}) \mapsto f(\pt{t}, x) \,]$
carries the restriction one $\ty{\Pi}$ deeper, in the codomain of the inner ordinary $\ty{\Pi}$ over
$x$, while its boundary tope $\tp{\phi}(\pt{t})$ depends only on the outer shape variable $\pt{t}$.
This type is not a $\tau$-image. The translation $\tau$ always places a restriction directly on a
shape-$\ty{\Pi}$ codomain (\cref{def:translation}), one $\ty{\Pi}$ out from this nested form.

The definition \code{flip-via-nested} is the declarative term $D = \lambda \pt{t}\, x.\, L\,\pt{t}\,x$,
of the $\tau$-image type
$\ty{\prod}_{\pt{t} : \{\cube{I} \mid \tp{\psi}\}} \big(\ty{\prod}_{x : \ty{X}} \ty{Y}(\pt{t}, x)\big)\,[\, \tp{\phi}(\pt{t}) \mapsto f(\pt{t}) \,]$,
and its proof goes through $L$. The proof relies on the nested restriction. Under $\tp{\phi}(\pt{t})$,
the restriction computation \textsc{Restr-comp} gives $L\,\pt{t}\,x \equiv f(\pt{t}, x)$. Hence the
body $\lambda x.\, L\,\pt{t}\,x$ agrees definitionally with the boundary $f(\pt{t})$ by $\eta$, which
is what its restriction requires.

The back-translation $\sigma$ treats the development as follows (\cref{def:backtranslation}). The
type of $L$ is a tail type whose restriction is free-standing (\cref{def:ext-style}), and the
derivation has ext-style hypotheses: its binders bind $\pt{t}$ at a shape and $x$ at $\ty{X}$. So
$\sigma$ \emph{erases} the restriction, sending the type of $L$ to the empty-boundary extension type
$\exttype{\pt{t} : \cube{I} \mid \tp{\psi}}{\ty{\prod}_{x : \ty{X}} \sigma\ty{Y}(\pt{t}, x)}{\tp{\bot}}{\mathsf{rec}_{\tp{\bot}}}$,
with $\sigma L = \lambda \pt{t}\, x.\, \sigma g\,x\,\pt{t}$. The erased boundary is re-proved as the
boundary equation: under $\tp{\phi}(\pt{t})$,
$\sigma L\,\pt{t}\,x \equiv \sigma g\,x\,\pt{t} \equiv f(\pt{t}, x)$, by $\beta$ and
\textsc{Ext-comp} on $\sigma g$, whose type is a $\tau$-image with a transferred boundary. On $g$
and on $D$, whose restrictions sit on a shape-$\ty{\Pi}$ codomain, $\sigma$ transfers the boundary
directly, so the $\sigma$-type of $D$ is exactly the $\tau$-image conclusion $\ty{T}$. The use of
$L$ inside $D$, which in \Rzk{} drops the restriction by \textsc{S-Restr$'$}, back-translates to the
identity; and the \textsc{Ext-intro} for $\sigma D = \lambda \pt{t}\, x.\, \sigma L\,\pt{t}\,x$ has
its boundary-agreement premise supplied by the boundary equation of $L$, pointwise, together with
$\eta$. No transport is needed: the free-standing restriction is erased, and the boundary it carried
is re-proved exactly where the $\tau$-image conclusion consumes it.

\section{Substitution and Specialisation to Instances}
\label{app:substitution}

A statement schematic over a meta-theoretic parameter context $\mathcal{M}$ (\cref{def:mpl}) specialises to each of its
instances by substituting concrete topes for the tope parameters. Conservativity itself
(\cref{thm:conservativity}) stays schematic over $\mathcal{M}$ and does not use this; the substitution
lemma below justifies recovering the per-instance Riehl--Shulman statements. In the implementation,
the tope-family parameters live in the variable context (\cref{sec:tope-params,sec:bidirectional}),
so substituting a concrete tope for one is substituting for a context variable. The tope context
$\tp{\Phi}$ collects only the topes assumed to hold, and such a substitution leaves it untouched.

\begin{lemma}[Substitution]
\label{lem:substitution}
  Substitution of a well-typed term for a variable preserves \Rzk{} derivability. Let $J$ be a
  derivable \Rzk{} judgement and let $x$ be a variable of its context. Let $u$ match the declaration
  of $x$, well-typed in the part of the context preceding $x$: a cube if $x$ is a cube variable, a
  point of the same cube if $x$ is a point variable, a tope family of the same type if $x$ is a
  tope-family variable, or a term of the same type if $x$ is a term variable. Then $J[u/x]$ is
  derivable.
\end{lemma}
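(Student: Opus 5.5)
The proof is by induction on the derivation of $J$, generalised so that $x$ may sit anywhere in the context and the context entries after $x$ are substituted as well. The statement we carry through the induction is: if $\cube{\Xi}_1, x, \cube{\Xi}_2 \mid \tp{\Phi} \mid \ty{\Gamma} \vdash \mathcal{C}$ is derivable and $u$ matches $x$ over the prefix, then $\cube{\Xi}_1, \cube{\Xi}_2[u/x] \mid \tp{\Phi}[u/x] \mid \ty{\Gamma}[u/x] \vdash \mathcal{C}[u/x]$ is derivable. Judgements are treated analogously when $x$ lives in the type context. Because terms are represented as a free scoped monad (\cref{sec:syntax-rep}), substitution commutes with binders and with other substitutions on the nose, with no capture. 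For each rule we therefore substitute into its premises, extending the context under binders ($\lambda$, $\ty{\Pi}$, shape-$\ty{\Pi}$, tope extension $\tp{\Phi}, \tp{\phi}$) and weakening $u$ accordingly, apply the induction hypothesis, and reapply the same rule. This requires a weakening lemma, proved first by the same kind of induction. The variable rule is the base case: for $x$ itself it yields $u$ at the substituted type, which is the hypothesis on $u$; for any other variable the rule is unchanged.

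The cases split by the sort of $x$. For a term variable, and for a cube variable, the argument is the standard one for Martin-L\"of type theory, plus the rules of \cref{fig:rzk-split-ext-types} and \cref{fig:subtyping}. In each rule the premises and conclusion are substitution-stable forms, so the rule schema is closed under substitution. For a point variable $x : \cube{I}$, substitution also acts on topes and on tope entailment premises such as $\cube{\Xi} \mid \tp{\Phi} \vdash \tp{\psi}(\pt{s})$ in \textsc{$\Pi$-app-shape} and \textsc{Restr-comp}. We therefore need a separate lemma that tope entailment is closed under point substitution. It follows by induction on entailment derivations in the coherent tope logic, since the directed-interval axioms are schematic in their points; this is the admissible point substitution already used in \cref{def:backtranslation}.

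The main obstacle is the tope-family case, $x : \{\pt{t} : \cube{I} \mid \tp{\psi}\} \to \tp{\mathsf{TOPE}}$. Here substitution changes the tope context and every entailment that mentions $x$, and what it produces is not literally the substituted judgement with the same context shape. I would handle it by first proving that entailment is stable under substituting a tope formula for an atomic predicate symbol. The tope logic has no implication or negation, and the family variable occurs only positively as an atom, so instantiating the atom $x(\pt{t})$ by $u(\pt{t})$ in every rule instance is again a valid derivation; the proof is by induction on the entailment derivation. The inserted domain conjunct (\cref{sec:domain-conjunct}) matters here. Every occurrence of $x(\pt{t})$ has already been elaborated to $\tp{\psi}(\pt{t}) \tp{\wedge} x(\pt{t})$, and substituting $u$ gives $\tp{\psi}(\pt{t}) \tp{\wedge} u(\pt{t})$. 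So premises that relied on $x(\pt{t}) \vdash \tp{\psi}(\pt{t})$ remain derivable, via $\tp{\wedge}$-elimination, without needing $u$ itself to entail $\tp{\psi}$.

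The one remaining delicate rule is \textsc{S-TopeFam}, whose premise is checked without $\tp{\Phi}$. Its substitution instance is again an absolute entailment between substituted families, so it is covered by the atom-substitution lemma. With these lemmas in place, the type-level rules go through uniformly: \textsc{Restr-comp}, \textsc{S-Split}, and the case-split equality of \cref{sec:def-eq-topes} use only entailments that the lemmas transport. This completes the induction.
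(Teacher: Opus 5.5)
Your proposal is correct and follows the same route as the paper. Both argue by induction on the derivation, replace each use of the variable rule for $x$ by the derivation of $u$, and rely on one non-standard ingredient: tope entailment is closed under substitution of points and of tope families. You spell this out in more detail than the paper's sketch (weakening, generalisation over later context entries, the inserted domain conjunct, and \textsc{S-TopeFam}). One small correction: replacing the atom $x(t)$ by $u(t)$ preserves entailment because every tope rule and interval axiom is schematic in its formulas and none mentions $x$. It does not depend on $x$ occurring only positively, since uniform substitution would survive implication and negation too; positivity is what a monotonicity argument needs, not this one.
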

\begin{proof}[Proof sketch]
  By induction on the derivation of $J$. Each use of the variable rule for $x$ is replaced by the
  derivation of $u$, and the substitution $[u/x]$ is propagated through every premise. The cube and
  tope layers and the Martin-L\"of fragment have the standard substitution structure. The only
  \Rzk{}-specific premises are those of the shape-$\ty{\Pi}$ and the restriction
  (\cref{fig:rzk-split-ext-types}). These are closed under substitution because tope entailment is: a
  substituted entailment $\cube{\Xi} \mid \tp{\Phi} \vdash \tp{\phi}$ remains derivable, since the
  tope logic is closed under substitution of points and tope families.
\end{proof}

\noindent Specialising a schematic statement to one of its instances (\cref{def:mpl})
is exactly such a substitution: replacing each tope parameter $\tp{p_i}$ by its concrete tope
$\tp{\psi_i}$ yields, by the lemma, a derivable \Rzk{} judgement, and back-translation
(\cref{thm:conservativity}) then gives the corresponding \RSTT{} statement at that instance.

\section{Declarative Rules and Coercion Coherence}
\label{app:coherence}

This appendix supplies two ingredients for the soundness of the back-translation $\sigma$
(\cref{lem:backtranslation-sound}). We start by fixing the declarative rules of \dRzk{} that
\cref{sec:declarative-rzk} calls standard (\cref{app:coh-rules}). We then prove two lemmas about
$\sigma$. First, $\sigma$ respects definitional equality of types (\cref{app:coh-typeeq}); this
makes the identity coercion of \textsc{S-Conv} well-typed. Second, any two subtyping derivations
of the same judgement yield definitionally equal coercions (\cref{app:coh-coercion}); this
discharges the coherence premise of the $\mathsf{rec}_{\tp{\vee}}$ that assembles the coercion
of a case split (\textsc{S-Split}).

Throughout, all derivations have ext-style hypotheses (\cref{def:ext-style-derivation}) and are
schematic over a fixed meta-theoretic parameter context $\mathcal{M}$.

\subsection{The Remaining Declarative Rules}
\label{app:coh-rules}

The soundness of $\sigma$ quantifies over all \dRzk{} derivations, so the rule set must be fixed
in full. \emph{Definitional equality} of \dRzk{} is the congruent closure of:
\begin{itemize}
  \item the $\beta$- and $\eta$-rules of the shared Martin-L\"of fragment, where, following
    \citet[\S4]{RiehlShulman2017}, both $\ty{\Pi}$- and $\ty{\Sigma}$-types
    carry definitional $\eta$;
  \item the $\beta$-, $\eta$-, and restriction-computation rules of the split constructs
    (\cref{fig:rzk-split-ext-types});
  \item the two tope-disjunction rules of \cref{sec:def-eq-topes}, read declaratively;
  \item the collapse under a contradictory tope context and the compatibility of tope equality
    with definitional equality, both shared with \RSTT{}~\cite[\S2.1]{RiehlShulman2017};
  \item the pruning of $\tp{\bot}$-faces,
    $\ty{B}\,[\,\tp{\bot} \mapsto \mathsf{rec}_{\tp{\bot}}\,] \equiv \ty{B}$, which the checker
    performs during normalisation (\cref{app:rules-algorithmic}).
\end{itemize}
For the unit type, Riehl and Shulman do not fix an $\eta$-rule; we adopt it on both sides of the
translation, matching the implementation (\cref{sec:bidirectional}), so the choice does not
affect the translation. \emph{Subtyping} is generated by the five
rules of \cref{fig:subtyping}, including the case split \textsc{S-Split} on the subtyping
judgement, together with the covariant tope-family rule \textsc{S-TopeFam}
(\cref{sec:tope-params}) and the structural rules of \cref{fig:subtyping-std}. The
implementation exercises \textsc{S-Split} because it checks subtyping with the equality routine,
whose case analyses run in subtyping mode as well
(\cref{sec:checking-subtyping,sec:def-eq-topes}). Every other former is
covered by \textsc{S-Conv}. There is no separate conversion rule:
conversion is \textsc{T-Sub} composed with \textsc{S-Conv}.

\begin{figure*}
  \begin{prooftree}
    \AxiomC{$\cube{\Xi} \mid \tp{\Phi} \mid \ty{\Gamma} \vdash \ty{A} \equiv \ty{B}$}
    \RightLabel{\textsc{S-Conv}}
    \UnaryInfC{$\cube{\Xi} \mid \tp{\Phi} \mid \ty{\Gamma} \vdash \ty{A} \subtype \ty{B}$}
  \end{prooftree}
  \begin{prooftree}
    \AxiomC{$\cube{\Xi} \mid \tp{\Phi} \mid \ty{\Gamma} \vdash \ty{A} \subtype \ty{B}$}
    \AxiomC{$\cube{\Xi} \mid \tp{\Phi} \mid \ty{\Gamma} \vdash \ty{B} \subtype \ty{C}$}
    \RightLabel{\textsc{S-Trans}}
    \BinaryInfC{$\cube{\Xi} \mid \tp{\Phi} \mid \ty{\Gamma} \vdash \ty{A} \subtype \ty{C}$}
  \end{prooftree}
  \begin{prooftree}
    \AxiomC{$\cube{\Xi} \mid \tp{\Phi} \mid \ty{\Gamma} \vdash \ty{A}' \subtype \ty{A}$}
    \AxiomC{$\cube{\Xi} \mid \tp{\Phi} \mid \ty{\Gamma}, y : \ty{A}' \vdash \ty{B} \subtype \ty{B}'$}
    \RightLabel{\textsc{S-Pi}}
    \BinaryInfC{$\cube{\Xi} \mid \tp{\Phi} \mid \ty{\Gamma} \vdash \ty{\prod}_{y : \ty{A}} \ty{B} \subtype \ty{\prod}_{y : \ty{A}'} \ty{B}'$}
  \end{prooftree}
  \begin{prooftree}
    \AxiomC{$\cube{\Xi} \mid \tp{\Phi} \mid \ty{\Gamma} \vdash \ty{A} \subtype \ty{A}'$}
    \AxiomC{$\cube{\Xi} \mid \tp{\Phi} \mid \ty{\Gamma}, x : \ty{A} \vdash \ty{B} \subtype \ty{B}'$}
    \RightLabel{\textsc{S-Sigma}}
    \BinaryInfC{$\cube{\Xi} \mid \tp{\Phi} \mid \ty{\Gamma} \vdash \ty{\sum}_{x : \ty{A}} \ty{B} \subtype \ty{\sum}_{x : \ty{A}'} \ty{B}'$}
  \end{prooftree}
  \caption{The standard subtyping rules of \dRzk{}, completing \cref{fig:subtyping}: conversion,
  transitivity, and the congruences for ordinary $\ty{\Pi}$ (contravariant domain) and $\ty{\Sigma}$
  (covariant components), matching the variance discipline of the implementation
  (\cref{sec:checking-subtyping}). Reflexivity is \textsc{S-Conv} over a reflexive equality.}
  \label{fig:subtyping-std}
\end{figure*}

The $\sigma$-coercions for these rules, extending \cref{def:backtranslation}: \textsc{S-Conv} and
\textsc{S-Trans} give the identity and composition; \textsc{S-Pi} gives
$f \mapsto \lambda y.\, c_{\ty{B}}\big(f\,(c_{\ty{A}'}\,y)\big)$, with $c_{\ty{A}'}$ the coercion of
the (contravariant) domain premise; \textsc{S-Sigma} gives
$p \mapsto \big(c_{\ty{A}}(\pi_1 p),\, c_{\ty{B}}(\pi_2 p)\big)$; \textsc{S-TopeFam} gives the
identity, since $\sigma$ is the identity on the parameter layer. \textsc{S-Split} assembles the
branch coercions by tope disjunction elimination,
$f \mapsto \mathsf{rec}_{\tp{\vee}}\big(\tp{\phi} \mapsto c_{\tp{\phi}}(f),\, \tp{\xi} \mapsto c_{\tp{\xi}}(f)\big)$.
Its coherence premise, that the branch coercions agree under $\tp{\Phi}, \tp{\phi}, \tp{\xi}$,
is coercion coherence (\cref{thm:coercion-coherence}) at the extended context; we therefore
establish the clause and the theorem by simultaneous induction on the size of the derivations.
As with the
$\eta$-reshapes of
\cref{def:backtranslation}, these are $\eta$-expansions dressed with recursive coercions, and they
collapse to the identity by the $\eta$-rules when the recursive coercions do; for
\textsc{S-Split} the collapsing rule is the $\eta$-rule of
$\mathsf{rec}_{\tp{\vee}}$~\cite[Fig.~3]{RiehlShulman2017}.

\subsection{$\sigma$ Respects Definitional Equality of Types}
\label{app:coh-typeeq}

Recall that $\sigma$ on types is total and structural (\cref{def:backtranslation}): it does not
consult the ambient context, transfers boundaries at shape-$\ty{\Pi}$ codomains, and erases
free-standing restrictions.

\begin{lemma}
\label{lem:sigma-type-eq}
  If $\cube{\Xi} \mid \tp{\Phi} \mid \ty{\Gamma} \vdash \ty{A} \equiv \ty{B}$ in \dRzk{}, then
  $\cube{\Xi} \mid \tp{\Phi} \mid \ty{\Gamma} \vdash \sigma\ty{A} \equiv \sigma\ty{B}$ in \RSTT{}.
\end{lemma}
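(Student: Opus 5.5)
We prove \cref{lem:sigma-type-eq} by induction on the derivation of the type equality $\ty{A} \equiv \ty{B}$ in \dRzk{}, taking definitional equality of types to be the congruent closure of the generating rules fixed in \cref{app:coh-rules}. Since $\sigma$ on types is structural and context-independent (\cref{def:backtranslation}), most cases are immediate. We treat the closure rules first, then the congruences, then the generating axioms.

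\emph{Closure rules.} Reflexivity, symmetry and transitivity map to the same \RSTT{} rules.

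\emph{Congruences for shared formers.} For the cube, tope and Martin-L\"of formers ($\ty{\Pi}$, $\ty{\Sigma}$, identity, unit), $\sigma$ commutes with the former, so the inductive hypotheses assemble into the same congruence in \RSTT{}. Two inputs need care. Equal subterms occurring inside types (for instance the endpoints of an identity type) require that $\sigma$ respects \emph{term} equality. We obtain this from conversion agreement, established in the proof sketch of \cref{thm:conservativity}, extended from $\tau$-images to tail types by the same case analysis. The induction is therefore run simultaneously for types and terms. Second, $\sigma$ commutes with substitution on the nose (\cref{def:backtranslation}), which handles the $\beta$-instances arising when a type family is applied.

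\emph{Restriction congruence.} This case is the delicate one. Consider an equality $\ty{B}[\tp{\phi} \mapsto a] \equiv \ty{B}'[\tp{\phi}' \mapsto a']$, obtained from $\ty{B} \equiv \ty{B}'$, $\tp{\phi} \dashv\vdash \tp{\phi}'$ and $a \equiv a'$ under $\tp{\phi}$. I split on position. If both sides are free-standing, then $\sigma$ erases both restrictions and the hypothesis on $\ty{B}$ suffices. If both sides are ext-style codomains of a shape-$\ty{\Pi}$, then boundary transfer produces two extension types. The \RSTT{} extension-type congruence then applies, using the hypotheses on the codomains and the boundaries; the latter come from the simultaneous term induction.

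The main obstacle is a \emph{mismatch}: one side is ext-style while the other is free-standing. Ext-style status depends only on the boundary tope entailing the shape tope, and that entailment is invariant under tope equivalence, so this mismatch cannot arise from congruence alone. It can, however, arise from the generating axioms. The first such axiom is the pruning rule $\ty{B}[\tp{\bot} \mapsto \mathsf{rec}_{\tp{\bot}}] \equiv \ty{B}$. Under a shape-$\ty{\Pi}$, boundary transfer sends the left side to an extension type with boundary $\tp{\bot}$. Since $\tp{\bot}$ entails every shape tope, the restriction is ext-style, so its transfer is exactly the empty-boundary extension type that $\sigma$ assigns to the right side. Away from a shape-$\ty{\Pi}$ the restriction is erased. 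In both cases the two sides coincide syntactically.

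\emph{Remaining generating axioms.} The second source of mismatch is collapse under a contradictory tope context. Both sides back-translate into the same contradictory context, where \RSTT{} collapses them as well \cite[\S2.1]{RiehlShulman2017}. The tope-case-split rule maps to its \RSTT{} counterpart, using the inductive hypotheses in each branch. For $\beta$ and $\eta$ at the type level, $\sigma$ commutes with substitution and abstraction.

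I expect the main work to lie in checking that no generating rule relates an ext-style restriction to a free-standing one except via $\tp{\bot}$-pruning or contradiction. This requires inspecting each rule of \cref{app:coh-rules} in turn.
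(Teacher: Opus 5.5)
Your overall plan (induction on the equality derivation, structural congruences, $\tp{\bot}$-pruning invisible to $\sigma$, collapse in a contradictory context) is the paper's, but the step you use to dispose of the restriction congruence fails.

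Being ext-style is an \emph{absolute} entailment $\pt{t} : \cube{I} \mid \tp{\phi} \vdash \tp{\psi}$. The codomain of a shape-$\ty{\Pi}$, however, lives under its shape tope (\textsc{$\Pi$-form-shape}), so codomains are compared in a context that already contains $\tp{\psi}$. There a boundary tope can change without the type changing, in two ways. The restriction congruence compares boundary topes by entailment relative to the ambient context (cf.\ \textsc{Unify-Restr}). In addition, the compatibility of tope equality with definitional equality, a generating rule of \cref{app:coh-rules} missing from your case list, rewrites points inside $\tp{\phi}$.

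Concretely, under $\pt{t} \tp{\equiv} \pt{0}$ one derives both $\ty{A}\,[\, \pt{t} \tp{\equiv} \pt{0} \mapsto x \,] \equiv \ty{A}\,[\, \pt{t} \tp{\equiv} \pt{0} \tp{\vee} \pt{t} \tp{\equiv} \pt{1} \mapsto x \,]$ and $\ty{A}\,[\, \pt{t} \tp{\equiv} \pt{0} \mapsto x \,] \equiv \ty{A}\,[\, \pt{0} \tp{\equiv} \pt{0} \mapsto x \,]$. Hence the ext-style type $\ty{\prod}_{\pt{t} : \{\cube{\mathbbm{2}} \mid \pt{t} \tp{\equiv} \pt{0}\}} \ty{A}\,[\, \pt{t} \tp{\equiv} \pt{0} \mapsto x \,]$ is definitionally equal to the overhanging type of \cref{ex:ext-style-types} (with $a := x$). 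By \cref{def:backtranslation}, $\sigma$ sends the former to $\exttype{\pt{t} : \cube{\mathbbm{2}} \mid \pt{t} \tp{\equiv} \pt{0}}{\ty{A}}{\pt{t} \tp{\equiv} \pt{0}}{x}$. It sends the latter to the empty-boundary $\exttype{\pt{t} : \cube{\mathbbm{2}} \mid \pt{t} \tp{\equiv} \pt{0}}{\ty{A}}{\tp{\bot}}{\mathsf{rec}_{\tp{\bot}}}$. \RSTT{} does not identify these two types: a variable $f$ of the latter need not satisfy $f(\pt{0}) \equiv x$. So the mismatch does arise from congruence, and no rule-by-rule inspection can rule it out.

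The paper's own proof gives no argument here either. It asserts that both sides erase or both transfer, and its closing remark only excludes equating a restricted type with an unrestricted one. So you have located the real crux. Closing it, however, seems to require making $\sigma$, or the notion of ext-style, insensitive to such rewriting under the shape tope, not a finer case check.

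Two smaller points. First, \RSTT{} has no case-split rule on definitional equality, so the case-split rule does not simply ``map to its \RSTT{} counterpart''. The paper derives it as $t \equiv \mathsf{rec}_{\tp{\vee}}(\tp{\phi} \mapsto t, \tp{\xi} \mapsto t) \equiv \mathsf{rec}_{\tp{\vee}}(\tp{\phi} \mapsto s, \tp{\xi} \mapsto s) \equiv s$, using the $\eta$-rule and the congruence of $\mathsf{rec}_{\tp{\vee}}$. It derives the branch-wise rule (also absent from your list) the same way, with the branch computation rules.

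Second, conversion agreement does not extend to tail types ``by the same case analysis''. At a free-standing restriction, \textsc{Restr-comp} is not an instance of \textsc{Ext-comp}. There $\sigma$ validates it only through the boundary equation it carries along the typing derivation of the term (\cref{def:backtranslation}, \cref{lem:backtranslation-sound}). Your simultaneous induction must therefore also run over typing derivations and carry those equations. It cannot just cite conversion agreement, which in the paper sits downstream of this lemma.
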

\begin{proof}
  By induction on the equality derivation. If the tope context is contradictory, every two terms
  are equal in \RSTT{} as well \cite[Fig.~3]{RiehlShulman2017} and the claim is immediate; assume
  it consistent, so the collapse rule does not apply. Congruences map to \RSTT{} congruences: for
  the
  restriction congruence both sides erase (at a free-standing position) or both transfer, the
  latter giving a congruence between extension types whose boundary data are related by the
  premises; for the other formers $\sigma$ is structural. The $\beta$- and $\eta$-rules map to the
  same rules of \RSTT{}, using \cref{app:coh-rules} for $\ty{\Sigma}$ and the unit type, and the
  tope-equality rule maps to its \RSTT{} counterpart \cite[(2.2)]{RiehlShulman2017}.
  $\tp{\bot}$-face pruning is invisible to $\sigma$: a shape-$\ty{\Pi}$ codomain carries the
  $\tp{\bot}$-face into the empty boundary that $\sigma$ produces anyway, and elsewhere the face
  is free-standing and erased, so both sides have the same $\sigma$-image. The
  tope-disjunction rules are derivable in \RSTT{} from the rules of
  \cite[Fig.~3]{RiehlShulman2017} together with the congruence of $\mathsf{rec}_{\tp{\vee}}$,
  which compares branches under the branch topes (\RSTT{} leaves its congruences implicit; this
  one is the $\mathsf{rec}_{\tp{\vee}}$ analogue of the $\lambda$-congruence under a binder). For
  the case-split rule the derivation is
  $t \equiv \mathsf{rec}_{\tp{\vee}}(\tp{\phi} \mapsto t, \tp{\xi} \mapsto t)
     \equiv \mathsf{rec}_{\tp{\vee}}(\tp{\phi} \mapsto s, \tp{\xi} \mapsto s)
     \equiv s$,
  by the $\eta$-rule of $\mathsf{rec}_{\tp{\vee}}$ (twice, using the entailed cover
  $\tp{\phi} \tp{\vee} \tp{\xi}$) around the congruence, whose premises are exactly the two
  branch equalities; the branch-wise rule derives in the same way, with the branch computation
  rules of \cite[Fig.~3]{RiehlShulman2017} supplying the branches. In a consistent context, no
  remaining rule of
  \dRzk{} equates a restricted type with an unrestricted one (the restriction computation
  \textsc{Restr-comp} is a rule about terms), so the two sides of every equality step have matching
  restriction structure and the induction goes through.
\end{proof}

\subsection{Skeletons and Coercion Coherence}
\label{app:coh-coercion}

The coercions produced by $\sigma$ are governed by the outer $\ty{\Pi}$/$\ty{\Sigma}$/extension
structure of the types involved, which we isolate as a skeleton.

\begin{definition}[$\tp{\Phi}$-skeleton]
\label{def:skeleton}
  Fix a tope context $\tp{\Phi}$. The \emph{$\tp{\Phi}$-skeleton} of an \RSTT{} type is computed by
  first unfolding every $\mathsf{rec}_{\tp{\vee}}$ type node one of whose branch topes is
  $\tp{\Phi}$-entailed (if several are, the branches agree under $\tp{\Phi}$ by the coherence
  premise of $\mathsf{rec}_{\tp{\vee}}$, so the choice is immaterial up to definitional equality),
  and then recording the tree of extension-type, $\ty{\Pi}$-, and $\ty{\Sigma}$-nodes down to the
  first node of any other form (a \emph{leaf}: a base type or a stuck
  $\mathsf{rec}_{\tp{\vee}}$).
\end{definition}

A stuck $\mathsf{rec}_{\tp{\vee}}$ is a skeleton leaf, but it need not stay stuck: its cover is
$\tp{\Phi}$-entailed, so refining $\tp{\Phi}$ by one of its branch topes unfolds it. The
case-split rules (\cref{sec:def-eq-topes}, \textsc{S-Split}) relate types across exactly such
refinements. For example, the branch-wise rule derives
$\ty{X} \equiv \mathsf{rec}_{\tp{\vee}}(\tp{\phi} \mapsto \ty{X}, \tp{\xi} \mapsto \ty{X})$ over
any entailed cover $\tp{\phi} \tp{\vee} \tp{\xi}$, equating a type with a stuck
$\mathsf{rec}_{\tp{\vee}}$. Thus the two sides of a derivable equality need \emph{not} share
their $\tp{\Phi}$-skeleton; they do so only after the stuck nodes are resolved. We package
the refinements as follows. Fix a consistent $\tp{\Phi}$ and a finite set of types. A
\emph{resolving tope} for the set conjoins one branch tope from the cover of each stuck
$\mathsf{rec}_{\tp{\vee}}$ node occurring in the set, recursively in the chosen branches. Since
each cover is $\tp{\Phi}$-entailed, the resolving topes form a finite $\tp{\Phi}$-entailed
cover, and under a consistent refinement $\tp{\Phi}, \tp{\theta}$ every
$\mathsf{rec}_{\tp{\vee}}$ node of the set unfolds. A judgement then holds under $\tp{\Phi}$
once it holds under every $\tp{\Phi}, \tp{\theta}$: for definitional equality this is the
case-split rule, derivable in \RSTT{} (\cref{lem:sigma-type-eq}), and a contradictory refinement
holds by the collapse.

\begin{lemma}[Skeleton preservation]
\label{lem:skeleton}
  Let $\tp{\Phi}$ be consistent. If $\cube{\Xi} \mid \tp{\Phi} \mid \ty{\Gamma} \vdash \ty{A} \equiv \ty{B}$ or
  $\cube{\Xi} \mid \tp{\Phi} \mid \ty{\Gamma} \vdash \ty{A} \subtype \ty{B}$ in \dRzk{}, then for
  every consistent resolving refinement $\tp{\Phi}, \tp{\theta}$ of $\{\sigma\ty{A}, \sigma\ty{B}\}$,
  the types $\sigma\ty{A}$ and $\sigma\ty{B}$ have the same $(\tp{\Phi}, \tp{\theta})$-skeleton.
\end{lemma}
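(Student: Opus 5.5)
We prove \cref{lem:skeleton} by simultaneous induction on the size of the given equality or subtyping derivation, for all consistent $\tp{\Phi}$ at once. The induction hypothesis quantifies over consistent extensions of the tope context, because several rules (\textsc{S-Split}, the case-split equality, \textsc{S-Pi-Shape}, \textsc{Restr-form}) have premises under $\tp{\Phi},\tp{\phi}$. The skeleton is read off $\sigma$ structurally. By \cref{def:backtranslation}, $\sigma$ sends a shape-$\ty{\Pi}$ to an extension-type node, erases free-standing restrictions, and is the identity on $\ty{\Pi}$, $\ty{\Sigma}$ and leaves. Hence we first record the observation that, under a fixed resolving refinement, the $(\tp{\Phi},\tp{\theta})$-skeleton of $\sigma\ty{A}$ is the tree of shape-$\ty{\Pi}$, $\ty{\Pi}$ and $\ty{\Sigma}$ nodes of $\ty{A}$ itself, with restriction nodes made transparent. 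This lets us argue about \dRzk{} types directly.

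We then go through the rules case by case.
\begin{enumerate*}
\item Congruence rules for $\ty{\Pi}$, $\ty{\Sigma}$ and shape-$\ty{\Pi}$, in both equality and subtyping form (\textsc{S-Pi}, \textsc{S-Sigma}, \textsc{S-Pi-Shape}), produce the same head node on both sides, and the children agree by the induction hypothesis, applied in the extended type, cube and tope contexts. A resolving refinement of the parent restricts to resolving refinements of the children, because stuck nodes in the children are among those of the parent.
\item The restriction rules \textsc{S-Restr}, \textsc{S-Restr$'$}, \textsc{S-Restr-Bot}, the restriction congruence and $\tp{\bot}$-face pruning change only restriction structure. Restrictions are transparent in the skeleton, so these cases reduce to the type premise, or are immediate.
\item \textsc{S-Trans} and transitivity of equality compose. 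For these we take a resolving refinement of $\{\sigma\ty{A},\sigma\ty{B},\sigma\ty{C}\}$; it refines one for each pair, and skeleton equality is transitive.
\item The $\beta$- and $\eta$-rules act on terms inside type positions, so they do not change the node tree above leaves. Where a $\beta$-redex is itself a type leaf, for example a type family applied to arguments, it stays a leaf on both sides.
\end{enumerate*}

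The main obstacle is the tope-disjunction rules, which are exactly where skeletons diverge before resolution. The branch-wise rule equates $t$ with $\mathsf{rec}_{\tp{\vee}}(\tp{\phi_i}\mapsto a_i)$, and the case-split rule and \textsc{S-Split} derive a judgement from its instances under $\tp{\Phi},\tp{\phi}$ and $\tp{\Phi},\tp{\xi}$. For these we fix a consistent resolving refinement $\tp{\Phi},\tp{\theta}$. By definition of a resolving tope, $\tp{\theta}$ selects a branch of every stuck $\mathsf{rec}_{\tp{\vee}}$ node, so $\tp{\theta}$ entails one of the branch topes, say $\tp{\phi}$. Then $\tp{\Phi},\tp{\theta}$ is a consistent extension of $\tp{\Phi},\tp{\phi}$, and it is resolving for the premise pair, possibly after conjoining further branch choices. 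These choices only refine the context further, and skeletons are stable under consistent refinement once every node is unfolded. The induction hypothesis on the $\tp{\phi}$-premise then gives equal skeletons. If the refinement happens to be inconsistent, it is excluded by hypothesis.

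The step I expect to require the most care is showing that the unfolding in \cref{def:skeleton} is well defined and monotone. Under a refinement that entails several branch topes, we must check that the chosen branches have the same skeleton, not merely definitionally equal types. I would discharge this by applying the induction hypothesis to the coherence premise of the $\mathsf{rec}_{\tp{\vee}}$, which is a smaller equality derivation under $\tp{\Phi},\tp{\phi_i}\tp{\wedge}\tp{\phi_j}$. This is why the induction is on derivation size rather than on types.

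The remaining cases are handled as follows. For the collapse rule, note that a consistent $\tp{\Phi},\tp{\theta}$ never satisfies its premise. The tope-equality compatibility rule only substitutes points, which does not change node shapes.
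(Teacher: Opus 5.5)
Your outline matches the paper's: induction on the derivation, restrictions transparent in the skeleton, congruences recursing, and the tope-disjunction rules as the crux. However, your step for \textsc{S-Split} and for the context case-split equality rule fails. You infer that $\tp{\theta}$ entails one of the split disjuncts because it ``selects a branch of every stuck $\mathsf{rec}_{\tp{\vee}}$ node''. That is true for the branch-wise rule, whose disjuncts are the branch topes of a node of $\sigma\ty{A}$ or $\sigma\ty{B}$. The disjunction $\tp{\phi}\tp{\vee}\tp{\xi}$ of \textsc{S-Split}, by contrast, is an arbitrary $\tp{\Phi}$-entailed cover with no relation to those nodes. For example, if $\ty{A}$ and $\ty{B}$ contain no $\mathsf{rec}_{\tp{\vee}}$ at all, $\tp{\theta}$ is trivial, yet the split may be on $\pt{t}\tp{\leq}\pt{s}\tp{\vee}\pt{s}\tp{\leq}\pt{t}$. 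So in general $\tp{\theta}$ entails neither disjunct, and to use a premise you must pass to $\tp{\Phi},\tp{\theta},\tp{\phi}$ or to $\tp{\Phi},\tp{\theta},\tp{\xi}$. Your closing remark that an inconsistent refinement ``is excluded by hypothesis'' then does not hold. The hypothesis makes only $\tp{\Phi},\tp{\theta}$ consistent, and if $\tp{\Phi},\tp{\theta},\tp{\phi}$ is contradictory, the induction hypothesis for the $\tp{\phi}$-premise tells you nothing there.

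The missing idea is the paper's consistency argument. Since $\tp{\Phi}\vdash\tp{\phi}\tp{\vee}\tp{\xi}$, if both $\tp{\Phi},\tp{\theta},\tp{\phi}$ and $\tp{\Phi},\tp{\theta},\tp{\xi}$ were contradictory, then so would be $\tp{\Phi},\tp{\theta}$. Hence one disjunct, say $\tp{\phi}$, is consistent with the refinement, and the induction hypothesis applies to that premise at $\tp{\Phi},\tp{\phi},\tp{\theta}$. Your stability remark then brings the result back down to $\tp{\Phi},\tp{\theta}$, because $\tp{\theta}$ already resolves every node.

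Your transitivity case needs the same argument. The given $\tp{\theta}$ resolves only $\{\sigma\ty{A},\sigma\ty{C}\}$, so you must extend it over the stuck nodes of the middle type. Do this node by node, choosing a branch consistent with the choices so far, which is possible because each cover is entailed. Taking an unrelated resolving refinement of all three types does not say anything about the given $\tp{\theta}$.

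Finally, the coherence premise you invoke for well-definedness belongs to the formation derivation of the $\mathsf{rec}_{\tp{\vee}}$ type. It is not part of the equality or subtyping derivation you induct on, so your induction measure would have to include formation derivations for that appeal to be legitimate.
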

\begin{proof}[Proof sketch]
  By induction on the derivation. Each subtyping rule preserves the skeleton by inspection: the
  restriction rules act on restrictions, which are invisible in the skeleton (erased when
  free-standing, or recorded only in the boundary data of an extension node);
  \textsc{S-Pi-Shape}, \textsc{S-Pi}, and
  \textsc{S-Sigma} preserve the node and recurse; \textsc{S-Conv} reduces to the equality case;
  \textsc{S-TopeFam} relates leaves. For \textsc{S-Split}, at least one disjunct is consistent
  with the refinement: if both were contradictory, then so would be $\tp{\Phi}, \tp{\theta}$, by
  the entailed cover. The induction hypothesis of that premise applies at
  $\tp{\Phi}, \tp{\phi}, \tp{\theta}$, and conjoining $\tp{\phi}$ does not change the
  $(\tp{\Phi}, \tp{\theta})$-skeletons: $\tp{\theta}$ already resolves every node, and
  additionally entailed branches are immaterial (\cref{def:skeleton}).
  For equality, congruences preserve nodes, $\beta$/$\eta$ act below the type level, the
  collapse is confined to contradictory contexts (excluded here), and the
  tope-disjunction rules are absorbed by the unfolding in \cref{def:skeleton}: under a resolving
  refinement each $\mathsf{rec}_{\tp{\vee}}$ node computes to its chosen branch, which is
  exactly the unfolding the skeleton records.
\end{proof}

\begin{theorem}[Coercion coherence]
\label{thm:coercion-coherence}
  Let $S_1$ and $S_2$ be two \dRzk{} derivations of the same subtyping judgement
  $\cube{\Xi} \mid \tp{\Phi} \mid \ty{\Gamma} \vdash \ty{A} \subtype \ty{B}$. Then for every
  \RSTT{} term $f$ with $\cube{\Xi} \mid \tp{\Phi} \mid \ty{\Gamma} \vdash f : \sigma\ty{A}$,
  \[
    \cube{\Xi} \mid \tp{\Phi} \mid \ty{\Gamma} \;\vdash_{\RSTT{}}\;
      \sigma(S_1)(f) \equiv \sigma(S_2)(f) : \sigma\ty{B}.
  \]
\end{theorem}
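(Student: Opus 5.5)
The plan is to prove that every coercion $\sigma(S)$ is definitionally equal to a \emph{canonical} coercion $\kappa_{\ty{A},\ty{B}}$ that depends only on the types $\sigma\ty{A}$ and $\sigma\ty{B}$ and the context. Coherence then follows at once: $\sigma(S_1)(f) \equiv \kappa(f) \equiv \sigma(S_2)(f)$. We work under a consistent $\tp{\Phi}$. Under a contradictory one the claim is immediate by the collapse rule of \cite[Fig.~3]{RiehlShulman2017}.

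We first reduce to resolved contexts. Take the finite cover of resolving topes $\tp{\theta}$ for $\{\sigma\ty{A},\sigma\ty{B}\}$ (\cref{app:coh-coercion}). The case-split rule is derivable in \RSTT{} (\cref{lem:sigma-type-eq}), so it suffices to prove the equation under each consistent refinement $\tp{\Phi},\tp{\theta}$. There \cref{lem:skeleton} guarantees that $\sigma\ty{A}$ and $\sigma\ty{B}$ share a skeleton. We define $\kappa$ by recursion on this common skeleton:
\begin{itemize}
\item at an extension node it is the $\eta$-reshape $\lambda\pt{x}.\,\kappa(f\,\pt{x})$;
\item at a $\ty{\Pi}$-node it is $\lambda y.\,\kappa(f\,(\kappa' y))$, with $\kappa'$ for the domains in the reverse direction;
\item at a $\ty{\Sigma}$-node it is componentwise;
\item at a leaf it is the identity.
\end{itemize}
Because every node collapses to the identity by the $\eta$-rules, $\kappa$ is in fact definitionally equal to $f$. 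The content of the argument lies in the typing, not the term.

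We then argue by induction on the size of $S$, simultaneously with the \textsc{S-Split} clause of \cref{lem:backtranslation-sound}, as \cref{app:coh-rules} prescribes. The goal is $\sigma(S)(f)\equiv\kappa(f)$ for each last rule of $S$:
\begin{itemize}
\item \textsc{S-Conv} gives the identity, which equals $\kappa$ by the collapse just noted, using \cref{lem:sigma-type-eq} for typing.
\item The restriction rules pass on the premise coercion, or the identity, and the induction hypothesis applies.
\item \textsc{S-Pi}, \textsc{S-Sigma} and \textsc{S-Pi-Shape} produce exactly the node-form of $\kappa$ with recursive coercions, so the induction hypothesis on the premises plus congruence closes these cases.
\item \textsc{S-Trans} composes two coercions, each equal to the identity up to $\eta$ by induction, so the composite is as well.
\item \textsc{S-TopeFam} is the identity.
\end{itemize}

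The main obstacle is \textsc{S-Split}. Its coercion is $\mathsf{rec}_{\tp{\vee}}(\tp{\phi}\mapsto c_{\tp\phi}(f),\tp{\xi}\mapsto c_{\tp\xi}(f))$. Its well-formedness already requires coherence under $\tp{\Phi},\tp{\phi},\tp{\xi}$, and that instance is supplied by the induction hypothesis on smaller derivations. To compare it with $\kappa(f)$, we first try using the $\eta$-rule of $\mathsf{rec}_{\tp{\vee}}$ to rewrite $\kappa(f)$ as $\mathsf{rec}_{\tp{\vee}}(\tp\phi\mapsto\kappa(f),\tp\xi\mapsto\kappa(f))$, and then compare branchwise by congruence using the induction hypothesis in each extended context. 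The delicate point is that $\kappa$ computed under $\tp{\Phi},\tp{\phi}$ must agree with $\kappa$ under $\tp{\Phi}$. We would handle this by noting that refining the context only unfolds already-resolved nodes, so the skeleton is unchanged, as in the proof of \cref{lem:skeleton}. Since $\kappa$ is $\eta$-equal to the identity in every context, this agreement should be definitional.
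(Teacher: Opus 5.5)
Your approach differs from the paper's: you normalise each coercion to a canonical one, where the paper compares the two coercions directly. As written, though, the argument has a gap at its centre.

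\textbf{The gap.} The engine of your argument is the claim that $\kappa(f)\equiv f$, i.e.\ that every coercion is ``the identity up to $\eta$''. This is not an \RSTT{} judgement whenever $\sigma\ty{A}$ and $\sigma\ty{B}$ are different extension types. That happens when a boundary is dropped by \textsc{S-Restr$'$} under a shape-$\ty{\Pi}$, or a shape is narrowed by \textsc{S-Pi-Shape}. \RSTT{} has no subsumption, so $f$ does not inhabit $\sigma\ty{B}$; that is exactly why $\sigma$ inserts $\eta$-reshapes.

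Consequently, the typing of $\kappa(f)$ at $\sigma\ty{B}$, which you rightly call the content, is never established. It cannot come from the skeleton alone: \cref{lem:skeleton} only matches node kinds. The term $\lambda\pt{x}.\,\kappa'(f\,\pt{x})$ needs three things the skeleton does not record:
\begin{itemize}
\item the target shape tope must entail the source one, so that $f\,\pt{x}$ can be formed;
\item the leaf types must be equal;
\item the boundary-agreement premise of \textsc{Ext-intro} must hold.
\end{itemize}
For instance, $\exttype{\pt{t} : \cube{I} \mid \tp{\psi}}{\ty{X}}{\tp{\bot}}{\mathsf{rec}_{\tp{\bot}}}$ and $\exttype{\pt{t} : \cube{I} \mid \tp{\psi}}{\ty{X}}{\tp{\phi}}{a}$ have the same skeleton. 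Yet $\lambda\pt{t}.\,f\,\pt{t}$ does not inhabit the second for $f$ of the first. So no coercion ``depending only on the types'' can be typed without using $S$.

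The same shortcut carries your \textsc{S-Trans} case. What that case actually needs is a composition lemma $\kappa_{\ty{C},\ty{B}}(\kappa_{\ty{A},\ty{C}}(f)) \equiv \kappa_{\ty{A},\ty{B}}(f)$. It would be proved by induction on the skeleton with \textsc{Ext-$\beta$} and $\ty{\Pi}$/$\ty{\Sigma}$-$\beta$, with domain coercions composing in reverse. It also requires refining further so that the intermediate $\sigma\ty{C}$ is resolved, since your cover resolves only $\{\sigma\ty{A},\sigma\ty{B}\}$.

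\textsc{S-Split}, which you flag as the obstacle, is the easy case once $\kappa$ is typed. Under a resolving $\tp{\theta}$, adding $\tp{\phi}$ leaves the skeleton, and hence $\kappa$, unchanged.

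\textbf{How the paper avoids this.} The paper never forms a free-standing coercion. It compares $\sigma(S_1)(f)$ and $\sigma(S_2)(f)$ directly, and both are already typed by \cref{lem:backtranslation-sound}. It then:
\begin{itemize}
\item refines the cover to resolve every type along both derivations and to decide their \textsc{S-Split} disjunctions, so those coercions compute to a branch;
\item flattens \textsc{S-Trans} into sequences of atomic steps;
\item at each skeleton node, writes both composites as $\lambda\pt{t}.\,c_i^*(f\,\pt{t})$ and compares $c_1^*, c_2^*$ at the child, the boundary premise being proof-irrelevant.
\end{itemize}

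\textbf{How to repair your route.} Your canonical-form argument would then be a genuine alternative. It replaces the two-derivation comparison by a one-derivation normal form, and it handles \textsc{S-Split} by $\mathsf{rec}_{\tp{\vee}}$-$\eta$ and congruence rather than by deciding its disjunction. For that you must prove, in the same induction on $S$, that $\kappa(f) : \sigma\ty{B}$ under each resolving refinement:
\begin{itemize}
\item the shape inclusions and leaf equalities come from the rules of $S$ (\textsc{S-Pi-Shape}, \textsc{S-Conv}, transitivity, and $\tp{\vee}$-elimination for \textsc{S-Split});
\item the \textsc{Ext-intro} premise comes from the boundary of the already-typed $\sigma(S)(f)$, transferred along the induction hypothesis on the codomain premise.
\end{itemize}
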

\begin{proof}
  If $\tp{\Phi}$ is contradictory, any two \RSTT{} terms of the same type are definitionally
  equal \cite[Fig.~3]{RiehlShulman2017} and the claim is trivial; assume
  $\tp{\Phi}$ consistent. We argue by induction on the combined size of $S_1$ and $S_2$,
  simultaneously with the well-formedness of the \textsc{S-Split} coercion
  (\cref{app:coh-rules}), whose coherence premise is this theorem for the two smaller branch
  premises at the extended context. Fix a $\tp{\Phi}$-entailed cover that refines a resolving
  cover for the $\sigma$-types occurring in $S_1$ and $S_2$, and also the disjunctions of their
  \textsc{S-Split} instances. By the case-split rule (\cref{lem:sigma-type-eq}) it suffices to
  derive $\sigma(S_1)(f) \equiv \sigma(S_2)(f)$ under each consistent refinement
  $\tp{\Phi}, \tp{\theta}$; a contradictory refinement holds by the collapse. Under
  $\tp{\Phi}, \tp{\theta}$, every \textsc{S-Split} coercion computes to the coercion of its
  selected branch premise \cite[Fig.~3]{RiehlShulman2017}, so the comparison continues with
  the weakened, hence no larger, premise derivations. Moreover, every stuck
  $\mathsf{rec}_{\tp{\vee}}$ type node unfolds, so all types along both
  derivations share their $(\tp{\Phi}, \tp{\theta})$-skeleton (\cref{lem:skeleton}). Reading
  each derivation along
  \textsc{S-Trans}, its coercion is the composite of the coercions of \emph{atomic} steps
  (single rule instances other than \textsc{S-Trans} and \textsc{S-Split}), applied in
  sequence to $f$. We induct on the shared skeleton.

  \emph{Leaf.} Every atomic step whose conclusion has leaf skeleton is \textsc{S-Conv},
  \textsc{S-TopeFam}, or a
  root-level restriction rule (\textsc{S-Restr}, \textsc{S-Restr$'$}, \textsc{S-Restr-Bot}); the
  congruence rules conclude at non-leaf skeletons, and under a resolving refinement a leaf is a
  base type or a neutral, never an unfoldable $\mathsf{rec}_{\tp{\vee}}$. A root-level
  restriction is free-standing, so
  $\sigma$ erases it on both sides; the step contributes the coercion of its type premise, if it
  has one (\textsc{S-Restr}), which is again a subtyping between leaf-skeleton types and is the
  identity by the induction on derivation size. \textsc{S-Conv} and \textsc{S-TopeFam} are
  the identity as well, the former well-typed by \cref{lem:sigma-type-eq}. Hence every composite
  is the identity function, and any two agree.

  \emph{Extension node.} Every atomic step at an extension node is \textsc{S-Conv}, a
  root-level restriction rule (contributing its type premise's coercion, as above),
  \textsc{S-Pi-Shape}, or a congruence under the
  shape binder. The coercion of each non-identity step has the uniform form
  $g \mapsto \lambda \pt{t}.\, c(g\,\pt{t})$, where $c$ is a coercion between the codomain
  $\sigma$-types at the generic point; the identity steps take this form as well, up to
  \textsc{Ext-$\eta$} ($g \equiv \lambda \pt{t}.\, g\,\pt{t}$). Composing along the sequence and
  contracting the resulting $\beta$-redexes, $\sigma(S_i)(f) \equiv \lambda \pt{t}.\, c_i^*(f\,\pt{t})$,
  where $c_i^*$ is the composite of the codomain coercions of $S_i$. The endpoints of $c_1^*$ and
  $c_2^*$ agree: they are the codomain $\sigma$-types of $\ty{A}$ and $\ty{B}$ at the generic
  point, fixed by the judgement. By the induction hypothesis at the child skeleton,
  $c_1^*(f\,\pt{t}) \equiv c_2^*(f\,\pt{t})$, and the claim follows by congruence under the
  binder. The boundary premise of each \textsc{Ext-intro} involved is proof-irrelevant: it does
  not occur in the term.

  \emph{$\ty{\Pi}$ node.} The same argument with the uniform form
  $g \mapsto \lambda y.\, c\big(g\,(c'\,y)\big)$, the contravariant coercion $c'$ compared by the
  induction hypothesis at the domain child (its endpoints are again fixed by the judgement, in
  the opposite orientation), and $\ty{\Pi}$-$\eta$ in place of \textsc{Ext-$\eta$}.

  \emph{$\ty{\Sigma}$ node.} The uniform form is
  $p \mapsto \big(c_1(\pi_1 p), c_2(\pi_2 p)\big)$, using the $\eta$-rule for $\ty{\Sigma}$
  (\cref{app:coh-rules}) for the identity steps and $\ty{\Sigma}$-$\beta$ for composition; the
  components are compared by the induction hypothesis.
\end{proof}

\section{Soundness of the Tope Solver}
\label{app:solver-soundness}

This appendix proves the soundness of the tope solver (\cref{lem:solver-sound}), deferred from
\cref{sec:automated-tope-logic}, by following the three steps of the algorithm described there.

\begin{proof}[Proof sketch of \cref{lem:solver-sound}]
  \emph{(i)~Normalisation.} Putting $\Phi$ into disjunctive
  normal form is a chain of logical \emph{equivalences} (distributing $\vee$, splitting pair
  equalities $(x,y) \equiv (x',y')$ into $x \equiv x'$ and $y \equiv y'$), so $\Phi$ and the resulting
  $\bigvee_k \Gamma_k$ are interderivable; deriving $\psi$ in every case $\Gamma_k$ therefore derives
  it from $\Phi$ by $\vee$-elimination. \emph{(ii)~Saturation.} Each saturation rule (reflexivity,
  symmetry, and transitivity of $\equiv$; transitivity and antisymmetry of $\leq$; substitution of
  equals; and the boundary facts $x \leq 0 \vdash x \equiv 0$, $1 \leq x \vdash 1 \equiv x$, and
  $0 \equiv 1 \vdash \bot$) is a valid inference from those axioms, so saturation adds only genuine
  consequences of $\Gamma_k$. \emph{(iii)~Goal matching.} Each branch (reflexivity, membership, the
  order laws, and splitting of conjunctions and disjunctions) is a valid entailment, and the one
  disjunction the solver ever introduces for a case split is interval linearity
  $(x \leq y) \vee (y \leq x)$, an axiom of the directed interval (and never the boundary
  disjunction, as noted in \cref{sec:automated-tope-logic}). Hence every reported entailment is
  \RSTT{}-derivable. The bound on saturation, the single-lemma case split, and the failure fallbacks
  only ever \emph{reject} queries, so they bear on completeness, not soundness.

  The solver has two entry points: the monadic \texttt{entailM}, used for the shape-inclusion and
  coverage queries above, and a pure \texttt{entail}, used only for the $\bot$-collapse check (does
  the local tope context entail $\bot$?). Both run the same core (\texttt{solveRHS}), so the
  argument covers both; the $\bot$-collapse query has an empty right-hand side, so it generates no
  linearity topes and the interval-linearity case is vacuous for it.
\end{proof}

\section{Complete Rules of \Rzk{}}
\label{app:rules}

This appendix collects the typing, subtyping, and desugaring rules of \Rzk{} omitted from
\cref{sec:type-checking}, where only the rules of specific interest are shown. Each rule appears
once, grouped by judgement (\cref{app:rules-algorithmic}) and by type former
(\cref{app:rules-by-former,app:rules-shape}).
\ifanonsubmission
Each rule name is followed by a bracketed line range into \texttt{rzk/src/Rzk/TypeCheck.hs} of
the anonymised \Rzk{}~v0.7.8 source in the supplementary material, locating its implementation.
\else
Each rule name hyperlinks to the corresponding point of the implementation, at the
version~v0.7.8 that this paper reports.
\fi

\subsection{Definitional Equality and Universe Rules}
\label{app:rules-algorithmic}

The figures of this subsection give the judgement-level rules: the structural congruences of
definitional equality, including equality under tope disjunctions
(\cref{fig:alg-defeq-congruences}), and type synthesis for variables and the type-in-type
universes, with the bidirectional subsumption rule (\cref{fig:alg-type-in-type}). Three devices of the
equality routine are described in prose rather than as rules: on-demand $\eta$-expansion when
exactly one side is a literal $\lambda$-abstraction or pair
(\srcrefinline{L1417-L1461}{\cref{sec:bidirectional}});
the equality treatment of tope disjunctions, including the collapse of a
$\mathsf{rec}_{\vee}$ whose branches are syntactically equal
(\srcrefinline{L1631-L1634}{\cref{sec:def-eq-topes}});
and the pruning of $\tp{\bot}$-faces from restrictions during normalisation
(\srcrefinline{L1635-L1640}{\cref{sec:bidirectional}}).

\begin{remark}[Two bugs found by this audit]
\label{rem:unify-id-bug}
Completing \textsc{Unify-Id} uncovered a soundness bug in \Rzk{} \emph{before} v0.7.8. The
premise $\ty{A} \equiv \ty{A'}$ was omitted (a \texttt{TODO} in the source), so the checker
equated identity types over different types whenever the endpoints unify. In particular, it
accepted a path of $f =_{(\pt{t} : \Delta^1) \to \ty{A}} g$, a free homotopy, where a path of
$f =_{\hom_{\ty{A}}(x, y)} g$, an endpoint-fixing one, is expected, a passage \dRzk{} does not
derive. Version~v0.7.8, reported here, adds the premise, so \textsc{Unify-Id} in
\cref{fig:alg-defeq-congruences} is the corrected rule, and \cref{prop:algorithmic-soundness}
holds for it. The omission had been relied
on in benign instances of the same pattern: in \sHoTT{}, a pair whose type is synthesised as a
non-dependent $\ty{\Sigma}$ was compared, through such an identity type, against the dependent
$\ty{\Sigma}$ it also inhabits, and paths between sections with definitional boundaries were
stated over the boundary-free function type. Restoring the premise required annotating the
affected identity types in \sHoTT{}, twenty-five one-line changes, each stating an identity type
at the type intended; the annotated library checks both before and after the fix. The fix recovers the underlying type from an endpoint when the stored annotation has been erased by normalisation, so the type comparison cannot be silently skipped.

The audit uncovered a second bug, in the subtyping mode of the same routine. Before v0.7.8, the
variance flag (\cref{sec:checking-subtyping}) was consulted only to orient $\eta$-expansion, so
the three asymmetric checks of \cref{sec:subtyping} (restriction faces, tope-family domains, and
the domain topes of shape $\Pi$-types) ran in a fixed direction, hence backwards in negative
positions. In particular, a partial function became applicable outside its domain, again a
passage \dRzk{} does not derive. Version~v0.7.8 makes the three checks consult the flag; the
annotated \sHoTT{} library checks both before and after this fix.
\end{remark}

\begin{figure*}
  \begin{prooftree}
    \AxiomC{{\color{gray}{$\cube{\Xi} \mid \tp{\Phi} \vdash \tp{\bot}$}}}
    \RightLabel{\srcref{L2011}{\textsc{Unify-Rec$_{\bot}$}}}
    \UnaryInfC{$\cube{\Xi} \mid \tp{\Phi} \mid \ty{\Gamma} \vdash \ty{\mathsf{rec}_{\bot}} \equiv \ty{A}$}
  \end{prooftree}
  \begin{prooftree}
    \AxiomC{$\cube{\Xi} \mid \tp{\Phi}, \tp{\phi_i}, \tp{\psi_j} \mid \ty{\Gamma} \vdash \ty{A_i} \equiv \ty{B_j}$ \quad for $i = 1, \ldots, n$ and $j = 1, \ldots, m$}
    \RightLabel{\srcref{L2012-L2025}{\textsc{Unify-Rec$_{\vee}$}}}
    \UnaryInfC{$\cube{\Xi} \mid \tp{\Phi} \mid \ty{\Gamma} \vdash \mathsf{rec}_{\lor}(\tp{\phi_1} \mapsto \ty{A_1}, \ldots, \tp{\phi_n} \mapsto \ty{A_n}) \equiv \mathsf{rec}_{\lor}(\tp{\psi_1} \mapsto \ty{B_1}, \ldots, \tp{\psi_m} \mapsto \ty{B_m})$}
  \end{prooftree}
  \begin{prooftree}
    \AxiomC{$\cube{\Xi} \mid \tp{\Phi}, \tp{\phi_i} \mid \ty{\Gamma} \vdash \ty{A_i} \equiv \ty{B}$ \quad for $i = 1, \ldots, n$}
    \RightLabel{\srcref{L2012-L2025}{\textsc{Unify-Rec$_{\vee}$-L}}}
    \UnaryInfC{$\cube{\Xi} \mid \tp{\Phi} \mid \ty{\Gamma} \vdash \mathsf{rec}_{\lor}(\tp{\phi_1} \mapsto \ty{A_1}, \ldots, \tp{\phi_n} \mapsto \ty{A_n}) \equiv \ty{B}$}
  \end{prooftree}
  \begin{prooftree}
    \AxiomC{$\cube{\Xi} \mid \tp{\Phi}, \tp{\phi_i} \mid \ty{\Gamma} \vdash \ty{A} \equiv \ty{B_i}$ \quad for $i = 1, \ldots, n$}
    \RightLabel{\srcref{L1940-L1947}{\textsc{Unify-Rec$_{\vee}$-R}}}
    \UnaryInfC{$\cube{\Xi} \mid \tp{\Phi} \mid \ty{\Gamma} \vdash \ty{A} \equiv \mathsf{rec}_{\lor}(\tp{\phi_1} \mapsto \ty{B_1}, \ldots, \tp{\phi_n} \mapsto \ty{B_n})$}
  \end{prooftree}
  \begin{prooftree}
    \AxiomC{$\cube{\Xi} \mid \tp{\Phi} \mid \ty{\Gamma} \vdash \cube{I} \equiv \cube{I'}$}
    \AxiomC{$\cube{\Xi} \mid \tp{\Phi} \mid \ty{\Gamma} \vdash \cube{J} \equiv \cube{J'}$}
    \RightLabel{\srcref{L1977-L1982}{\textsc{Unify-$\times$}}}
    \BinaryInfC{$\cube{\Xi} \mid \tp{\Phi} \mid \ty{\Gamma} \vdash \cube{I \times J} \equiv \cube{I' \times J'}$}
  \end{prooftree}
  \begin{prooftree}
    \AxiomC{$\cube{\Xi} \mid \tp{\Phi} \mid \ty{\Gamma} \vdash t \equiv t'$}
    \RightLabel{\srcref{L1994-L1997}{\textsc{Unify-$\pi_1$}}}
    \UnaryInfC{$\cube{\Xi} \mid \tp{\Phi} \mid \ty{\Gamma} \vdash \pi_1(t) \equiv \pi_1(t')$}
    \DisplayProof\quad\quad
    \AxiomC{$\cube{\Xi} \mid \tp{\Phi} \mid \ty{\Gamma} \vdash t \equiv t'$}
    \RightLabel{\srcref{L1999-L2002}{\textsc{Unify-$\pi_2$}}}
    \UnaryInfC{$\cube{\Xi} \mid \tp{\Phi} \mid \ty{\Gamma} \vdash \pi_2(t) \equiv \pi_2(t')$}
  \end{prooftree}
  \begin{prooftree}
    \AxiomC{$\cube{\Xi} \mid \tp{\Phi} \mid \ty{\Gamma} \vdash \ty{C} \equiv \ty{A}$}
    \AxiomC{$\cube{\Xi} \mid \tp{\Phi} \mid \ty{\Gamma}, x : \ty{A} \vdash \ty{B}(x) \equiv \ty{D}(x)$}
    \RightLabel{\srcref{L2027-L2083}{\textsc{Unify-$\Pi$}}}
    \BinaryInfC{$\cube{\Xi} \mid \tp{\Phi} \mid \ty{\Gamma} \vdash \ty{\prod}_{x : \ty{A}} \ty{B}(x) \equiv \ty{\prod}_{y : \ty{C}} \ty{D}(y)$}
  \end{prooftree}
  \begin{prooftree}
    \AxiomC{$\cube{\Xi} \mid \tp{\Phi} \mid \ty{\Gamma} \vdash \cube{J} \equiv \cube{I}$}
    \AxiomC{$\cube{\Xi}, \pt{x} : \cube{J} \mid \tp{\Phi} \mid \ty{\Gamma} \vdash \ty{A}(\pt{x}) \equiv \ty{B}(\pt{x})$}
    \noLine\BinaryInfC{
            $\cube{\Xi}, \pt{x} : \cube{J} \mid \tp{\phi}(\pt{x}) \mid \ty{\Gamma} \vdash \tp{\psi}(\pt{x}) $
      \quad $\cube{\Xi}, \pt{x} : \cube{J} \mid \tp{\psi}(\pt{x}) \mid \ty{\Gamma} \vdash \tp{\phi}(\pt{x}) $}
    \RightLabel{\srcref{L2027-L2083}{\textsc{Unify-$\Pi$-shape}}}
    \UnaryInfC{$\cube{\Xi} \mid \tp{\Phi} \mid \ty{\Gamma} \vdash \ty{\prod}_{\pt{x} : \{ \cube{I} \mid \tp{\phi} \}} \ty{A}(\pt{x}) \equiv \ty{\prod}_{\pt{y} : \{\cube{J} \mid \tp{\psi}\}} \ty{B}(\pt{y})$}
  \end{prooftree}
  \begin{prooftree}
    \AxiomC{$\cube{\Xi} \mid \tp{\Phi} \mid \ty{\Gamma} \vdash \ty{A} \equiv \ty{C}$}
    \AxiomC{$\cube{\Xi} \mid \tp{\Phi} \mid \ty{\Gamma}, x : \ty{A} \vdash \ty{B}(x) \equiv \ty{D}(x)$}
    \RightLabel{\srcref{L2085-L2090}{\textsc{Unify-$\Sigma$}}}
    \BinaryInfC{$\cube{\Xi} \mid \tp{\Phi} \mid \ty{\Gamma} \vdash \ty{\sum}_{x : \ty{A}} \ty{B}(x) \equiv \ty{\sum}_{y : \ty{C}} \ty{D}(y)$}
  \end{prooftree}
  \begin{prooftree}
    \AxiomC{$\cube{\Xi} \mid \tp{\Phi} \mid \ty{\Gamma} \vdash \ty{A} \equiv \ty{B}$}
    \AxiomC{$\cube{\Xi} \mid \tp{\Phi}, \tp{\psi_j} \mid \ty{\Gamma} \vdash \tp{\bigvee_{i = 1}^{n} \phi_i}$ \quad for all $j = 1, \ldots, m$}
    \noLine\BinaryInfC{$\cube{\Xi} \mid \tp{\Phi}, \tp{\psi_j}, \tp{\phi_i} \mid \ty{\Gamma} \vdash t_i \equiv u_j$ \quad for all $j = 1, \ldots, m$ and $i = 1, \ldots, n$}
    \RightLabel{\srcref{L2165-L2189}{\textsc{Unify-Restr}}}
    \UnaryInfC{$\cube{\Xi} \mid \tp{\Phi} \mid \ty{\Gamma} \vdash \ty{A}\,[\, \tp{\phi_1} \mapsto t_1, \ldots, \tp{\phi_n} \mapsto t_n \,] \equiv \ty{B}\,[\, \tp{\psi_1} \mapsto u_1, \ldots, \tp{\psi_m} \mapsto u_m \,]$}
  \end{prooftree}
  \begin{prooftree}
    \AxiomC{$\cube{\Xi} \mid \tp{\Phi} \mid \ty{\Gamma} \vdash l \equiv l'$}
    \AxiomC{$\cube{\Xi} \mid \tp{\Phi} \mid \ty{\Gamma} \vdash r \equiv r'$}
    \RightLabel{\srcref{L1984-L1992}{\textsc{Unify-Pair}}}
    \BinaryInfC{$\cube{\Xi} \mid \tp{\Phi} \mid \ty{\Gamma} \vdash (l, r) \equiv (l', r')$}
  \end{prooftree}
  \begin{prooftree}
    \AxiomC{$\cube{\Xi} \mid \tp{\Phi} \mid \ty{\Gamma} \vdash \ty{A} \equiv \ty{A'}$}
    \AxiomC{$\cube{\Xi} \mid \tp{\Phi} \mid \ty{\Gamma} \vdash x \equiv x'$}
    \AxiomC{$\cube{\Xi} \mid \tp{\Phi} \mid \ty{\Gamma} \vdash y \equiv y'$}
    \RightLabel{\srcref{L2092-L2113}{\textsc{Unify-Id}}}
    \TrinaryInfC{$\cube{\Xi} \mid \tp{\Phi} \mid \ty{\Gamma} \vdash (x =_{\ty{A}} y) \equiv (x' =_{\ty{A'}} y')$}
  \end{prooftree}
  \begin{prooftree}
    \AxiomC{$\cube{\Xi} \mid \tp{\Phi} \mid \ty{\Gamma} \vdash f \equiv f'$}
    \AxiomC{$\cube{\Xi} \mid \tp{\Phi} \mid \ty{\Gamma} \vdash x \equiv^{\dagger} x'$}
    \RightLabel{\srcref{L2115-L2121}{\textsc{Unify-App}}}
    \BinaryInfC{$\cube{\Xi} \mid \tp{\Phi} \mid \ty{\Gamma} \vdash f(x) \equiv f'(x')$}
  \end{prooftree}
  \begin{prooftree}
    \AxiomC{$\cube{\Xi} \mid \tp{\Phi} \mid \ty{\Gamma}, x : \ty{A} \vdash b \equiv b'$}
    \RightLabel{\srcref{L2123-L2141}{\textsc{Unify-$\lambda$}}}
    \UnaryInfC{$\cube{\Xi} \mid \tp{\Phi} \mid \ty{\Gamma} \vdash \lambda x.\, b \equiv \lambda x.\, b'$}
    \DisplayProof\quad\quad
    \AxiomC{$\cube{\Xi}, \pt{x} : \cube{I} \mid \tp{\Phi}, \tp{\phi}(\pt{x}) \mid \ty{\Gamma} \vdash b \equiv b'$}
    \RightLabel{\srcref{L2123-L2141}{\textsc{Unify-$\lambda$-shape}}}
    \UnaryInfC{$\cube{\Xi} \mid \tp{\Phi} \mid \ty{\Gamma} \vdash \lambda \pt{x}.\, b \equiv \lambda \pt{x}.\, b'$}
  \end{prooftree}
  \begin{prooftree}
    \AxiomC{$\cube{\Xi} \mid \tp{\Phi} \mid \ty{\Gamma} \vdash x \equiv x'$}
    \AxiomC{$\cube{\Xi} \mid \tp{\Phi} \mid \ty{\Gamma} \vdash y \equiv y'$}
    \RightLabel{\srcref{L2143-L2149}{\textsc{Unify-Refl}}}
    \BinaryInfC{$\cube{\Xi} \mid \tp{\Phi} \mid \ty{\Gamma} \vdash \refl{}{x} \equiv \refl{}{x'}$}
  \end{prooftree}
  \begin{prooftree}
    \AxiomC{$\cube{\Xi} \mid \tp{\Phi} \mid \ty{\Gamma} \vdash \ty{A} \equiv \ty{A'} \quad x \equiv x' \quad \ty{C} \equiv \ty{C'} \quad d \equiv d' \quad y \equiv y' \quad p \equiv p'$}
    \RightLabel{\srcref{L2152-L2162}{\textsc{Unify-J}}}
    \UnaryInfC{$\cube{\Xi} \mid \tp{\Phi} \mid \ty{\Gamma} \vdash \mathsf{J}(\ty{A}, x, \ty{C}, d, y, p) \equiv \mathsf{J}(\ty{A'}, x', \ty{C'}, d', y', p')$}
  \end{prooftree}
  \caption{Definitional equality: structural congruences and equality under tope disjunctions.
  When these congruences are run in subtyping mode (\cref{sec:checking-subtyping}), each premise
  carries the ambient variance, \emph{except} the argument premise of \textsc{Unify-App}
  ($\dagger$), which is always compared invariantly: the head $f$ is neutral, so its variance in
  the argument is unknown, and only definitional equality of the arguments is sound.}
  \label{fig:alg-defeq-congruences}
\end{figure*}

\begin{figure*}
  \begin{prooftree}
    \AxiomC{}
    \RightLabel{\srcref{L2620}{\textsc{U-Type}}}
    \UnaryInfC{$\cube{\Xi} \mid \tp{\Phi} \mid \ty{\Gamma} \vdash \ty{\mathsf{Type}} \Rightarrow \ty{\mathsf{Type}}$}
    \DisplayProof\quad
    \AxiomC{}
    \RightLabel{\srcref{L2621}{\textsc{U-Cube}}}
    \UnaryInfC{$\cube{\Xi} \mid \tp{\Phi} \mid \ty{\Gamma} \vdash \cube{\mathsf{Cube}} \Rightarrow \ty{\mathsf{Type}}$}
    \DisplayProof\quad
    \AxiomC{}
    \RightLabel{\srcref{L2622}{\textsc{U-Tope}}}
    \UnaryInfC{$\cube{\Xi} \mid \tp{\Phi} \mid \ty{\Gamma} \vdash \tp{\mathsf{Tope}} \Rightarrow \ty{\mathsf{Type}}$}
  \end{prooftree}
  \begin{prooftree}
    \AxiomC{$(x : \ty{A}) \in \ty{\Gamma}$}
    \RightLabel{\srcref{L2618}{\textsc{Var}}}
    \UnaryInfC{$\cube{\Xi} \mid \tp{\Phi} \mid \ty{\Gamma} \vdash x \Rightarrow \ty{A}$}
  \end{prooftree}
  \begin{prooftree}
    \AxiomC{$\cube{\Xi} \mid \tp{\Phi} \mid \ty{\Gamma} \vdash t \Rightarrow \ty{A}$}
    \AxiomC{$\cube{\Xi} \mid \tp{\Phi} \mid \ty{\Gamma} \vdash \ty{A} \subtype \ty{B}$}
    \RightLabel{\srcref{L2602-L2606}{\textsc{Sub}}}
    \BinaryInfC{$\cube{\Xi} \mid \tp{\Phi} \mid \ty{\Gamma} \vdash t \Leftarrow \ty{B}$}
  \end{prooftree}
  \caption{Type synthesis for variables and the type-in-type universes, and the
  bidirectional subsumption rule \textsc{Sub} (the algorithmic counterpart of \textsc{T-Sub},
  \cref{sec:subtyping}), whose subtyping premise is decided by the variance-flagged equality
  routine (\cref{sec:checking-subtyping}).}
  \label{fig:alg-type-in-type}
\end{figure*}

\subsection{Rules by Type Former}
\label{app:rules-by-former}

The figures of this subsection give the bidirectional rules former by former: the cube layer
(\cref{fig:rules-cube-layer}), identity types (\cref{fig:rules-id}), the unit type
(\cref{fig:rules-unit}), elimination out of the false tope (\cref{fig:rules-recbot}), dependent
sums (\cref{fig:rules-sigma}), dependent functions over types and over cubes
(\cref{fig:rules-pi,fig:rules-pi-cube}), and type ascription (\cref{fig:rules-ascription}). The
function types with tope and shape domains follow in \cref{app:rules-shape}.

\begin{figure*}
  \begin{prooftree}
    \AxiomC{$(\pt{t} \Rightarrow \cube{I}) \in \cube{\Xi}$}
    \RightLabel{\textsc{CubeVar}}
    \UnaryInfC{$\cube{\Xi} \vdash \pt{t} \Rightarrow \cube{I}$}
    \DisplayProof\quad\quad
    \AxiomC{}
    \RightLabel{\textsc{CubeUnit}}
    \UnaryInfC{$\cube{\mathbbm{1}} \Rightarrow \cube{\cubeU}$}
    \DisplayProof\quad\quad
    \AxiomC{}
    \RightLabel{\textsc{CubeUnitStar}}
    \UnaryInfC{$\cube{\Xi} \vdash \pt{\star} \Rightarrow \cube{\mathbbm{1}}$}
  \end{prooftree}
  \begin{prooftree}
    \AxiomC{}
    \RightLabel{\textsc{Cube2}}
    \UnaryInfC{$\cube{\mathbbm{2}} \Rightarrow \cube{\cubeU}$}
    \DisplayProof\quad\quad
    \AxiomC{}
    \RightLabel{\textsc{Cube2-0}}
    \UnaryInfC{$\cube{\Xi} \vdash \pt{0} \Rightarrow \cube{\mathbbm{2}}$}
    \DisplayProof\quad\quad
    \AxiomC{}
    \RightLabel{\textsc{Cube2-1}}
    \UnaryInfC{$\cube{\Xi} \vdash \pt{1} \Rightarrow \cube{\mathbbm{2}}$}
  \end{prooftree}
  \begin{prooftree}
    \AxiomC{$\cube{I} \Leftarrow \cube{\cubeU}$}
    \AxiomC{$\cube{J} \Leftarrow \cube{\cubeU}$}
    \RightLabel{\textsc{CubeProduct}}
    \BinaryInfC{$\cube{I \times J} \Rightarrow \cube{\cubeU}$}
    \DisplayProof\quad\quad
    \AxiomC{$\cube{\Xi} \vdash \pt{s} \Rightarrow \cube{I}$}
    \AxiomC{$\cube{\Xi} \vdash \pt{t} \Rightarrow \cube{J}$}
    \RightLabel{\srcref{L2636-L2653}{\textsc{Pair}}}
    \BinaryInfC{$\cube{\Xi} \vdash \pt{(s, t)} \Rightarrow \cube{I \times J}$}
  \end{prooftree}
  \begin{prooftree}
    \AxiomC{$\cube{\Xi} \vdash \pt{t} \Rightarrow \cube{I \times J}$}
    \RightLabel{\textsc{First}}
    \UnaryInfC{$\cube{\Xi} \vdash \pt{\pi_1(t)} \Rightarrow \cube{I}$}
    \DisplayProof\quad\quad
    \AxiomC{$\cube{\Xi} \vdash \pt{t} \Rightarrow \cube{I \times J}$}
    \RightLabel{\textsc{Second}}
    \UnaryInfC{$\cube{\Xi} \vdash \pt{\pi_2(t)} \Rightarrow \cube{J}$}
  \end{prooftree}
  \begin{prooftree}
    \AxiomC{$\cube{\Xi} \vdash \pt{s} \Leftarrow \cube{I}$}
    \AxiomC{$\cube{\Xi} \vdash \pt{t} \Leftarrow \cube{J}$}
    \RightLabel{\srcref{L2564-L2569}{\textsc{Pair$_{\Leftarrow}$}}}
    \BinaryInfC{$\cube{\Xi} \vdash \pt{(s, t)} \Leftarrow \cube{I \times J}$}
  \end{prooftree}
  \caption{The cube layer}
  \label{fig:rules-cube-layer}
\end{figure*}

\begin{figure*}
  \begin{prooftree}
    \AxiomC{$\cube{\Xi} \mid \tp{\Phi} \mid \ty{\Gamma} \vdash \ty{A} \Leftarrow \ty{\mathsf{Type}}$}
    \AxiomC{$\cube{\Xi} \mid \tp{\Phi} \mid \ty{\Gamma} \vdash x \Leftarrow \ty{A}$}
    \AxiomC{$\cube{\Xi} \mid \tp{\Phi} \mid \ty{\Gamma} \vdash y \Leftarrow \ty{A}$}
    \RightLabel{\srcref{L2770-L2774}{\textsc{Id-form$_1$}}}
    \TrinaryInfC{{$\cube{\Xi} \mid \tp{\Phi} \mid \ty{\Gamma} \vdash x =_{\ty{A}} y \Rightarrow \ty{\mathsf{Type}}$}}
  \end{prooftree}
  \vspace{0.2em}
  \begin{prooftree}
    \AxiomC{$\cube{\Xi} \mid \tp{\Phi} \mid \ty{\Gamma} \vdash x \Rightarrow \ty{A}$}
    \AxiomC{$\cube{\Xi} \mid \tp{\Phi} \mid \ty{\Gamma} \vdash \ty{A} \Leftarrow \ty{\typeU}$}
    \AxiomC{$\cube{\Xi} \mid \tp{\Phi} \mid \ty{\Gamma} \vdash y \Leftarrow \ty{A}$}
    \RightLabel{\srcref{L2776-L2780}{\textsc{Id-form$_2$}}}
    \TrinaryInfC{{$\cube{\Xi} \mid \tp{\Phi} \mid \ty{\Gamma} \vdash x = y \Rightarrow \ty{\mathsf{Type}}$}}
  \end{prooftree}
  \vspace{0.2em}
  \begin{prooftree}
    \AxiomC{$\cube{\Xi} \mid \tp{\Phi} \mid \ty{\Gamma} \vdash \ty{A} \Leftarrow \ty{\typeU}$}
    \AxiomC{$\cube{\Xi} \mid \tp{\Phi} \mid \ty{\Gamma} \vdash x \Leftarrow \ty{A}$}
    \RightLabel{\srcref{L2840-L2843}{\textsc{Id-intro$_{\Rightarrow_1}$}}}
    \BinaryInfC{{$\cube{\Xi} \mid \tp{\Phi} \mid \ty{\Gamma} \vdash \refl{\ty{A}}{x} \Rightarrow x =_{\ty{A}} x$}}
  \end{prooftree}
  \vspace{0.2em}
  \begin{prooftree}
    \AxiomC{$\cube{\Xi} \mid \tp{\Phi} \mid \ty{\Gamma} \vdash x \Rightarrow \ty{A}$}
    \AxiomC{$\cube{\Xi} \mid \tp{\Phi} \mid \ty{\Gamma} \vdash \ty{A} \Leftarrow \ty{\typeU}$}
    \RightLabel{\srcref{L2836-L2839}{\textsc{Id-intro$_{\Rightarrow_2}$}}}
    \BinaryInfC{{$\cube{\Xi} \mid \tp{\Phi} \mid \ty{\Gamma} \vdash \refl{}{x} \Rightarrow x =_{\ty{A}} x$}}
  \end{prooftree}
  \vspace{0.2em}
  \begin{prooftree}
    \AxiomC{$\cube{\Xi} \mid \tp{\Phi} \mid \ty{\Gamma} \vdash \ty{A'} \Leftarrow \ty{\typeU}$}
    \AxiomC{$\cube{\Xi} \mid \tp{\Phi} \mid \ty{\Gamma} \vdash \ty{A'} \equiv \ty{A}$}
    \noLine\BinaryInfC{
      $\cube{\Xi} \mid \tp{\Phi} \mid \ty{\Gamma} \vdash x \Leftarrow \ty{A}$ \quad\quad
      $\cube{\Xi} \mid \tp{\Phi} \mid \ty{\Gamma} \vdash x \equiv y$ \quad\quad
      $\cube{\Xi} \mid \tp{\Phi} \mid \ty{\Gamma} \vdash x \equiv z$
    }
    \RightLabel{\srcref{L2576-L2590}{\textsc{Id-intro$_{\Leftarrow_1}$}}}
    \UnaryInfC{{$\cube{\Xi} \mid \tp{\Phi} \mid \ty{\Gamma} \vdash \refl{\ty{A'}}{x} \Leftarrow y =_{\ty{A}} z$}}
  \end{prooftree}
  \vspace{0.2em}
  \begin{prooftree}
    \AxiomC{$\cube{\Xi} \mid \tp{\Phi} \mid \ty{\Gamma} \vdash x \Leftarrow \ty{A}$}
    \AxiomC{$\cube{\Xi} \mid \tp{\Phi} \mid \ty{\Gamma} \vdash x \equiv y$}
    \AxiomC{$\cube{\Xi} \mid \tp{\Phi} \mid \ty{\Gamma} \vdash x \equiv z$}
    \RightLabel{\srcref{L2576-L2590}{\textsc{Id-intro$_{\Leftarrow_2}$}}}
    \TrinaryInfC{{$\cube{\Xi} \mid \tp{\Phi} \mid \ty{\Gamma} \vdash \refl{}{x} \Leftarrow y =_{\ty{A}} z$}}
  \end{prooftree}
  \vspace{0.2em}
  \begin{prooftree}
    \AxiomC{$\cube{\Xi} \mid \tp{\Phi} \mid \ty{\Gamma} \vdash y \equiv z$}
    \RightLabel{\srcref{L2576-L2590}{\textsc{Id-intro$_{\Leftarrow_3}$}}}
    \UnaryInfC{{$\cube{\Xi} \mid \tp{\Phi} \mid \ty{\Gamma} \vdash \mathsf{refl{}} \Leftarrow y =_{\ty{A}} z$}}
  \end{prooftree}
  \vspace{0.2em}
  \begin{prooftree}
    \AxiomC{$\cube{\Xi} \mid \tp{\Phi} \mid \ty{\Gamma} \vdash \ty{A} \Leftarrow \ty{\typeU}$}
    \AxiomC{$\cube{\Xi} \mid \tp{\Phi} \mid \ty{\Gamma} \vdash x \Leftarrow \ty{A}$}
    \noLine\BinaryInfC{$\cube{\Xi} \mid \tp{\Phi} \mid \ty{\Gamma} \vdash \ty{C} \Leftarrow
      \ty{\prod}_{y' : \ty{A}} \ty{\prod}_{p' : x =_{\ty{A}} y'} \ty{\typeU}
      \quad\quad\cube{\Xi} \mid \tp{\Phi} \mid \ty{\Gamma} \vdash d \Leftarrow
      \ty{C} (x, \refl{\ty{A}}{x})$}
    \noLine\UnaryInfC{$\cube{\Xi} \mid \tp{\Phi} \mid \ty{\Gamma} \vdash y \Leftarrow \ty{A}
      \quad\quad \cube{\Xi} \mid \tp{\Phi} \mid \ty{\Gamma} \vdash p \Leftarrow x =_{\ty{A}} y$}
    \RightLabel{{\srcref{L2845-L2866}{\textsc{Id-elim}}}}
    \UnaryInfC{$\cube{\Xi} \mid \tp{\Phi} \mid \ty{\Gamma} \vdash \mathsf{J}(\ty{A}, x, \ty{C}, c, y, p)
      \Rightarrow \ty{C}(y, p)$}
  \end{prooftree}
  \vspace{0.2em}
  \begin{prooftree}
    \AxiomC{$\cube{\Xi} \mid \tp{\Phi} \mid \ty{\Gamma} \vdash p \equiv \refl{\ty{A}}{x}$}
    \RightLabel{\srcref{L1618-L1621}{\textsc{Id-comp}}}
    \UnaryInfC{$\cube{\Xi} \mid \tp{\Phi} \mid \ty{\Gamma} \vdash \mathsf{J}(\ty{A}, x, \ty{C}, c, y, p)
      \equiv c : \ty{C}(x, \refl{\ty{A}}{x})$}
  \end{prooftree}
  \caption{Identity types}
  \label{fig:rules-id}
\end{figure*}

\begin{figure*}
  \begin{prooftree}
    \AxiomC{}
    \RightLabel{\srcref{L2675}{\textsc{Unit-form}}}
    \UnaryInfC{{$\cube{\Xi} \mid \tp{\Phi} \mid \ty{\Gamma} \vdash \ty{\mathsf{Unit}} \Rightarrow \ty{\typeU}$}}
  \end{prooftree}
  \begin{prooftree}
    \AxiomC{}
    \RightLabel{\srcref{L2676}{\textsc{Unit-intro}}}
    \UnaryInfC{{$\cube{\Xi} \mid \tp{\Phi} \mid \ty{\Gamma} \vdash \ty{\mathsf{unit}} \Rightarrow \ty{\mathsf{Unit}}$}}
  \end{prooftree}
  \begin{prooftree}
    \AxiomC{$\cube{\Xi} \mid \tp{\Phi} \mid \ty{\Gamma} \vdash x \Leftarrow \ty{\mathsf{Unit}}$}
    \RightLabel{\srcref{L1566}{\textsc{Unit-comp}}}
    \UnaryInfC{$\cube{\Xi} \mid \tp{\Phi} \mid \ty{\Gamma} \vdash x \equiv \ty{\mathsf{unit}} : \ty{\mathsf{Unit}}$}
  \end{prooftree}
  \caption{Unit type}
  \label{fig:rules-unit}
\end{figure*}

\begin{figure*}
  \begin{prooftree}
    \AxiomC{$\cube{\Xi} \mid \tp{\Phi} \vdash \tp{\bot}$}
    \RightLabel{\srcref{L2702-L2704}{\textsc{Empty-infer}}}
    \UnaryInfC{{$\cube{\Xi} \mid \tp{\Phi} \mid \ty{\Gamma} \vdash \mathsf{rec}_{\bot} \Rightarrow \ty{\mathsf{rec}_{\bot}}$}}
  \end{prooftree}
  \vspace{0.2em}
  \begin{prooftree}
    \AxiomC{{\color{gray}{$\cube{\Xi} \mid \tp{\Phi} \vdash \tp{\bot}$}}}
    \AxiomC{$\cube{\Xi} \mid \tp{\Phi} \mid \ty{\Gamma} \vdash x \Rightarrow \ty{A}$}
    \RightLabel{\srcref{L2515-L2516}{\textsc{Empty-typecheck}}}
    \BinaryInfC{{$\cube{\Xi} \mid \tp{\Phi} \mid \ty{\Gamma} \vdash x \Leftarrow \ty{\mathsf{rec}_{\bot}}$}}
  \end{prooftree}
  \vspace{0.2em}
  \begin{prooftree}
    \AxiomC{$\cube{\Xi} \mid \tp{\Phi} \vdash \tp{\bot}$}
    \AxiomC{$\cube{\Xi} \mid \tp{\Phi} \mid \ty{\Gamma} \vdash x \Rightarrow \ty{A}$}
    \AxiomC{$\cube{\Xi} \mid \tp{\Phi} \mid \ty{\Gamma} \vdash \ty{A} \Leftarrow \ty{\typeU}$}
    \RightLabel{\srcref{L1579}{\textsc{Empty-comp}}}
    \TrinaryInfC{{$\cube{\Xi} \mid \tp{\Phi} \mid \ty{\Gamma} \vdash x \equiv \mathsf{rec}_{\bot} : \ty{\mathsf{rec}_{\bot}}$}}
  \end{prooftree}
  \begin{prooftree}
    \AxiomC{$\cube{\Xi} \mid \tp{\Phi} \mid \ty{\Gamma} \vdash t \Rightarrow \ty{\mathsf{rec}_{\bot}}\,[\, \ldots \,]$}
    \RightLabel{\srcref{L2658}{\textsc{First$_{\bot}$}}}
    \UnaryInfC{$\cube{\Xi} \mid \tp{\Phi} \mid \ty{\Gamma} \vdash \pi_1(t) \Rightarrow \ty{\mathsf{rec}_{\bot}}$}
    \DisplayProof\quad\quad
    \AxiomC{$\cube{\Xi} \mid \tp{\Phi} \mid \ty{\Gamma} \vdash t \Rightarrow \ty{\mathsf{rec}_{\bot}}\,[\, \ldots \,]$}
    \RightLabel{\srcref{L2668}{\textsc{Second$_{\bot}$}}}
    \UnaryInfC{$\cube{\Xi} \mid \tp{\Phi} \mid \ty{\Gamma} \vdash \pi_2(t) \Rightarrow \ty{\mathsf{rec}_{\bot}}$}
  \end{prooftree}
  \caption{Elimination for the false tope $\bot$ via $\mathsf{rec}_{\bot}$ (in a contradictory tope
  context $\cube{\Xi} \mid \tp{\Phi} \vdash \tp{\bot}$), and the projections at type
  $\mathsf{rec}_{\bot}$.}
  \label{fig:rules-recbot}
\end{figure*}

\begin{figure*}
  \begin{prooftree}
    \AxiomC{$\cube{\Xi} \mid \tp{\Phi} \mid \ty{\Gamma} \vdash \ty{A} \Rightarrow \ty{\typeU}$}
    \AxiomC{$\cube{\Xi} \mid \tp{\Phi} \mid \ty{\Gamma}, x : \ty{A} \vdash \ty{B}(x) \Rightarrow \ty{\typeU}$}
    \RightLabel{\srcref{L2764-L2768}{\textsc{$\Sigma$-form}}}
    \BinaryInfC{$\cube{\Xi} \mid \tp{\Phi} \mid \ty{\Gamma} \vdash \ty{\sum}_{x : \ty{A}} \ty{B}(x) \Rightarrow \ty{\mathsf{Type}}$}
  \end{prooftree}
  \vspace{0.2em}
  \begin{prooftree}
    \AxiomC{$\cube{\Xi} \mid \tp{\Phi} \mid \ty{\Gamma} \vdash t_1 \Leftarrow \ty{A}$}
    \AxiomC{$\cube{\Xi} \mid \tp{\Phi} \mid \ty{\Gamma} \vdash t_2 \Leftarrow \ty{B}(t_1)$}
    \RightLabel{\srcref{L2570-L2573}{\textsc{$\Sigma$-intro$_{\Leftarrow}$}}}
    \BinaryInfC{$\cube{\Xi} \mid \tp{\Phi} \mid \ty{\Gamma} \vdash (t_1, t_2) \Leftarrow \ty{\sum}_{z : \ty{A}} \ty{B}(z)$}
  \end{prooftree}
  \begin{prooftree}
    \AxiomC{$\cube{\Xi} \mid \tp{\Phi} \mid \ty{\Gamma} \vdash t_1 \Rightarrow \ty{A}$}
    \AxiomC{$\cube{\Xi} \mid \tp{\Phi} \mid \ty{\Gamma} \vdash t_2 \Rightarrow \ty{B}$}
    \RightLabel{\srcref{L2636-L2653}{\textsc{$\Sigma$-intro$_{\Rightarrow}$}}}
    \BinaryInfC{$\cube{\Xi} \mid \tp{\Phi} \mid \ty{\Gamma} \vdash (t_1, t_2) \Rightarrow \ty{\sum}_{z : \ty{A}} \ty{B}$}
  \end{prooftree}
  \vspace{0.2em}
  \begin{prooftree}
    \AxiomC{$\cube{\Xi} \mid \tp{\Phi} \mid \ty{\Gamma} \vdash t \Rightarrow \big(\ty{\sum}_{z : \ty{A}} \ty{B}(z)\big)\,[\, \ldots \,]$}
    \RightLabel{\srcref{L2659-L2660}{\textsc{$\Sigma$-elim-$\pi_1$}}}
    \UnaryInfC{$\cube{\Xi} \mid \tp{\Phi} \mid \ty{\Gamma} \vdash \pi_1(t) \Rightarrow \ty{A}$}
    \DisplayProof\quad\quad
    \AxiomC{$\cube{\Xi} \mid \tp{\Phi} \mid \ty{\Gamma} \vdash t \equiv (t_1, t_2) : \ty{\sum}_{x : \ty{A}} \ty{B}(x)$}
    \RightLabel{\srcref{L1609-L1612}{\textsc{$\Sigma$-comp-$\pi_1$}}}
    \UnaryInfC{$\cube{\Xi} \mid \tp{\Phi} \mid \ty{\Gamma} \vdash \pi_1(t) \equiv t_1 : \ty{A}$}
  \end{prooftree}
  \vspace{0.2em}
  \begin{prooftree}
    \AxiomC{$\cube{\Xi} \mid \tp{\Phi} \mid \ty{\Gamma} \vdash t \Rightarrow \big(\ty{\sum}_{z : \ty{A}} \ty{B}(z)\big)\,[\, \ldots \,]$}
    \RightLabel{\srcref{L2669-L2670}{\textsc{$\Sigma$-elim-$\pi_2$}}}
    \UnaryInfC{$\cube{\Xi} \mid \tp{\Phi} \mid \ty{\Gamma} \vdash \pi_2(t) \Rightarrow \ty{B}(\pi_1(t))$}
    \DisplayProof\quad
    \AxiomC{$\cube{\Xi} \mid \tp{\Phi} \mid \ty{\Gamma} \vdash t \equiv (t_1, t_2) : \ty{\sum}_{x : \ty{A}} \ty{B}(x)$}
    \RightLabel{\srcref{L1614-L1617}{\textsc{$\Sigma$-comp-$\pi_2$}}}
    \UnaryInfC{$\cube{\Xi} \mid \tp{\Phi} \mid \ty{\Gamma} \vdash \pi_2(t) \equiv t_2 : \ty{B}(t_1)$}
  \end{prooftree}
  \caption{Dependent sum types}
  \label{fig:rules-sigma}
\end{figure*}

\begin{figure*}
  \begin{prooftree}
    \AxiomC{$\cube{\Xi} \mid \tp{\Phi} \mid \ty{\Gamma} \vdash \ty{A} \Rightarrow \ty{\typeU}$}
    \AxiomC{$\cube{\Xi} \mid \tp{\Phi} \mid \ty{\Gamma}, x : \ty{A} \vdash \ty{B}(x) \Rightarrow \ty{\typeU}$}
    \RightLabel{\srcref{L2725-L2737}{\textsc{$\Pi$-form}}}
    \BinaryInfC{$\cube{\Xi} \mid \tp{\Phi} \mid \ty{\Gamma} \vdash \ty{\prod}_{x : \ty{A}} \ty{B}(x) \Rightarrow \ty{\typeU}$}
  \end{prooftree}
  \begin{prooftree}
    \AxiomC{$\cube{\Xi} \mid \tp{\Phi} \mid \ty{\Gamma} \vdash \ty{A} \Rightarrow \ty{\typeU}$}
    \AxiomC{$\cube{\Xi} \mid \tp{\Phi} \mid \ty{\Gamma}, x : \ty{A} \vdash t \Rightarrow \ty{B}(x)$}
    \RightLabel{\srcref{L2801-L2825}{\textsc{$\Pi$-intro$_{\Rightarrow}$}}}
    \BinaryInfC{$\cube{\Xi} \mid \tp{\Phi} \mid \ty{\Gamma} \vdash \lambda x : \ty{A}. t \Rightarrow \ty{\prod}_{z : \ty{A}} \ty{B}(z)$}
  \end{prooftree}
  \begin{prooftree}
    \AxiomC{$\cube{\Xi} \mid \tp{\Phi} \mid \ty{\Gamma}, x : \ty{A} \vdash t \Leftarrow \ty{B}(x)$}
    \RightLabel{\srcref{L2556-L2560}{\textsc{$\Pi$-intro$_{\Leftarrow_1}$}}}
    \UnaryInfC{$\cube{\Xi} \mid \tp{\Phi} \mid \ty{\Gamma} \vdash \lambda x. t \Leftarrow \ty{\prod}_{z : \ty{A}} \ty{B}(z)$}
  \end{prooftree}
  \begin{prooftree}
    \AxiomC{$\cube{\Xi} \mid \tp{\Phi} \mid \ty{\Gamma} \vdash \ty{T} \Rightarrow \ty{\typeU}$}
    \AxiomC{$\cube{\Xi} \mid \tp{\Phi} \mid \ty{\Gamma} \vdash \ty{A} \equiv \ty{T}$}
    \AxiomC{$\cube{\Xi} \mid \tp{\Phi} \mid \ty{\Gamma}, x : \ty{A} \vdash t \Leftarrow \ty{B}(x)$}
    \RightLabel{\srcref{L2528-L2560}{\textsc{$\Pi$-intro$_{\Leftarrow_2}$}}}
    \TrinaryInfC{$\cube{\Xi} \mid \tp{\Phi} \mid \ty{\Gamma} \vdash \lambda x : \ty{T}. t \Leftarrow \ty{\prod}_{z : \ty{A}} \ty{B}(z)$}
  \end{prooftree}
  \begin{prooftree}
    \AxiomC{$\cube{\Xi} \mid \tp{\Phi} \mid \ty{\Gamma} \vdash t_1 \Rightarrow \ty{\prod}_{x : \ty{A}} \big(\ty{B}(x)\,[\, \ldots \,]\big)$}
    \AxiomC{$\cube{\Xi} \mid \tp{\Phi} \mid \ty{\Gamma} \vdash t_2 \Leftarrow \ty{A}$}
    \RightLabel{\srcref{L2782-L2797}{\textsc{$\Pi$-elim}}}
    \BinaryInfC{$\cube{\Xi} \mid \tp{\Phi} \mid \ty{\Gamma} \vdash t_1(t_2) \Rightarrow \ty{B}(t_1)$}
  \end{prooftree}
  \begin{prooftree}
    \AxiomC{$\cube{\Xi} \mid \tp{\Phi} \mid \ty{\Gamma} \vdash t_1 \equiv \lambda x.b : \ty{\prod}_{z : \ty{A}} \ty{B}(z)$}
    \RightLabel{\srcref{L1588-L1591}{\textsc{$\Pi$-comp}}}
    \UnaryInfC{$\cube{\Xi} \mid \tp{\Phi} \mid \ty{\Gamma} \vdash t_1 (t_2) \equiv b[t_2/x] : \ty{B}(t_2)$}
  \end{prooftree}
  \caption{Dependent function types}
  \label{fig:rules-pi}
\end{figure*}

\begin{figure*}
  \begin{prooftree}
    \AxiomC{$\cube{\Xi} \vdash \cube{I} \Rightarrow \cube{\cubeU}$}
    \AxiomC{$\cube{\Xi}, \pt{x} : \cube{I} \mid \tp{\Phi} \mid \ty{\Gamma} \vdash \ty{B}(\pt{x}) \Rightarrow \ty{\typeU}$}
    \RightLabel{\srcref{L2738-L2742}{\textsc{$\Pi$-form-cube}}}
    \BinaryInfC{$\cube{\Xi} \mid \tp{\Phi} \mid \ty{\Gamma} \vdash \ty{\prod}_{\pt{x} : \cube{I}} \ty{B}(\pt{x}) \Rightarrow \ty{\typeU}$}
  \end{prooftree}
  \begin{prooftree}
    \AxiomC{$\cube{\Xi} \vdash \cube{I} \Rightarrow \cube{\cubeU}$}
    \AxiomC{$\cube{\Xi}, \pt{x} : \cube{I} \mid \tp{\Phi} \mid \ty{\Gamma} \vdash t \Rightarrow \ty{B}(\pt{x})$}
    \RightLabel{\srcref{L2801-L2825}{\textsc{$\Pi$-intro-cube$_{\Rightarrow}$}}}
    \BinaryInfC{$\cube{\Xi} \mid \tp{\Phi} \mid \ty{\Gamma} \vdash \lambda \pt{x} : \cube{I}. t \Rightarrow \ty{\prod}_{\pt{z} : \cube{I}} \ty{B}(\pt{z})$}
  \end{prooftree}
  \begin{prooftree}
    \AxiomC{$\cube{\Xi}, \pt{x} : \cube{I} \mid \tp{\Phi} \mid \ty{\Gamma} \vdash t \Leftarrow \ty{B}(\pt{x})$}
    \RightLabel{\srcref{L2556-L2560}{\textsc{$\Pi$-intro-cube$_{\Leftarrow_1}$}}}
    \UnaryInfC{$\cube{\Xi} \mid \tp{\Phi} \mid \ty{\Gamma} \vdash \lambda \pt{x}. t \Leftarrow \ty{\prod}_{\pt{z} : \cube{I}} \ty{B}(\pt{z})$}
  \end{prooftree}
  \begin{prooftree}
    \AxiomC{$\cube{\Xi}  \vdash \cube{J} \Rightarrow \cube{\cubeU}$}
    \AxiomC{$\cube{\Xi} \vdash \cube{I} \equiv \cube{J}$}
    \AxiomC{$\cube{\Xi}, \pt{x} : \cube{I} \mid \tp{\Phi} \mid \ty{\Gamma} \vdash t \Leftarrow \ty{B}(\pt{x})$}
    \RightLabel{\srcref{L2528-L2560}{\textsc{$\Pi$-intro-cube$_{\Leftarrow_2}$}}}
    \TrinaryInfC{$\cube{\Xi} \mid \tp{\Phi} \mid \ty{\Gamma} \vdash \lambda \pt{x} : \cube{J}. t \Leftarrow \ty{\prod}_{\pt{z} : \cube{I}} \ty{B}(\pt{z})$}
  \end{prooftree}
  \begin{prooftree}
    \AxiomC{$\cube{\Xi} \mid \tp{\Phi} \mid \ty{\Gamma} \vdash t_1 \Rightarrow \ty{\prod}_{\pt{x} : \cube{I}} \big(\ty{B}(\pt{x})\,[\, \ldots \,]\big)$}
    \AxiomC{$\cube{\Xi} \vdash \pt{t_2} \Leftarrow \cube{I}$}
    \RightLabel{\srcref{L2782-L2797}{\textsc{$\Pi$-elim-cube}}}
    \BinaryInfC{$\cube{\Xi} \mid \tp{\Phi} \mid \ty{\Gamma} \vdash t_1(\pt{t_2}) \Rightarrow \ty{B}(t_1)$}
  \end{prooftree}
  \begin{prooftree}
    \AxiomC{$\cube{\Xi} \mid \tp{\Phi} \mid \ty{\Gamma} \vdash t_1 \equiv \lambda \pt{x}.b : \ty{\prod}_{\pt{z} : \cube{I}} \ty{B}(\pt{z})$}
    \RightLabel{\srcref{L1588-L1591}{\textsc{$\Pi$-comp-cube}}}
    \UnaryInfC{$\cube{\Xi} \mid \tp{\Phi} \mid \ty{\Gamma} \vdash t_1 (\pt{t_2}) \equiv b[\pt{t_2}/\pt{x}] : \ty{B}(\pt{t_2})$}
  \end{prooftree}
  \caption{Dependent function types with $\cubeU$ domain}
  \label{fig:rules-pi-cube}
\end{figure*}
\begin{figure*}
  \begin{prooftree}
    \AxiomC{$\cube{\Xi} \vdash \cube{I} \Leftarrow \cube{\cubeU}$}
    \AxiomC{$\cube{\Xi} \vdash \pt{x} \Leftarrow \cube{I}$}
    \RightLabel{\srcref{L2868-L2871}{\textsc{Asc-cube}}}
    \BinaryInfC{$\cube{\Xi} \mid \tp{\Phi} \mid \ty{\Gamma} \vdash \pt{x} \textsf{ as } \cube{I} \Rightarrow \cube{I}$}
  \end{prooftree}
  \begin{prooftree}
    \AxiomC{$\cube{\Xi} \mid \tp{\Phi} \mid \ty{\Gamma} \vdash \ty{A} \Leftarrow \ty{\typeU}$}
    \AxiomC{$\cube{\Xi} \mid \tp{\Phi} \mid \ty{\Gamma} \vdash x \Leftarrow \ty{A}$}
    \RightLabel{\srcref{L2868-L2871}{\textsc{Asc-type}}}
    \BinaryInfC{$\cube{\Xi} \mid \tp{\Phi} \mid \ty{\Gamma} \vdash x \textsf{ as } \ty{A} \Rightarrow \ty{A}$}
  \end{prooftree}
  \begin{prooftree}
    \AxiomC{}
    \RightLabel{\srcref{L1557-L1558}{\textsc{Asc-comp}}}
    \UnaryInfC{$\cube{\Xi} \mid \tp{\Phi} \mid \ty{\Gamma} \vdash x \textsf{ as } \ty{A} \equiv x$}
  \end{prooftree}
  \caption{Type ascription}
  \label{fig:rules-ascription}
\end{figure*}

\subsection{Shape-Layer, Extension-Type, and Unification Rules}
\label{app:rules-shape}

The figures of this subsection give the bidirectional rules for the shape and tope layers,
referenced from \cref{sec:type-checking}, whose body shows only the rules of specific interest:
the tope layer (\cref{fig:tope-rules}), extension types (\cref{fig:rules-ext}), type synthesis
for tope elimination (\cref{fig:rules-recor}), function types with tope and with shape domains
(\cref{fig:rules-pi-tope,fig:pi-shape-rules}), and the unification rule turning a tope equality
on cube points into a definitional equality (\cref{fig:rules-tope-eq}).

\begin{figure*}
  \begin{flushleft}
  Typing rules for topes and tope families:
  \end{flushleft}
  \begin{prooftree}
    \AxiomC{}
    \RightLabel{\textsc{T-Tope}}
    \UnaryInfC{$\cube{\Xi} \mid \tp{\Phi} \vdash \tp{\mathsf{TOPE}} \Rightarrow \ty{\mathsf{Universe}}$}
  \end{prooftree}
  \begin{prooftree}
    \AxiomC{}
    \RightLabel{\textsc{T-TopeTop}}
    \UnaryInfC{$\cube{\Xi} \mid \tp{\Phi} \vdash \tp{\top} \Rightarrow \tp{\mathsf{TOPE}}$}
    \DisplayProof\quad\quad
    \AxiomC{}
    \RightLabel{\textsc{T-TopeBot}}
    \UnaryInfC{$\cube{\Xi} \mid \tp{\Phi} \vdash \tp{\bot} \Rightarrow \tp{\mathsf{TOPE}}$}
  \end{prooftree}
  \begin{prooftree}
    \AxiomC{$\cube{\Xi} \mid \tp{\Phi} \vdash \tp{\psi} \Leftarrow \tp{\mathsf{TOPE}}$}
    \AxiomC{$\cube{\Xi} \mid \tp{\Phi} \vdash \tp{\phi} \Leftarrow \tp{\mathsf{TOPE}}$}
    \RightLabel{\textsc{T-TopeAnd}}
    \BinaryInfC{$\cube{\Xi} \mid \tp{\Phi} \vdash \tp{\psi \land \phi} \Rightarrow \tp{\mathsf{TOPE}}$}
  \end{prooftree}
  \begin{prooftree}
    \AxiomC{$\cube{\Xi} \mid \tp{\Phi} \vdash \tp{\psi} \Leftarrow \tp{\mathsf{TOPE}}$}
    \AxiomC{$\cube{\Xi} \mid \tp{\Phi} \vdash \tp{\phi} \Leftarrow \tp{\mathsf{TOPE}}$}
    \RightLabel{\textsc{T-TopeOr}}
    \BinaryInfC{$\cube{\Xi} \mid \tp{\Phi} \vdash \tp{\psi \lor \phi} \Rightarrow \tp{\mathsf{TOPE}}$}
  \end{prooftree}
  \begin{prooftree}
    \AxiomC{$\cube{\Xi} \vdash \pt{t} : \cube{I}$}
    \AxiomC{$\cube{\Xi} \vdash \pt{s} : \cube{J}$}
    \AxiomC{$\cube{I} \equiv \cube{J}$}
    \RightLabel{\textsc{T-TopeEq}}
    \TrinaryInfC{$\cube{\Xi} \mid \tp{\Phi} \vdash \pt{t} \tp{\equiv} \pt{s} \Rightarrow \tp{\mathsf{TOPE}}$}
    \DisplayProof\quad
    \AxiomC{$\cube{\Xi} \vdash \pt{t} : \cube{\mathbbm{2}}$}
    \AxiomC{$\cube{\Xi} \vdash \pt{s} : \cube{\mathbbm{2}}$}
    \RightLabel{\textsc{T-TopeLEq}}
    \BinaryInfC{$\cube{\Xi} \mid \tp{\Phi} \vdash \pt{t} \tp{\leq} \pt{s} \Rightarrow \tp{\mathsf{TOPE}}$}
  \end{prooftree}
  \begin{prooftree}
    \AxiomC{}
    \RightLabel{\textsc{T-TopeFamVar}}
    \UnaryInfC{$\cube{\Xi} \mid \tp{\Phi}, \tp{\phi} : \ty{T} \vdash \tp{\phi} \Rightarrow \ty{T}$}
  \end{prooftree}
  \begin{prooftree}
    \AxiomC{$\cube{\Xi}, \pt{t} : \cube{I} \mid \tp{\Phi} \vdash \tp{\phi} \Leftarrow \ty{T}(\pt{t})$}
    \RightLabel{\textsc{T-TopeAbs}}
    \UnaryInfC{$\cube{\Xi} \mid \tp{\Phi} \vdash \lambda \pt{t}. \tp{\phi} \Leftarrow \ty{\prod}_{\pt{t} : \cube{I}} \ty{T}(\pt{t})$}
    \DisplayProof\quad
    \AxiomC{$\cube{\Xi} \mid \tp{\Phi} \vdash \tp{\phi} \Rightarrow \ty{\prod}_{\pt{t} : \cube{I}} \ty{T}(\pt{t})$}
    \AxiomC{$\cube{\Xi} \mid \pt{t} \Leftarrow \cube{I}$}
    \RightLabel{\textsc{T-TopeApp}}
    \BinaryInfC{$\cube{\Xi} \mid \tp{\Phi} \vdash \tp{\phi}(\pt{t}) \Rightarrow \ty{T}(\pt{t})$}
  \end{prooftree}

  \rule{\textwidth}{0.5pt}
  \begin{flushleft}
  Covariant subtyping of tope-family types:
  \end{flushleft}
  \begin{prooftree}
    \AxiomC{$\cube{\Xi}, \pt{t} : \cube{I} \mid \tp{\phi}(\pt{t}) \vdash \tp{\psi}(\pt{t})$}
    \RightLabel{\textsc{S-TopeFam}}
    \UnaryInfC{$\cube{\Xi} \mid \tp{\Phi} \vdash \{\pt{t} : \cube{I} \mid \tp{\phi}\} \to \tp{\mathsf{TOPE}} \;\subtype\; \{\pt{t} : \cube{I} \mid \tp{\psi}\} \to \tp{\mathsf{TOPE}}$}
  \end{prooftree}
  \caption{Bidirectional typing rules for topes and tope families over the single tope universe
  $\tp{\mathsf{TOPE}}$, as implemented, and the covariant subtyping of tope-family types
  (\textsc{S-TopeFam}). The connectives and atomic topes synthesise the one universe, so the rules
  do not track the domain of a family; the domain acquires its meaning through the inserted
  conjunct at application sites (\cref{sec:domain-conjunct}). In \textsc{S-TopeFam} the entailment
  premise is checked absolutely, without the ambient tope context $\tp{\Phi}$.}
  \label{fig:tope-rules}
\end{figure*}

\begin{figure*}
  \begin{prooftree}
    \AxiomC{$\cube{\Xi} \mid \tp{\Phi} \mid \ty{\Gamma} \vdash \ty{T} \Leftarrow \ty{\mathsf{Type}}$}
    \noLine\UnaryInfC{$
      \cube{\Xi} \vdash \tp{\phi_i} \Leftarrow \tp{\mathsf{Tope}}
      \quad\quad
      \cube{\Xi} \mid \tp{\Phi}, \tp{\phi_i} \mid \ty{\Gamma} \vdash t_i \Leftarrow \ty{T}
    $}
    \noLine\UnaryInfC{$\cube{\Xi} \mid \tp{\Phi}, \tp{\phi_i}, \tp{\phi_j} \mid \ty{\Gamma} \vdash t_i \equiv t_j$}
    \RightLabel{\srcref{L2873-L2880}{\textsc{ExtType-form}}}
    \UnaryInfC{$\cube{\Xi} \mid \tp{\Phi} \mid \ty{\Gamma} \vdash \ty{T}\,[\, \tp{\phi_1} \mapsto t_1, \ldots, \tp{\phi_n} \mapsto t_n \,] \Rightarrow \ty{\mathsf{Type}}$}
  \end{prooftree}
  \begin{prooftree}
    \AxiomC{$\cube{\Xi} \mid \tp{\Phi} \mid \ty{\Gamma} \vdash t \Leftarrow \ty{T}$}
    \AxiomC{$\cube{\Xi} \mid \tp{\Phi}, \tp{\phi_i} \mid \ty{\Gamma} \vdash t \equiv t_i$}
    \RightLabel{\srcref{L2518-L2526}{\textsc{ExtType-intro}}}
    \BinaryInfC{$\cube{\Xi} \mid \tp{\Phi} \mid \ty{\Gamma} \vdash t \Leftarrow \ty{T}\,[\, \tp{\phi_1} \mapsto t_1, \ldots, \tp{\phi_n} \mapsto t_n \,]$}
  \end{prooftree}
  \begin{prooftree}
    \AxiomC{$\cube{\Xi} \mid \tp{\Phi} \mid \ty{\Gamma} \vdash t : \ty{T}\,[\, \tp{\phi_1} \mapsto t_1, \ldots, \tp{\phi_n} \mapsto t_n \,]$}
    \AxiomC{$\cube{\Xi} \mid \tp{\Phi} \vdash \tp{\phi_i}$}
    \RightLabel{\srcref{L1504-L1513}{\textsc{ExtType-comp}}}
    \BinaryInfC{$\cube{\Xi} \mid \tp{\Phi} \mid \ty{\Gamma} \vdash t \equiv t_i : \ty{T}$}
  \end{prooftree}
  \caption{Extension types in \Rzk{}.}
  \label{fig:rules-ext}
\end{figure*}

\begin{figure*}
  \begin{prooftree}
    \AxiomC{$\cube{\Xi} \mid \tp{\Phi} \mid \ty{\Gamma} \vdash \tp{\phi_i} \Leftarrow \tp{\mathsf{Tope}}$ \quad for all $i = 1, \ldots, n$}
    \noLine\UnaryInfC{$\cube{\Xi} \mid \tp{\phi_i} \mid \ty{\Gamma} \vdash \tp{\bigwedge \Phi}$ \quad for all $i = 1, \ldots, n$}
    \noLine\UnaryInfC{$\cube{\Xi} \mid \tp{\Phi}, \tp{\phi_i} \mid \ty{\Gamma} \vdash t_i \Rightarrow \ty{A_i}\,[\, \ldots \,] : \ty{\mathsf{Type}}$ \quad for all $i = 1, \ldots, n$}
    \noLine\UnaryInfC{$\cube{\Xi} \mid \tp{\Phi} \mid \ty{\Gamma} \vdash \tp{\bigvee_{i = 1}^n \phi_i}$}
    \noLine\UnaryInfC{$\cube{\Xi} \mid \tp{\Phi}, \tp{\phi_i}, \tp{\phi_j} \mid \ty{\Gamma} \vdash \ty{A_i} \equiv \ty{A_j}$ \quad for all $i, j = 1, \ldots, n$}
    \noLine\UnaryInfC{$\cube{\Xi} \mid \tp{\Phi}, \tp{\phi_i}, \tp{\phi_j} \mid \ty{\Gamma} \vdash t_i \equiv t_j$ \quad for all $i, j = 1, \ldots, n$}
    \RightLabel{\srcref{L2711-L2723}{\textsc{Rec$_{\vee}$-infer}}}
    \UnaryInfC{$\cube{\Xi} \mid \tp{\Phi} \mid \ty{\Gamma} \vdash \mathsf{rec}_{\lor}(\tp{\phi_1} \mapsto t_1, \ldots, \tp{\phi_n} \mapsto t_n) \Rightarrow \mathsf{rec}_{\lor}(\tp{\phi_1} \mapsto \ty{A_1}, \ldots, \tp{\phi_n} \mapsto \ty{A_n})$}
  \end{prooftree}
  \caption{Type synthesis for tope disjunction elimination.}
  \label{fig:rules-recor}
\end{figure*}

\begin{figure*}
  \begin{prooftree}
    \AxiomC{$\cube{\Xi} \vdash \cube{I} \Leftarrow \cube{\cubeU}$}
    \AxiomC{$\cube{\Xi}, \pt{x} : \cube{I} \mid \tp{\Phi} \vdash \tp{\phi}(\pt{x}) \Leftarrow \tp{\topeU}$}
    \AxiomC{$\cube{\Xi}, \pt{x} : \cube{I} \mid \tp{\Phi}, \tp{\phi}(\pt{x}) \mid \ty{\Gamma} \vdash \ty{B}(\pt{x}) \Rightarrow \ty{\typeU}$}
    \RightLabel{\srcref{L2755-L2762}{\textsc{$\Pi$-form-tope}}}
    \TrinaryInfC{$\cube{\Xi} \mid \tp{\Phi} \mid \ty{\Gamma} \vdash \ty{\prod}_{\pt{x} : \{\cube{I} \mid \tp{\phi}\}} \ty{B}(\pt{x}) \Rightarrow \ty{\typeU}$}
  \end{prooftree}
  \begin{prooftree}
    \AxiomC{$\cube{\Xi} \vdash \cube{I} \Leftarrow \cube{\cubeU}$}
    \AxiomC{$\cube{\Xi}, \pt{x} : \cube{I} \mid \tp{\Phi} \vdash \tp{\phi}(\pt{x}) \Leftarrow \tp{\topeU}$}
    \AxiomC{$\cube{\Xi}, \pt{x} : \cube{I} \mid \tp{\Phi}, \tp{\phi}(\pt{x}) \mid \ty{\Gamma} \vdash t \Rightarrow \ty{B}(\pt{x})$}
    \RightLabel{\srcref{L2826-L2833}{\textsc{$\Pi$-intro-tope$_{\Rightarrow}$}}}
    \TrinaryInfC{$\cube{\Xi} \mid \tp{\Phi} \mid \ty{\Gamma} \vdash \lambda \pt{x} : \{\cube{I} \mid \tp{\phi}\}. t \Rightarrow \ty{\prod}_{\pt{z} : \{\cube{I} \mid \tp{\phi}\}} \ty{B}(\pt{z})$}
  \end{prooftree}
  \begin{prooftree}
    \AxiomC{$\cube{\Xi}, \pt{x} : \cube{I} \mid \tp{\Phi}, \tp{\phi}(\pt{x}) \mid \ty{\Gamma} \vdash t \Leftarrow \ty{B}(\pt{x})$}
    \RightLabel{\srcref{L2548-L2554}{\textsc{$\Pi$-intro-tope$_{\Leftarrow_1}$}}}
    \UnaryInfC{$\cube{\Xi} \mid \tp{\Phi} \mid \ty{\Gamma} \vdash \lambda \pt{x}. t \Leftarrow \ty{\prod}_{\pt{z} : \{\cube{I} \mid \tp{\phi}\}} \ty{B}(\pt{z})$}
  \end{prooftree}
  \begin{prooftree}
    \AxiomC{$\cube{\Xi} \vdash \cube{J} \Leftarrow \cube{\cubeU}$}
    \AxiomC{$\cube{\Xi} \vdash \cube{I} \equiv \cube{J}$}
    \AxiomC{$\cube{\Xi}, \pt{x} : \cube{J} \mid \tp{\Phi} \vdash \tp{\zeta}(\pt{x}) \Leftarrow \tp{\topeU}$}
    \noLine
    \TrinaryInfC{$\cube{\Xi}, \pt{x} : \cube{I} \mid \tp{\Phi}, \tp{\phi}(\pt{x}) \vdash \tp{\zeta}(\pt{x})
    \quad\quad \cube{\Xi}, \pt{x} : \cube{I} \mid \tp{\Phi}, \tp{\zeta}(\pt{x}) \mid \ty{\Gamma} \vdash t \Leftarrow \ty{B}(\pt{x})$}
    \RightLabel{\srcref{L2548-L2554}{\textsc{$\Pi$-intro-tope$_{\Leftarrow_2}$}}}
    \UnaryInfC{$\cube{\Xi} \mid \tp{\Phi} \mid \ty{\Gamma} \vdash \lambda \pt{x} : \{\cube{J} \mid \tp{\zeta}\}. t \Leftarrow \ty{\prod}_{\pt{z} : \{\cube{I} \mid \tp{\phi}\}} \ty{B}(\pt{z})$}
  \end{prooftree}
  \begin{prooftree}
    \AxiomC{$\cube{\Xi} \mid \tp{\Phi} \mid \ty{\Gamma} \vdash t_1 \Rightarrow \ty{\prod}_{\pt{x} : \{\cube{I} \mid \tp{\phi}\}} \big(\ty{B}(\pt{x})\,[\, \ldots \,]\big)$}
    \AxiomC{$\cube{\Xi} \vdash \pt{t_2} \Leftarrow \cube{I}$}
    \AxiomC{$\cube{\Xi} \mid \tp{\Phi} \vdash \tp{\phi}(\pt{t_2})$}
    \RightLabel{\srcref{L2782-L2797}{\textsc{$\Pi$-elim-tope}}}
    \TrinaryInfC{$\cube{\Xi} \mid \tp{\Phi} \mid \ty{\Gamma} \vdash t_1(\pt{t_2}) \Rightarrow \ty{B}(\pt{t_2})$}
  \end{prooftree}
  \begin{prooftree}
    \AxiomC{$\cube{\Xi} \mid \tp{\Phi} \mid \ty{\Gamma} \vdash t_1 \equiv \lambda \pt{x} : \{\cube{I} \mid \tp{\phi}\}. t : \ty{\prod}_{\pt{z} : \{\cube{I} \mid \tp{\phi}\}} \ty{B}(\pt{z})$}
    \RightLabel{\srcref{L1588-L1591}{\textsc{$\Pi$-comp-tope}}}
    \UnaryInfC{$\cube{\Xi} \mid \tp{\Phi} \mid \ty{\Gamma} \vdash t_1 (\pt{t_2}) \equiv b[\pt{t_2}/\pt{x}] : \ty{B}(\pt{t_2})$}
  \end{prooftree}
  \caption{Dependent function types with topes}
  \label{fig:rules-pi-tope}
\end{figure*}

\begin{figure*}
  \begin{prooftree}
    \AxiomC{$\cube{\Xi} \mid \tp{\Phi} \mid \ty{\Gamma} \vdash \tp{\phi} \Rightarrow \{\cube{I} \mid \tp{\chi}\} \to \tp{\mathsf{Tope}}$}
    \AxiomC{$\cube{\Xi}, \pt{x} : \cube{I} \mid \tp{\Phi}, \tp{\chi}(\pt{x}), \tp{\phi}(\pt{x}) \mid \ty{\Gamma} \vdash \ty{B}(\pt{x}) \Leftarrow \ty{\mathsf{Type}}$}
    \RightLabel{\srcref{L2743-L2752}{\textsc{$\Pi$-form-shape}}}
    \BinaryInfC{$\cube{\Xi} \mid \tp{\Phi} \mid \ty{\Gamma} \vdash \ty{\prod}_{\pt{x} : \tp{\phi}} \ty{B}(\pt{x}) \Rightarrow \ty{\mathsf{Type}}$}
  \end{prooftree}
  \begin{prooftree}
    \AxiomC{$\cube{\Xi} \mid \tp{\Phi} \mid \ty{\Gamma} \vdash \tp{\phi} \Rightarrow \{\cube{I} \mid \ldots\} \to \tp{\mathsf{Tope}}$}
    \AxiomC{$\cube{\Xi}, \pt{x} : \cube{I} \mid \tp{\Phi}, \tp{\phi}(\pt{x}) \mid \ty{\Gamma} \vdash t \Rightarrow \ty{B}(\pt{x})$}
    \RightLabel{\srcref{L2801-L2825}{\textsc{$\Pi$-intro-shape$_{\Rightarrow}$}}}
    \BinaryInfC{$\cube{\Xi} \mid \tp{\Phi} \mid \ty{\Gamma} \vdash \lambda \pt{x} : \tp{\phi}. t \Leftarrow \ty{\prod}_{\pt{z} : \{\cube{I} \mid \tp{\phi}\}} \ty{B}(\pt{z})$}
  \end{prooftree}
  \begin{prooftree}
    \AxiomC{$\cube{\Xi} \mid \tp{\Phi} \mid \ty{\Gamma} \vdash \tp{\zeta} \Rightarrow \{\cube{J} \mid \ldots\} \to \tp{\mathsf{Tope}}$}
    \AxiomC{$\cube{\Xi} \mid \tp{\Phi} \mid \ty{\Gamma} \vdash \cube{I} \equiv \cube{J}$}
    \noLine
    \BinaryInfC{$\cube{\Xi}, \pt{x} : \cube{I} \mid \tp{\Phi}, \tp{\phi}(\pt{x}) \vdash \tp{\zeta}(\pt{x}) \quad\quad \cube{\Xi}, \pt{x} : \cube{I} \mid \tp{\Phi}, \tp{\zeta}(\pt{x}) \vdash \tp{\phi}(\pt{x})$}
    \noLine
    \UnaryInfC{$\cube{\Xi}, \pt{x} : \cube{I} \mid \tp{\Phi}, \tp{\phi}(\pt{x}) \mid \ty{\Gamma} \vdash t \Leftarrow \ty{B}(\pt{x})$}
    \RightLabel{\srcref{L2801-L2825}{\textsc{$\Pi$-intro-shape$_{\Leftarrow}$}}}
    \UnaryInfC{$\cube{\Xi} \mid \tp{\Phi} \mid \ty{\Gamma} \vdash \lambda \pt{x} : \tp{\zeta}. t \Leftarrow \ty{\prod}_{\pt{z} : \{\cube{I} \mid \tp{\phi}\}} \ty{B}(\pt{z})$}
  \end{prooftree}
  \begin{prooftree}
    \AxiomC{$\cube{\Xi} \mid \tp{\Phi} \mid \ty{\Gamma} \vdash t_1 \Rightarrow \{\cube{I} \mid \tp{\phi}\} \to \tp{\mathsf{Tope}}$}
    \AxiomC{$\cube{\Xi} \mid \tp{\Phi} \mid \ty{\Gamma} \vdash \pt{t_2} \Leftarrow \{\cube{I} \mid \tp{\phi}\}$}
    \RightLabel{\srcref{L2782-L2797}{\textsc{$\Pi$-elim-shape}}}
    \BinaryInfC{$\cube{\Xi} \mid \tp{\Phi} \mid \ty{\Gamma} \vdash {\color{red} \boxed{\phi(t_2) \land}} t_1(\pt{t_2}) \Rightarrow \tp{\mathsf{Tope}}$}
  \end{prooftree}
  \caption{Dependent function types with shape domains. The boxed conjunct in
  \textsc{$\Pi$-elim-shape} is the domain conjunct inserted when a tope family is applied; see
  \cref{sec:domain-conjunct}.}
  \label{fig:pi-shape-rules}
\end{figure*}

\begin{figure*}
  \begin{prooftree}
    \AxiomC{$\cube{\Xi} \vdash \cube{I} \Leftarrow \cube{\cubeU}$}
    \AxiomC{$\cube{\Xi} \vdash \pt{x} \Leftarrow \cube{I}$}
    \AxiomC{$\cube{\Xi} \mid \tp{\Phi} \vdash \pt{x} \equiv \pt{y}$}
    \RightLabel{\srcref{L1949}{\textsc{TopeEqToEquiv}}}
    \TrinaryInfC{$\cube{\Xi} \mid \tp{\Phi} \mid \ty{\Gamma} \vdash \pt{x} \equiv \pt{y}$}
  \end{prooftree}
  \caption{Unification but not subtyping}
  \label{fig:rules-tope-eq}
\end{figure*}

\subsection{Soundness of the Algorithm Relative to \dRzk{}}
\label{app:algorithmic-soundness-proof}

The rules above support the soundness statement of \cref{sec:algorithmic-soundness}; we give the
deferred proof.

\begin{proof}[Proof sketch of \cref{prop:algorithmic-soundness}]
  By induction on the algorithmic derivation. Three families of steps occur.
  \emph{(1)~Structural rules.} Erasing the mode arrows from each bidirectional rule of
  this appendix yields an inference derivable in \dRzk{}: most rules are mode-annotated
  declarative rules of \cref{fig:rzk-split-ext-types,fig:tope-rules} verbatim, and the composite
  rules, such as \textsc{$\Pi$-elim-shape} with its inserted conjunct, erase to short declarative
  derivations. \emph{(2)~Conversion and subsumption.} The change-of-direction rule compares the
  synthesised type with the expected one by the variance-flagged routine of
  \cref{sec:checking-subtyping}; an accepting run of the routine maps to a derivation built from
  the subtyping rules (\cref{fig:subtyping}, \cref{fig:subtyping-std}, \textsc{S-TopeFam}) and the
  definitional-equality rules, with every tope side condition checked by the solver, which is
  sound (\cref{lem:solver-sound}). In particular, the branch-wise comparison of stuck tope
  disjunctions and the context-split fallback (\cref{sec:def-eq-topes}), run in subtyping mode,
  map to the case-split rule \textsc{S-Split} (\cref{fig:subtyping}). \emph{(3)~Reduction.} Each normalisation step the checker takes
  --- weak-head reduction, on-demand $\eta$-expansion, the restriction computation gated by the
  solver, and the $\mathsf{rec}_{\tp{\vee}}$ and $\mathsf{rec}_{\tp{\bot}}$ steps of
  \cref{sec:def-eq-topes} --- instantiates a declarative equality rule, again with tope premises
  sound by \cref{lem:solver-sound}, and the type conversions it induces enter through
  \textsc{S-Conv} (\cref{fig:subtyping-std}). Soundness needs neither termination nor
  completeness: a run that fails or diverges accepts nothing.
\end{proof}

\fi

\end{document}